\newcommand{\MFa}[1]{\textcolor{blue}{#1}}
\definecolor{MFabg}{RGB}{76, 191, 229}
\def \nI {n_{I}}
\def \nM {n_{M}}
\renewcommand{\emptyset}{\varnothing} 
\newcommand{\Rset}[1]{\mathbb{R}^{#1}} 
\def \c {\mathbf{c}}
\def \p {\mathbf{p}}
\def \x {\mathbf{x}}
\def \F {\mathbf{F}} 
\def \H {\mathbf{H}}
\def \L {\mathbf{L}}
\def\Amc{\mathcal{A}}
\def\Emc{\mathcal{E}}
\def\Gmc{\mathcal{G}}
\def\Imc{\mathcal{I}}
\def\Mmc{\mathcal{M}}
\def\Nmc{\mathcal{N}}
\def\Rmc{\mathcal{R}}
\def\Vmc{\mathcal{V}}
\def\Wmc{\mathcal{W}}
\newcommand{\Bdelta}{\boldsymbol{\delta}}
\newcommand{\Bomega}{\boldsymbol{\omega}}
\newcommand{\BDelta}{\boldsymbol{\Delta}}
\newtheorem{defn}{Definition}[section]
\newtheorem{thm}{Theorem}[section]
\newtheorem{lem}{Lemma}[section]
\newtheorem{prop}{Proposition}[section]
\newtheorem{rmk}{Remark}[section]
\newtheorem{fact}{Fact}[section]
\newtheorem{prb}{Problem}[section]
\newtheorem{cor}{Corollary}[section]
\newcommand{\normp}[2] {\left\|{#1}\right\|_{#2}}
\newcommand{\angb}[1] {\left<{#1}\right>}
\def \ones			{{\mathds{1}}} 
\def \zeros			{{\mathbf{0}}} 
\newcommand{\onesvec}[1] 	{\ones_{#1}} 
\newcommand{\zerovec}[1] 	{\mathbf{0}_{#1}} 
\DeclareMathOperator{\diag}{diag} 
\newcommand{\eye}[1]{I_{#1}} 
\renewcommand{\vec}[3]{\mathrm{vec}_{#1}^{#2}(#3)} 
\renewcommand{\diag}[3]{\mathrm{diag}_{#1}^{#2}(#3)} 
\def \kronecker {\otimes} 
\def \graph		{\mathcal{G}} 
\newcommand		{\neighbors}[1]{\mathcal{N}_{#1}}	
\begin{document}
%
\title{Secure Consensus via Objective Coding:\\  Robustness Analysis to Channel Tampering}
%
%
%

\author{Marco Fabris and Daniel~Zelazo,~\IEEEmembership{Senior Member,~IEEE}
\thanks{Manuscript received January 24, 2022; revised April 7, 2022; accepted May 19, 2022. This work was supported in part by the United States-Israel Binational Science Foundation (BSF) under Grant 2017658, and in part by the United States National Science Foundation (NSF) under Grant 1809315. This article was recommended by Associate Editor J. Catalão. (Corresponding author: Marco Fabris.)}
\thanks{The authors are with the Faculty of Aerospace Engineering,
	Technion—Israel Institute of Technology, Haifa 3200003, Israel (e-mail:
	marco.fabris@campus.technion.ac.il; dzelazo@technion.ac.il).
	Color versions of one or more figures in this article are available at
	https://doi.org/10.1109/TSMC.2022.3177756.}
\thanks{Digital Object Identifier 10.1109/TSMC.2022.3177756}
}

%
%

\markboth{Journal of \LaTeX\ Class Files,~Vol.~14, No.~8, August~2015}%
{Shell \MakeLowercase{\textit{et al.}}: Bare Demo of IEEEtran.cls for IEEE Journals}
%



\maketitle

\begin{abstract}
This work mainly addresses continuous-time multiagent consensus networks where an adverse attacker affects the convergence performances of said protocol. In particular, we develop a novel secure-by-design approach in which the presence of a network manager monitors the system and broadcasts encrypted tasks (i.e., hidden edge weight assignments) to the agents involved. Each agent is then expected to decode the received codeword containing data on the task through appropriate decoding functions by leveraging advanced security principles, such as objective coding and information localization. Within this framework, a stability analysis is conducted for showing the robustness to channel tampering in the scenario where part of the codeword corresponding to a single link in the system is corrupted. A tradeoff between objective coding capability and network robustness is also pointed out. To support these novelties, an application example on decentralized estimation is provided. Moreover, an investigation of the robust agreement is as well extended in the discrete-time domain. 
Further numerical simulations are given to validate the theoretical results in both the time domains.
\end{abstract}

\begin{IEEEkeywords}
Consensus Networks, Secure Systems.
\end{IEEEkeywords}

\IEEEpeerreviewmaketitle

\section{INTRODUCTION}
\label{sec:intro}

The consensus protocol has become a canonical model for the study of \textit{multiagent systems} (MASs),
groups of autonomous entities (\textit{agents}) that interact with each other to solve problems that are beyond the capabilities of a single agent \cite{Mahmoud2020}.
Such architectures are characterized by a cooperative nature
that is robust and scalable. Robustness refers to the ability of a system to tolerate the failure of one or more agents, while scalability originates from system modularity. 
Because of these advantages, networked architectures based on MASs have become popular in several cutting-edge research areas such as 
the
Internet-of-Things \cite{SavaglioGanzhaPaprzycki2020}
and Cyber-Physical Systems \cite{KouicemRaievskyOccello2020}.
As stated in \cite{OlfatiSaberFaxMurray2007},
within such networks of agents, ``consensus" means to reach an agreement w.r.t. a certain quantity of interest that depends on the state of all agents. A ``consensus algorithm" (or agreement protocol) is an interaction rule that specifies the information exchange between an agent and all of its neighbors in the network such that agreement is attained. 

Recently, the increasing demand for safety and security measures in the most advanced technologies have skyrocketed in many fields, including that of MASs  \cite{WolfSerpanos2018,ZhihanDongliangRanarn2021}.
In fact, the concerns about protection of networked systems from cyber-physical attacks are not new, and have attracted a fair amount of attention in the engineering community. 
As a consequence, several approaches to improve the security of such systems or understand their vulnerabilities have been developed \cite{SchulzeDarupAlexandruQuevedo2021}.
A first step in this direction is to analyze the robustness properties of consensus networks. Few examples of different connotations addressing this desired property are given by one or a combination of the following requirements: (a) the network reaches an $\varepsilon$-consensus, i.e., for all $(i,j) \in \Emc$ it holds $\lim_{t \rightarrow \infty} \left\|x_{i}-x_{j}\right\|_{2} \leq \varepsilon$, for some $
\epsilon>0$ \cite{DibajiIshii2015}; (b) a subset of the network vertices converges to an agreement \cite{MustafaModaresMoghadam2020}; (c) a cost function of the state that serves as a performance index for the level of agreement is expected to decrease or stay below a certain given threshold \cite{YanAntsaklisGupta2017}; (d) the network fulfills consensus in spite of the presence of ``small"-magnitude perturbations altering the agent dynamics \cite{ZelazoBurger2017}.

\textit{Related works}: In the literature, many techniques for secure consensus or synchronization within a network are available. Most of them rely on the concept of resilience, ensuring \emph{robustness} to attacks or faulty behaviors. 
In \cite{WeerakkodyLiuSon2017}, classic tools from system theory are applied on networks modeled as discrete-time MASs in order to design observers and algebraic tests with the goal of identifying the presence of misbehaving agents. 
These identification-based techniques require a deep understanding of the processes to be controlled and thus their design is quite complex. Also, to the best of our knowledge, continuous-time MASs have not been studied by means of those tools yet. 
In \cite{DibajiIshii2015,WuXuNing2020} 
part of the information being exchanged by the neighbors to a certain agent is chosen and then fully neglected via thresholding mechanisms. These selections are executed according to a given order that imposes some priority on the information itself to achieve attack mitigation. Such an approach can however lead to strong biases, since it is possible that the designated order is not adequate. Moreover, global information on the network topology is required in the design leading to a centralized implementation  (see also \cite{LeBlancHeathZhang2013}). 
In \cite{TrentelmanTakabaMonshizadeh2013}, robust synchronization is attained through protocols based on regulators that make use of a state observer. These methods require the computation of maximal real symmetric solutions of certain algebraic Riccati equations, also involving weighting factors that depend on the spectral properties of the network graph. There have been additional works focusing on resilient architectures for microgrids \cite{YassaieHallajiyanSharifi2021}, and MASs under denial-of-service attacks \cite{DuWangDong2021, ZuoCaoWang2021, FengXieWang2021}. Lastly, a thriving part of this area directs its effort toward investigations coping with ``privacy preserving consensus'' \cite{GaoWangHe2021,RuanGaoWang2019,MoMurray2017,WangPrivacyPreserving2019,AltafiniPrivacyConsensus2019}. 
However, in contrast to this study, the attention has been focused much more on discrete-time systems 
or concealing the information being exchanged by nodes, in order to preserve privacy or relevant data, such as initial conditions of the network states.


\textit{Adopted framework:} Notwithstanding the meaningful novelties, many of these works lack a simple, scalable, flexible and distributed principle that renders a consensus MAS resilient to specific cyber-physical threats that \textit{aim at slowing down the convergence or destabilizing the network by attacking its links}. This approach thus seeks to preserve confidentiality, integrity and availability in the system itself starting by the design of resilient network connections. Instead of developing tools to secure existing systems, we provide inherently secure embedded measures that guarantee robust consensus convergence. 

\textit{Methodology:} Our approach is not meant to replace usual security measures; conversely, it furnishes further innovative security mechanisms based on the \textit{secure-by-design philosophy}, popular in software engineering \cite{Kreitz2019}. 
The core of this study consists in the development of a secure-by-design approach and its application to the consensus theory. To this aim, we take the point of view of a {\it network manager} pitted against an {\it attacker}. 
The goal of the network manager is to supply a networked system with an objective to be achieved. 
The goal of the attacker is to disrupt the operation of the system and prevent it from reaching its goal. 
Generally, such sensitive information may lay in the state of the agents, or be the global objective of the system. 
Our proposed solution approach is built upon three overarching principles: (i) embed the agents with hidden security measures, (ii) control the information given to the agents and (iii) make the dynamics robust and resilient.
The first principle arises from the fact that a certain amount of freedom is often available in the design stage. One can, for instance, adopt encryption methods to conceal the objective the network is aiming at, namely \textit{objective coding} can be leveraged as a security measure whenever an attacker is attempting to inject a malicious signal in the system. To this purpose, encoding/decoding functions are employed to serve as an encryption mechanism in order to keep hidden the real network objective.
The second principle stems from the fact that a MAS is designed, in general, to fulfill a certain situation-specific task. Thus, the information spread among agents needs to be quantified and maintained to the strict minimum, leading to the study of \textit{information localization}.
Finally, the last principle strives to render the dynamics as \textit{robust} as possible to attacks, while ensuring that the objective can be reached with limited information. 

\textit{Contributions}: The contributions of this work are threefold. 
\begin{itemize}
	\item[1.] A \textit{secure-by-design} consensus protocol is devised to satisfy principles (i)-(iii) within a given multiagent network under attack. The tradeoff between information encryption and robust convergence is analyzed.
	\item[2.] A stability and robustness analysis is performed both in continuous and discrete time to show that the proposed protocol is resilient to small perturbations affecting  the reception of encrypted edge weights.
	\item[3.] An application to decentralized estimation involving the decentralized power iteration algorithm is presented to highlight the validity of our approach.
\end{itemize}

\textit{Paper outline:} The remainder of the paper is organized as follows.
Sec. \ref{sec:prelim_models} introduces the preliminary notions and models for multiagent consensus.
In Sec. \ref{sec:sbdconsensus}, our proposed strategy to secure the design of consensus is developed and discussed. Sec. \ref{sec:robust-analysis} provides its robustness analysis when the network is subject to channel tampering modeled as single-edge-weight perturbation, while Sec. \ref{sec:PI-ACE} reports on an application to decentralized estimation. Sec. \ref{sec:dt_ext} extends this study in the discrete-time domain. 
Numerical simulations assessing the obtained theoretical results are reported in Sec. \ref{sec:simulations} and conclusions are sketched in Sec. \ref{sec:conclusions}.

\textit{Notation:} The set of real, real non-negative, and complex numbers are denoted with $\Rset{}$, $\Rset{}_{\geq 0}$, and $\mathbb{C}$, respectively, while $\Re[\varsigma]$ and $\Im[\varsigma]$ indicate the real and imaginary parts of $\varsigma \in \mathbb{C} $. Symbols $\onesvec{l} \in \Rset{l}$ and $\zerovec{l} \in \Rset{l}$ identify the $l$-dimensional (column) vectors whose entries are all ones and all zeros, respectively, while $\eye{l} \in \Rset{l\times l}$ and $\zerovec{l \times l} \in \Rset{l\times l}$ represent the identity and null matrices, respectively. 
We indicate with $\mathbf{e}_{l}$ the canonical vector having $1$ at its $l$-th component and $0$ at all the others.
The Kronecker product is denoted with $\kronecker$. Let $\Omega \in \Rset{l \times l}$ be a square matrix. Relation $\Omega \succeq 0$ means that $\Omega$ is symmetric and positive semi-definite. The notation $[\Omega]_{ij}$ identifies the entry of matrix $\Omega$ in row $i$ and column $j$, while $\left\|\Omega\right\|$, $\Omega^{\top}$, and $\Omega^\dagger$ indicate its spectral norm, its transpose, and its Moore-Penrose pseudo-inverse. Operators $\ker(\Omega)$, $\mathrm{col}_{l}[\Omega]$, and $\mathrm{row}_{l}[\Omega]$ indicate each the null space, the $l$-th column, and the $l$-th row of $\Omega$. The $i$-th eigenvalue of $\Omega$ is denoted by $\lambda_{i}^{\Omega}$. The space spanned by a vector $\omega \in \Rset{l}$, with $i$-th component $[\omega]_{i}$, is identified by $\angb{\omega}$. The Euclidean and infinity norms of $\omega$ are denoted with $\left\|\omega\right\|_{2}$ and $\left\|\omega\right\|_{\infty}$. 
Finally, $\Bomega=\vec{i=1}{l}{\omega_{i}}$ defines the vectorization operator stacking vectors $\omega_{i}$, $i=1,\dots,l$ as $\Bomega = \begin{bmatrix}
	\omega_{1}^{\top} & \dots & \omega_{l}^{\top}
\end{bmatrix}^{\top}$; whereas, $\diag{i=1}{l}{\varsigma_i}$ is a diagonal matrix with $\varsigma_i \in \mathbb{R}$, $i=1,\dots,l$, on the diagonal.

\section{Preliminaries and models}
\label{sec:prelim_models}

In this section, preliminary notions and models for MASs are introduced along with a brief overview on consensus theory and robustness in consensus networks. 


An $n$-agent system can be modeled through a weighted graph $\mathcal{G}=\left(\mathcal{V},\mathcal{E},\Wmc\right)$ so that each element in the \textit{vertex set} $\mathcal{V}=\left\{1, \dots, n\right\}$ is related to an agent in the group, while the \textit{edge set} $\mathcal{E}\subseteq \mathcal{V}\times \mathcal{V}$ characterizes the agents' interactions in terms of both sensing and communication capabilities. Also, $\Wmc = \{w_{k}\}_{k=1}^{m}$, with $m = |\Emc|$, represents the set of weights assigned to each edge. Throughout the paper, bidirectional interactions among agents are supposed, hence $\mathcal{G}$ is assumed to be \textit{undirected}. 
The set $\mathcal{N}_i=\left\{j\in\mathcal{V}\setminus\{i\} \;|\;(i,j)\in \Emc \right\}$ identifies the \textit{neighborhood} of the vertex $i$, i.e., the set of agents interacting with the $i$-th one and the cardinality $d_{i} = |\mathcal{N}_{i}|$ of neighborhood $\Nmc_i$ defines the degree of node $i$. 
Furthermore, we denote the \textit{incidence matrix} as $E \in \Rset{n\times m}$, in which each column $k \in \{1,\ldots,m\}$ is defined through the $k$-th (ordered) edge $(i,j) \in \Emc$, where $i<j$ is adopted w.l.o.g., and for edge $k$ corresponding to $(i,j)$ one has $[E]_{lk} = -1$, if $l = i$; $[E]_{lk} = 1$, if $l = j$; $[E]_{lk} = 0$, otherwise.
For all $k = 1,\ldots,m$, the weight $w_{k} = w_{ij} = w_{ji} \in \Rset{}$ is associated to $k$-th edge $(i,j)$, and $W=\diag{k=1}{m}{w_k}$ is the diagonal matrix of edge weights. 
Also, the \textit{Laplacian matrix} containing the topological information about $\graph$ is addressed as $L(\Gmc) = E W E^{\top} $ (see \cite{MesbahiEgerstedt2010}). Henceforward, we also assume that graph $\graph$ is \textit{connected} and $L(\Gmc) \succeq 0$, having eigenvalues $\lambda_{i}^{L}$, for $i = 1,\ldots,n$, such that $0 = \lambda_{1}^{L} < \lambda_{2}^{L} \leq \cdots \leq \lambda_{n}^{L}$. A sufficient condition to satisfy the latter requirement, which is adopted throughout the paper, is setting $w_{ij}> 0$ for all $(i,j)$.
Lastly, we let $w_{i} = \sum_{j \in \Nmc_{i}} w_{ij}$ and $\Psi_{\Gmc} = \max_{i=1,\ldots,n} w_{i}$ be the weighted degree of the $i$-th node and the maximum weighted degree of $\Gmc$, respectively.

We now provide an overview of the weighted consensus problem in MASs. Let us consider a group of $n$ homogeneous agents, modeled by a weighted and connected graph $\graph$. 
Let us also assign a continuous-time state $x_{i} = x_{i}(t) \in \Rset{D}$ to the $i$-th agent, for $i = 1,\dots,n$. The full state of the whole network can be thus expressed by $\x = \vec{i=1}{n}{x_i} 
\in X \subseteq \Rset{N}$, with $N=nD$. Consequently, the weighted consensus within a MAS can be characterized as follows.

\begin{defn}[Weighted Consensus\textcolor{white}{,}\cite{MesbahiEgerstedt2010}]\label{def:consensus}
	{An $n$-agent network achieves \emph{consensus} if $\lim_{t\rightarrow+\infty} \x(t) \in \Amc $, where $\Amc = \angb{\ones_{n}} \otimes \omega$, for some $\omega \in \mathbb{R}^{D}$, is called the \emph{agreement set}.
	}
\end{defn}

For a connected graph $\Gmc$ with positive weights, it is well known that the \textit{linear weighted consensus protocol}, given by
\begin{equation}\label{eq:LAP}
	\dot{\x} = - \L(\Gmc) \x,
\end{equation}
where $\L(\Gmc) = (L(\Gmc) \kronecker I_{D})$, drives the ensemble state to the agreement set \cite{MesbahiEgerstedt2010}.

We now review a robustness result for the consensus protocol with small-magnitude perturbations on the edge weights \cite{ZelazoBurger2017}.	In this setting we consider the perturbed Laplacian matrix $L(\Gmc_{\Delta^W}) = E (W+\Delta^W) E^{\top}$ for a structured norm-bounded perturbation $\Delta^W \in \bm{\Delta}^{W} = \{\Delta^W  \, : \, \Delta^W = \diag{k=1}{m}{\delta^{w}_k}$, $ \|\Delta^W \| \leq \bar{\delta}^W  \}$. When the injection attack is focused on a single edge, the following result (trivially extended from the corresponding one-dimensional case) is obtained relating the stability margin of an uncertain consensus network to the \textit{effective resistance} of an analogous resistive network \cite{KleinRandic1993}.

\begin{lem}[\cite{ZelazoBurger2017}] \label{thm:effective_resistance}
	Consider the nominal weighted consensus protocol \eqref{eq:LAP}. Then, for a single edge attack $\Delta^W = \delta_{uv}^{w} \mathbf{e}_z \mathbf{e}_z^{\top} \in \bm{\Delta}^{W}$ on the edge $z=(u,v)\in\mathcal{E}$, such that $ \delta_{uv}^{w}$ is a scalar function of $t$, the perturbed consensus protocol 
	\begin{equation}\label{eq:uncertainconsnet}
		\dot{\x} = -(L(\Gmc_{\Delta^W}) \kronecker I_{D}) \x
	\end{equation}
	is stable for all $\delta_{uv}^{w} $ satisfying 
	\begin{equation}\label{eq:fundamineqres}
		|\delta_{uv}^{w}| \leq \mathcal{R}_{uv}(\Gmc)^{-1},
	\end{equation}
	where $\mathcal{R}_{uv}(\Gmc) = [L^{\dagger}(\Gmc)]_{uu} - 2[L^{\dagger}(\Gmc)]_{uv} + [L^{\dagger}(\Gmc)]_{vv}$
	is the \textit{effective resistance} between nodes $u$ and $v$ in $\Gmc$.
\end{lem}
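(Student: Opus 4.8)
The plan is to read the single--edge attack as a rank--one update of the nominal Laplacian and to reduce the stability test to a positive semidefiniteness condition on the disagreement subspace. Writing $\a_z = E\e_z$ for the $z$-th column of the incidence matrix, the stated sign convention gives $\a_z = \e_u - \e_v$ (the overall sign is immaterial below), so that
\begin{equation}\label{eq:rankone}
  L(\Gmc_{\Delta^W}) = E\,(W + \Delta^W)\,E^{\top} = L(\Gmc) + \delta_{uv}^{w}\,\a_z \a_z^{\top}.
\end{equation}
Since $\a_z^{\top}\ones_n = 0$, the vector $\ones_n$ stays in $\ker L(\Gmc_{\Delta^W})$ for every $\delta_{uv}^{w}$, so the agreement set $\Amc$ remains an invariant equilibrium manifold of \eqref{eq:uncertainconsnet}. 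Because $L(\Gmc_{\Delta^W}) \kronecker I_D$ shares the spectrum of $L(\Gmc_{\Delta^W})$ (each eigenvalue with multiplicity $D$) and has kernel exactly $\Amc$, the $D$-dimensional protocol \eqref{eq:uncertainconsnet} is (Lyapunov) stable with respect to $\Amc$ if and only if the scalar Laplacian satisfies $L(\Gmc_{\Delta^W}) \succeq 0$. This reduces the claim to the one--dimensional case, as asserted.

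The core of the argument is then a rank--one semidefiniteness test performed on $\ones_n^{\perp}$, where $L(\Gmc)$ acts as an invertible operator whose inverse is $L^{\dagger}(\Gmc)$. For $\delta_{uv}^{w}\ge 0$ the update in \eqref{eq:rankone} only adds a positive semidefinite term, so semidefiniteness is immediate; the binding regime is $\delta_{uv}^{w}<0$. Here I would invoke the matrix determinant lemma in the form $\det(L(\Gmc_{\Delta^W}))\big|_{\ones_n^{\perp}} = \det(L(\Gmc))\big|_{\ones_n^{\perp}}\,\bigl(1 + \delta_{uv}^{w}\,\a_z^{\top}L^{\dagger}(\Gmc)\a_z\bigr)$: since a rank--one perturbation displaces a single eigenvalue continuously, definiteness on $\ones_n^{\perp}$ is lost precisely when this determinant first vanishes. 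Combining both signs, the unified condition for $L(\Gmc_{\Delta^W}) \succeq 0$ (with kernel $\langle\ones_n\rangle$) is $1 + \delta_{uv}^{w}\,\a_z^{\top}L^{\dagger}(\Gmc)\a_z \ge 0$.

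It remains to identify the scalar $\a_z^{\top}L^{\dagger}(\Gmc)\a_z$. Substituting $\a_z = \e_u - \e_v$ and using the symmetry of $L^{\dagger}(\Gmc)$ yields
\begin{equation}\label{eq:effres-id}
  \a_z^{\top}L^{\dagger}(\Gmc)\a_z = [L^{\dagger}(\Gmc)]_{uu} - 2[L^{\dagger}(\Gmc)]_{uv} + [L^{\dagger}(\Gmc)]_{vv} = \mathcal{R}_{uv}(\Gmc),
\end{equation}
exactly the effective resistance. Hence the protocol is stable iff $1 + \delta_{uv}^{w}\mathcal{R}_{uv}(\Gmc)\ge 0$, i.e. $\delta_{uv}^{w}\ge -\mathcal{R}_{uv}(\Gmc)^{-1}$; in particular the symmetric margin $|\delta_{uv}^{w}|\le \mathcal{R}_{uv}(\Gmc)^{-1}$ of \eqref{eq:fundamineqres} is sufficient, since it implies the binding lower bound. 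Lifting back through the Kronecker structure completes the proof.

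I expect the main obstacle to be the careful bookkeeping on $\ones_n^{\perp}$: one must justify that stability of \eqref{eq:uncertainconsnet} is equivalent to $L(\Gmc_{\Delta^W})\succeq 0$, that the rank--one update keeps the zero eigenvalue simple up to the threshold (so $\Amc$ is not enlarged), and that $L^{\dagger}(\Gmc)$ legitimately serves as the inverse of $L(\Gmc)$ restricted to the disagreement subspace when the determinant lemma is applied. Once the incidence--column form of $\a_z$ is used, the algebraic identity \eqref{eq:effres-id} and the final margin are routine.
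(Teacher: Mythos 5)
Your proposal is correct, but there is nothing in the paper to compare it against: the paper states this result as Lemma~\ref{thm:effective_resistance} imported verbatim from \cite{ZelazoBurger2017} and offers no proof of its own, so your argument is effectively a standalone reconstruction of the cited reference's reasoning. Your route --- writing the attack as the rank-one update $L(\Gmc_{\Delta^W}) = L(\Gmc) + \delta_{uv}^{w}\,\a_z\a_z^{\top}$, reducing stability of the Kronecker-lifted system to $L(\Gmc_{\Delta^W})\succeq 0$, and locating the loss of semidefiniteness on $\ones_n^{\perp}$ via the determinant lemma $1+\delta_{uv}^{w}\,\a_z^{\top}L^{\dagger}(\Gmc)\a_z \ge 0$ together with the identity $\a_z^{\top}L^{\dagger}(\Gmc)\a_z = \mathcal{R}_{uv}(\Gmc)$ --- is exactly the mechanism underlying the cited result, and your handling of the two sign regimes and of the sharp one-sided threshold $\delta_{uv}^{w}\ge -\mathcal{R}_{uv}(\Gmc)^{-1}$ (of which the stated symmetric bound is the sufficient restriction) is sound. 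Two small points deserve tightening rather than repair. First, the lemma allows $\delta_{uv}^{w}$ to be a function of $t$, so the clean equivalence ``stable $\iff L(\Gmc_{\Delta^W})\succeq 0$'' (a time-invariant statement) should be replaced, for the direction you actually need, by the observation that pointwise-in-$t$ positive semidefiniteness certifies stability through the common Lyapunov function $V(\x)=\|\x\|_2^2$, whose derivative $-2\x^{\top}(L(\Gmc_{\Delta^W})\kronecker I_D)\x$ is nonpositive for every $t$. Second, your continuity claim that ``a rank-one perturbation displaces a single eigenvalue'' is best justified by Weyl interlacing for rank-one updates, which guarantees that at most one eigenvalue of the restriction to $\ones_n^{\perp}$ can become negative, so the sign of the restricted determinant indeed decides semidefiniteness; this is standard and closes the only informal step in your argument.
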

The result in \ref{eq:fundamineqres} is sharp in the sense it provides an exact upper bound on the robust stability of the system.  For multiple edge perturbations, a more conservative result based on the small-gain theorem is also provided \cite[Theorem V.2]{ZelazoBurger2017}.

\section{The secure-by-design consensus protocol}\label{sec:sbdconsensus}



In this work, we consider MASs which are led by a so-called \textit{network manager} 
providing encrypted objectives or parameters to the ensemble.  The MAS is also subject to an attack by an external entity aiming to disrupt the operation of the network.  
%
In this setup, agents receive high-level instructions from the network manager that describe a \textit{task} the agents have to achieve. Within the consensus framework, a task may consist in the assignment of edge weights, albeit the concept of ``task" may be varied according to further generalizations (e.g. nonlinear consensus) or depending on a specific multiagent framework (e.g. formation control). 
In particular, our attention is directed towards \emph{edge weight encryption}, since these dictate the convergence rate of protocol \eqref{eq:LAP} to the agreement. It is worth mentioning that the latter performance indicator plays a key role in the functioning of certain applications, e.g. those involving decentralized estimation \cite{YangFreemanGordon2010}, 
or in certain theoretical fields, as the problems related to averaged controllability \cite{Zuazua2014}.
Another crucial aspect in this setup is that \textit{the network manager is not conceived to operate as a centralized controller}. Indeed, this
does not send control signals to each agents for the system to  achieve a ``global objective'',
but instead  sends only a few parameters describing the objective to be achieved by the agents. Hence, the presence of the external manager does not invalidate any distributed architectures. Moreover, the use of a network manager that broadcasts the encoded objective to all the nodes is justified by the fact that each element of the network must be somehow made aware of the network parameters for their information exchange to occur correctly: we aim at the secure design for such a preliminary task assignment. In this consensus groundwork, our approach is indeed fully supported by the fact that optimal weight assignment problems requiring prior computations are of extreme relevance in literature and give birth to well-known research branches, e.g. the study of fastest mixing Markov processes on graphs \cite{BoydDiaconisXiao2004,SunBoydXia2006}. 

The kind of scenarios we envision then consists of two steps: firstly, the network manager broadcasts only a few signals, in which an (or a sequence of) objective(s) is encoded, and secondly, each agent follows a pre-designed algorithm or control law -- the consensus protocol, in this precise context -- depending on these local objectives. 
To this aim, objective coding and information localization represent the primary tools to encrypt tasks and spread the exchanged information. In the next lines, we provide more details about these principles, casting them on the consensus framework. 


\subsection{Objective coding and information localization}\label{ssec:objcod_infloc}

A major innovation of our approach lies in the introduction of objective decoding functions.  Here we assume that tasks are described by an encoded parameter $\theta$ that we term the \textit{codeword}.  The space of all tasks is denoted as $\Theta$. Each agent in the network then decodes this objective using its \textit{objective decoding function}, defined as $p_i : \Theta \rightarrow \Pi_i $, where  
$\Pi_i$ depends on the specific application (e.g. $\Pi_i \subseteq \Rset{n}$ within the consensus setting). Functions $p_i$ represent a secure encryption-decryption mechanism for the information describing the task being received. 
For $\theta \in \Theta$, $p_i(\theta)$ is called the \emph{localized objective}. Whereas, if $\theta \notin \Theta$, $p_i(\theta)$ may not be calculable; however, any agent receiving such a codeword may launch an alert, since this can be seen as an attack detection.  A possible example of this framework is to have $\Theta$ be a Euclidean space (e.g. the identity function), and $p_i$ be a projection onto some of the the canonical axes in the Euclidean space.
In other words, the common case in which $p_i$ are projection functions (e.g., $p_i(\theta) = \theta_i \in \Theta \subseteq \Rset{n^{2}}$ when $\theta := \vec{i=1}{n}{\theta_i}$, $\theta_i \in \mathbb{R}^{n}$) justifies the abuse of language of calling $\theta$ the objective. 
Moreover, we assume that the codewords $\theta$ are transmitted as in a \textit{broadcast mode}, that is the network manager broadcasts the objective $\theta$ in an encoded manner. Each agent is equipped with an individually designed function $p_i$ which extracts from $\theta$ the relevant part of the objective. Most importantly, the encoding and decoding mechanisms are assumed unknown to the attacker.



In addition to objective coding, information localization, the process by which only parts of the global variables describing the system are revealed to the agents, is fundamental in this design approach. So, to conclude, we let  $h_i(\x):X\to Y_i$, with $Y_i\subseteq X$, represent the \textit{information localization about the state of the ensemble} (containing $n$ agents) for agent $i$.

\subsection{Secure-by-design consensus dynamics}\label{ssec:SBDCdyn}

With the above conventions, principles and architecture, the general description of agent~$i$ can be expressed by
\begin{align}\label{agent_dynamics}
	\dot x_i = f_i(\x, u_i(h_i(\x), p_i(\theta))), \quad i = 1,\ldots,n,
\end{align}
where $u_{i} = u_i(h_i(\x), p_i(\theta))$ is the control or policy of agent~$i$, which can only depend on the partial knowledge of the global state and objective coding.

Now, since in this paper we are coping with secure linear consensus protocols, dynamics in \eqref{agent_dynamics} is specified through the following characterization dictated by the nominal behavior in \eqref{eq:LAP}. Firstly, the objective coding is established through the non-constant functions $p_{i}: \Theta \rightarrow \Pi_i \subseteq \Rset{n}$, such that $ [p_{i}]_{j} := p_{ij} $,  with 
\begin{equation}\label{eq:pijbasicchar}
	p_{ij}(\theta)  = \begin{cases}
		w_{ij}, & \text{ if } (i,j) \in \Emc \\ 
		0, &  \text{ otherwise}.
	\end{cases}
\end{equation}
The values $w_{ij}$ in \eqref{eq:pijbasicchar} coincide with the nominal desired consensus weights set by the network manager.
Secondly, the information localization about the global state $\x$ is expressed by means of $h_{i}(\x) : X \rightarrow  Y_{i} \subseteq \Rset{D \times n} $, such that $\mathrm{col}_{j}[h_{i}(\x)] := h_{ij}(\x(t)) \in \Rset{D}$ with $h_{ij}(\x)=x_{i}-x_{j}$, if $(i,j) \in \Emc$; $h_{ij}(\x)=\zerovec{D}$, otherwise.
As a consequence, the peculiar dynamics $f_i(\x, u_i)$ for the $i$-th agent involved in the \textit{secure-by-design consensus} (SBDC) is determined by
\begin{align}\label{eq:secure_consensus} 
	f_i(\x, u_i(h_i(\x), p_i(\theta))) = -  {\textstyle\sum}_{j \in \Nmc_{i}} p_{ij}(\theta) h_{ij}(\x)   
	. 
\end{align}

It is worth to notice that \eqref{eq:secure_consensus} reproduces exactly the linear consensus protocol introduced in \eqref{eq:LAP}, since $f_i(\x,u_{i}) = -\mathrm{row}_{i}[\L] \x$, $\forall i = 1,\ldots,n$. However, a different point of view is here offered, since the adopted network manager may broadcast the codeword $\theta$ in order to redesign a subset of the edge weights whenever an external disturbance affects the integrity of the information exchanged between a couple of nodes in the network (e.g., set a specific edge weight to $0$ if it is detected to be compromised). Also, dynamics \eqref{eq:secure_consensus} shows both the presence and separation between the encryption mechanism to secure the signals sent by the network manager and the state information spreading. Indeed, defining $\p(\theta) = \vec{i=1}{n}{p_i(\theta)} \in \Rset{n^{2}}$ and $\H(\x) = \diag{i=1}{n}{h_i(\x(t))} \in \Rset{N \times n^{2}}$, dynamics \eqref{agent_dynamics}-\eqref{eq:secure_consensus} finally takes the form of
\begin{equation}\label{eq:SBDC}
	\dot{\x} = -\H(\x) \p(\theta),
\end{equation}
and, thus, the following result can be stated.
\begin{lem}\label{lem:standardandsecurepersp}
	The SBDC protocol \eqref{eq:SBDC} reaches consensus for any given objective decoding function $\p$ satisfying \eqref{eq:pijbasicchar}.
\end{lem}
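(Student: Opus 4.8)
The plan is to show that the SBDC protocol \eqref{eq:SBDC} is nothing but a re-parametrization of the nominal protocol \eqref{eq:LAP}, and then to inherit convergence from the weighted consensus result recalled from \cite{MesbahiEgerstedt2010}. The discussion preceding the statement already asserts the pointwise identity $f_i(\x,u_i)=-\mathrm{row}_i[\L]\x$; the job of the proof is to verify this at the level of the stacked dynamics \eqref{eq:SBDC} and then conclude. The key point is that any $\p$ obeying \eqref{eq:pijbasicchar} produces the same right-hand side, independently of the particular decoding mechanism that realizes the weights $w_{ij}$.

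First I would unpack the block structure of $\H(\x)\p(\theta)$. Since $\H(\x)=\diag{i=1}{n}{h_i(\x)}$ is block diagonal with blocks $h_i(\x)\in\Rset{D\times n}$, while $\p(\theta)=\vec{i=1}{n}{p_i(\theta)}$ stacks the vectors $p_i(\theta)\in\Rset{n}$, the $i$-th $D$-dimensional block of the product is the matrix--vector product $h_i(\x)\,p_i(\theta)$, which expands as
\begin{equation*}
	h_i(\x)\,p_i(\theta) = {\textstyle\sum}_{j=1}^{n} \mathrm{col}_j[h_i(\x)]\,[p_i(\theta)]_j = {\textstyle\sum}_{j=1}^{n} h_{ij}(\x)\,p_{ij}(\theta).
\end{equation*}
Substituting the definitions of $p_{ij}$ from \eqref{eq:pijbasicchar} and of $h_{ij}$, every summand vanishes unless $(i,j)\in\Emc$ (in particular the diagonal term $j=i$ drops out, there being no self-loops), so the block collapses to $\sum_{j\in\Nmc_i} w_{ij}(x_i-x_j)$, which is exactly the per-agent dynamics \eqref{eq:secure_consensus}.

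Next I would identify this block with the corresponding block of $\L(\Gmc)\x=(L(\Gmc)\kronecker I_D)\x$. Using $[L(\Gmc)]_{ii}=\sum_{j\in\Nmc_i}w_{ij}$, $[L(\Gmc)]_{ij}=-w_{ij}$ for $j\in\Nmc_i$, and $0$ otherwise, one obtains that the $i$-th block of $\L(\Gmc)\x$ equals $\sum_{j\in\Nmc_i}w_{ij}(x_i-x_j)$, matching the block computed above. Hence $\H(\x)\p(\theta)=\L(\Gmc)\x$, and \eqref{eq:SBDC} coincides identically with \eqref{eq:LAP}. Because $\Gmc$ is assumed connected with positive weights $w_{ij}>0$, the hypotheses of the recalled result hold, so $\lim_{t\to+\infty}\x(t)\in\Amc$ and consensus is reached, which is the assertion for every $\p$ satisfying \eqref{eq:pijbasicchar}.

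The only thing requiring care is purely the vectorization bookkeeping: confirming that the block-diagonal/stacked structure of $\H(\x)$ and $\p(\theta)$ reproduces the $-w_{ij}$ off-diagonal entries and the weighted-degree diagonal of $L(\Gmc)$ block-wise in $\Rset{D}$, and noting that positivity $w_{ij}>0$ together with connectivity is precisely what guarantees $L(\Gmc)\succeq 0$ with a simple zero eigenvalue associated with $\ones_n$. There is no genuine analytic obstacle: once the identity $\H(\x)\p(\theta)=\L(\Gmc)\x$ is established, convergence to the agreement set is inherited verbatim from the nominal linear consensus protocol \eqref{eq:LAP}.
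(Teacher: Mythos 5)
Your proof is correct and follows essentially the same route as the paper's: verify term-by-term, using \eqref{eq:pijbasicchar} and the definition of $h_{ij}$, that the $i$-th block of $\H(\x)\p(\theta)$ equals $\sum_{j\in\Nmc_i}w_{ij}(x_i-x_j)$, so that \eqref{eq:SBDC} coincides with the nominal protocol \eqref{eq:LAP} and consensus is inherited from the connected, positively weighted case. The only difference is that you make the block-diagonal/stacking bookkeeping and the final appeal to $L(\Gmc)\succeq 0$ with connectivity explicit, whereas the paper compresses this into ``by construction, the dynamics are equivalent.''
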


\begin{proof}
	By construction, dynamics \eqref{eq:SBDC} and \eqref{eq:LAP} are equivalent. 
	Indeed, by \eqref{eq:secure_consensus}, the $i$-th equation of \eqref{eq:SBDC} can be rewritten as $\dot{x}_{i} = - {\textstyle\sum}_{j \in \Nmc_{i}} p_{ij}(\theta) h_{ij}(\x),$
	so that term $(i,j)$ in the above summation is equal to $(w_{ij}(x_{i}-x_{j}))$, if $(i,j) \in \Emc$, or it is zero, otherwise.
\end{proof}

As we will see in the next section, the benefits of such a perspective directly connect with the possibility of designing an objective coding map $\p$ hiding the information on edge weights and yielding guarantees on the robust stability of the consensus protocol \eqref{eq:SBDC}.  In particular, a codeword $\theta \in \Theta$ (when belonging to some Euclidean subspace) is deviated from its nominal value following a cyber-physical attack $\delta^{\theta}$, i.e., $(\theta + \delta^{\theta})$ is received by the function $\p$. 
Fig. \ref{fig:SBDC-block_diag.jpg} summarizes the developments obtained so far, describing the basic framework in which the next investigation is carried out.

\begin{figure}[!t]
	\centering
	\vspace{5pt}
	\includegraphics[scale=0.43]{./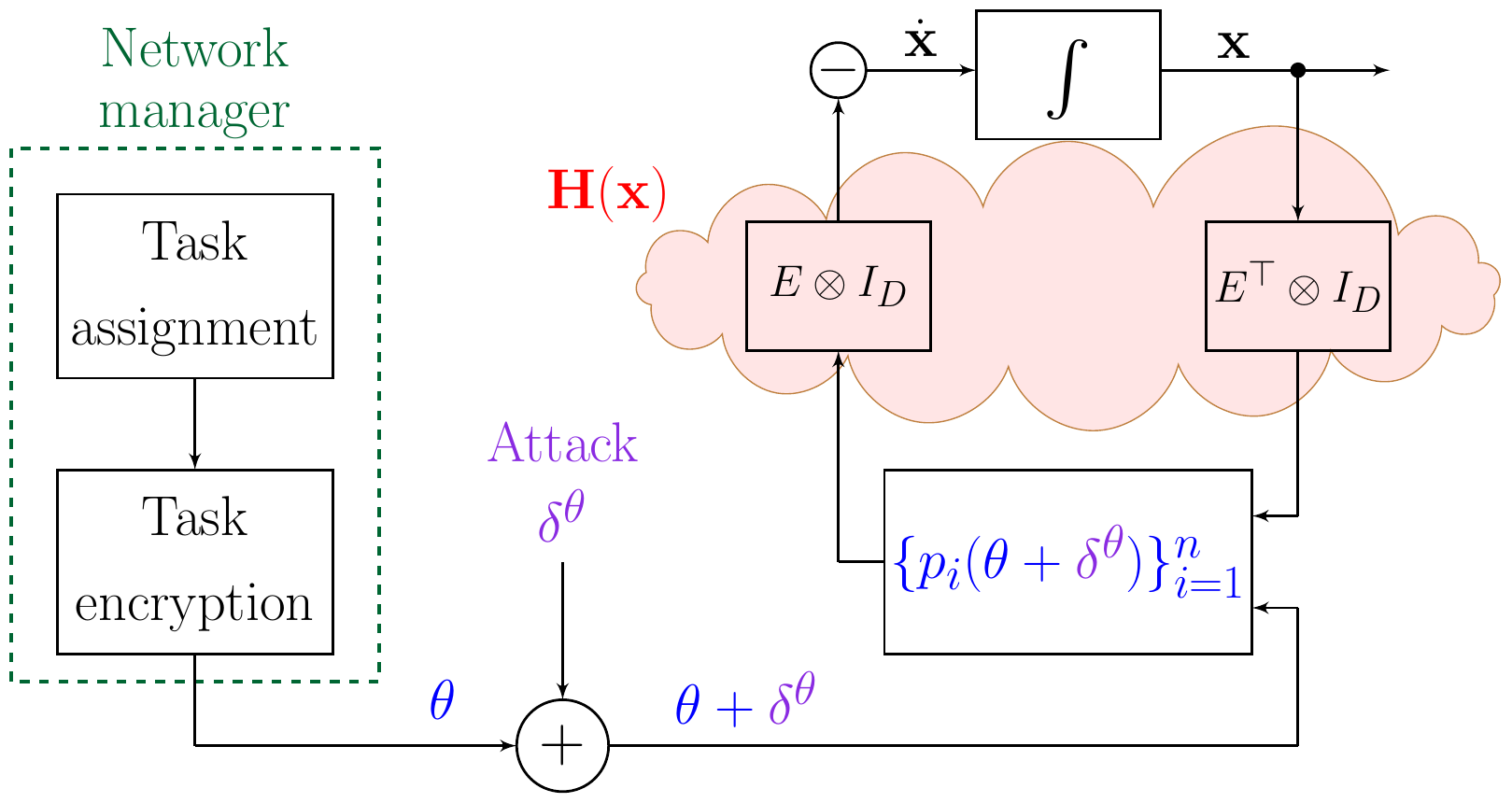}
	\caption{Block diagram depicting relation \eqref{eq:SBDC} and the presence of a cyber-physical attack $\delta^{\theta}$ deviating a sent codeword $\theta$. 
	}
	\label{fig:SBDC-block_diag.jpg}
\end{figure}

\section{Robustness to channel tampering} \label{sec:robust-analysis}

One of the goals of this study aims at the design of networks that are secure to channel tampering while accomplishing the consensus task. To this end, we propose to embed the system with security measures that allow to make it robust to small signal perturbations on a single edge. In the sequel, a description for the channel tampering is provided along with the relative robustness analysis for the devised SBDC protocol. 

\subsection{Model for the channel tampering}
This particular channel tampering problem under investigation is formulated as follows. 
Let the prescribed codeword $\theta$ be subject to a  deviation (i.e., an attack) $\delta^\theta \in \bm{\Delta}^{\theta} = \{\delta^\theta \, : \,  \|\delta^\theta \|_{\infty} \leq \bar{\delta}^\theta  \}$. 
To proceed with our analysis within a plausible framework, we let $\Theta$ be a Euclidean subspace, namely $\Theta  \subseteq \Rset{n^{2}}$, and allow a codeword $\theta  = \vec{i=1}{n}{\theta_i} \in \Theta$ to be decomposed into (at most) $n(n-1)/2$ meaningful ``subcodewords" $\theta^{(k)} := [\theta_i]_j = \theta_{ij}$, with $k=1,\ldots,m$, such that $\theta_{ij}=\theta_{ji}$, if $i\neq j$, and $\theta_{ii}$ takes an arbitrary value, for $i=1,\ldots,n$. Each $\theta_{ij} \in \Theta_{ij}  \subseteq \Rset{}$ can be seen as the $j$-th component of the $i$-th codeword piece $\theta_i$, with $i=1,\ldots,n$. Such subcodewords directly affect the value of $p_{ij}(\theta)$ if and only if $j \in \Nmc_i$, i.e., it holds that $p_{ij}(\theta) = p_{ij}(\theta_{ij}) \in \Pi_{ij} \subseteq \mathbb{R}$ for all $(i,j) \in \Emc$, with $\Pi_{ij} $ such that $\Pi_{i} = \Pi_{i1} \times \cdots \times \Pi_{ij} \times \cdots \times \Pi_{in}$.
Hence, the consensus description we account for to support this analysis is such that the $i$-th nominal dynamics in \eqref{eq:SBDC} is altered into
\begin{equation}\label{eq:perturbedSBDC}
	\dot{x}_{i} = - {\textstyle\sum}_{j \in \Nmc_{i}} p_{ij}(\theta_{ij} + \delta^{\theta}_{ij}) h_{ij}(\x), \quad i = 1,\ldots,n,
\end{equation}
with $\delta^{\theta}_{ij} = [\delta^{\theta}_i]_j $ and $\delta^{\theta}_{i}$ satisfying $\delta^{\theta}  = \vec{i=1}{n}{\delta^{\theta}_i}$. 
Therefore, in this direction, we aim to solve the following:

\begin{prb}\label{prob:resilenceSBDC}
	Find objective functions $p_{ij}$ such that \eqref{eq:perturbedSBDC} reaches consensus, independently from the codeword $\theta \in \Theta \subseteq \Rset{n^{2}}$, while the underlying MAS is subject to an attack $\delta^{\theta} \in \bm{\Delta}^{\theta}$ focused on a single edge $(u,v) \in \Emc$, i.e., with $\delta_{ij}^{\theta} = 0$ for all $(i,j)\in \Emc \setminus \{(u,v)\}$. Also, provide robustness guarantees for a given perturbation set $\bm{\Delta}^{\theta}$ in terms of the maximum allowed magnitude (denoted with $\rho_{uv}^{\theta}$) for component $\delta_{uv}^{\theta}$.
\end{prb}

\subsection{Robustness of the SBDC} \label{sec:ressbdc}
Within the setup described so far, it is possible to exploit Lem. \ref{thm:effective_resistance} and provide guarantees for the robustness of system \eqref{eq:perturbedSBDC} when the target of a cyber-physical threat is a single edge. To proceed in this way, we resort to the study of perturbations of the type $\delta_{uv}^{w} = \delta_{uv}^{w}(\theta_{uv},\delta_{uv}^{\theta})$ affecting weight $p_{uv}(\theta_{uv}) = w_{uv}$ and caused by a deviation $\delta_{uv}^{\theta}$ focused on connection $(u,v) \in \Emc$. 
Nevertheless, further assumptions on the $p_{i}$'s are required to tackle Prob. \ref{prob:resilenceSBDC}. Indeed, this robustness analysis is necessarily restricted to a particular choice for the objective coding, that is for concave and Lipschitz continuous differentiable functions $p_{i}$. More precisely, we let
the $i$-th objective coding function $p_{i}: \Theta \rightarrow \Pi_{i}$ 
adopted in model \eqref{eq:perturbedSBDC} possess the following characterization:
\begin{enumerate}
	\item[\textit{(i)}] values $[p_{i}(\theta)]_{j} = p_{ij}(\theta_{ij})$, with $\theta_{ij} = [\theta_i]_j$, satisfy \eqref{eq:pijbasicchar} for all $(i,j) \in \Emc$ and  are not constant w.r.t. $\theta_{ij}$; 
	\item[\textit{(ii)}] $p_{ij}$ is concave $\forall \theta \in \Theta$, i.e., $p_{ij}(\varsigma \eta_{1} + (1-\varsigma) \eta_{2}) \geq \varsigma p_{ij}(\eta_{1}) + (1-\varsigma) p_{ij}(\eta_{2})$, $\varsigma \in [0,1]$, $\forall \eta_{1},\eta_{2} \in \Theta_{ij}$;
	\item[\textit{(iii)}] $p_{ij}$ is Lipschitz continuous and differentiable w.r.t. $\theta$, implying $\exists K_{ij} \geq 0: ~|p_{ij}^{\prime}(\theta_{ij})| \leq K_{ij}$, $\forall (i,j) \in \Emc$.
\end{enumerate} 
While property \textit{(i)} is standard to obtain an equivalence between \eqref{eq:perturbedSBDC} in absence of attacks and its nominal version \eqref{eq:SBDC}, hypotheses \textit{(ii)}-\textit{(iii)}, demanding for concavity and Lipschitz continuity along with differentiability respectively, may not appear intelligible at a first glance.
The reason for such a characterization is clarified in the next theorem, providing the key result to solve Prob. \ref{prob:resilenceSBDC}.
\begin{thm} \label{thm:maximum_perturb_codeword}
	Assume the above characterization \textit{(i)-(iii)} for objective decoding functions $p_i$ holds.  
	Then, for an injection attack $\delta^{\theta} \in \bm{\Delta}^{\theta}$ on a single edge $(u,v)\in \Emc$, i.e., with $\delta^\theta_{ij} = 0$ for all $(i,j) \in \Emc \setminus \{(u,v)\}$, the perturbed consensus protocol \eqref{eq:perturbedSBDC} is stable for all $\delta^\theta_{uv}$ such that
	\begin{equation}\label{eq:sgrepatt}
		|\delta^\theta_{uv}| \leq \rho^\theta_{uv} = (K_{uv} \Rmc_{uv}(\Gmc) )^{-1},
	\end{equation}
	independently from the values taken by any codeword $\theta \in \Theta$. 
\end{thm}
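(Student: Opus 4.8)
The plan is to reduce the perturbed SBDC dynamics \eqref{eq:perturbedSBDC} to the weight-perturbed linear consensus protocol \eqref{eq:uncertainconsnet} already analyzed in Lemma \ref{thm:effective_resistance}, and then to convert the codeword-level attack budget into an edge-weight perturbation that Lemma \ref{thm:effective_resistance} can absorb. The key observation is that, because $p_{ij}(\theta) = p_{ij}(\theta_{ij})$ depends only on the subcodeword attached to edge $(i,j)$, and the attack is focused on the single edge $(u,v)$ (so $\delta^\theta_{ij} = 0$ for $(i,j)\neq(u,v)$), the only weight altered is that of edge $(u,v)$. I would therefore define the induced weight deviation $\delta^w_{uv} := p_{uv}(\theta_{uv} + \delta^\theta_{uv}) - p_{uv}(\theta_{uv})$, noting $p_{uv}(\theta_{uv}) = w_{uv}$ by property \textit{(i)}.

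First I would invoke the equivalence of the secure and standard perspectives (Lemma \ref{lem:standardandsecurepersp} together with property \textit{(i)}) to rewrite \eqref{eq:perturbedSBDC} exactly as \eqref{eq:uncertainconsnet} with the structured single-edge perturbation $\Delta^W = \delta^w_{uv}\, \mathbf{e}_z \mathbf{e}_z^\top$, $z=(u,v)$. Lemma \ref{thm:effective_resistance} then guarantees stability of \eqref{eq:perturbedSBDC} whenever $|\delta^w_{uv}| \leq \mathcal{R}_{uv}(\Gmc)^{-1}$, where $\mathcal{R}_{uv}(\Gmc)$ is evaluated on the nominal graph induced by the transmitted codeword $\theta$.

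It then remains to bound $|\delta^w_{uv}|$ in terms of the codeword-level attack $|\delta^\theta_{uv}|$. By differentiability \textit{(iii)}, the mean value theorem yields $\delta^w_{uv} = p_{uv}'(\xi)\,\delta^\theta_{uv}$ for some $\xi$ between $\theta_{uv}$ and $\theta_{uv}+\delta^\theta_{uv}$, whence the Lipschitz bound $|\delta^w_{uv}| \leq K_{uv}\,|\delta^\theta_{uv}|$ holds uniformly in $\theta_{uv}$; this uniformity is precisely what delivers the ``independently from $\theta$'' claim, since $K_{uv}$ does not depend on the operating point. Combining with the previous step, the hypothesis $|\delta^\theta_{uv}| \leq (K_{uv}\mathcal{R}_{uv}(\Gmc))^{-1} = \rho^\theta_{uv}$ forces $|\delta^w_{uv}| \leq \mathcal{R}_{uv}(\Gmc)^{-1}$, and stability follows from Lemma \ref{thm:effective_resistance}. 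Concavity \textit{(ii)} enters to identify the worst case: the only destabilizing direction is a decrease of the edge weight, and since a concave $p_{uv}$ has monotone non-increasing derivative, its extremal decrease over the interval $[\theta_{uv}-|\delta^\theta_{uv}|,\, \theta_{uv}+|\delta^\theta_{uv}|]$ is attained at an endpoint and is governed by the same constant $K_{uv}$, so the symmetric magnitude bound is exactly the binding one.

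I expect the main obstacle to be the bookkeeping behind the phrase ``independently from the values taken by any codeword $\theta$.'' The effective resistance $\mathcal{R}_{uv}(\Gmc)$ itself depends on the operating codeword through the nominal weights $w_{ij} = p_{ij}(\theta_{ij})$, so one must be careful to assert uniformity at the level of the functional form of the bound and through the codeword-independent Lipschitz constant $K_{uv}$, rather than as a single numerical threshold. A secondary subtlety is justifying that Lemma \ref{thm:effective_resistance}, stated for a scalar weight perturbation, applies verbatim when $\delta^w_{uv}$ is a time-varying signal generated by a (possibly time-varying) attack $\delta^\theta_{uv}(t)$ through the static nonlinearity $p_{uv}$; this is immediate once one observes that the hypotheses of Lemma \ref{thm:effective_resistance} only require $\delta^w_{uv}$ to be a scalar function of $t$ obeying the stated magnitude bound.
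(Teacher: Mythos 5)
Your proposal is correct and follows essentially the same route as the paper's own proof: reinterpret the codeword attack as an induced single-edge weight perturbation $\delta^{w}_{uv} = p_{uv}(\theta_{uv}+\delta^{\theta}_{uv}) - p_{uv}(\theta_{uv})$ on the nominal weight $w_{uv}=p_{uv}(\theta_{uv})$, bound it by $K_{uv}|\delta^{\theta}_{uv}|$ uniformly in $\theta$, and invoke Lem.~\ref{thm:effective_resistance} to conclude stability under \eqref{eq:sgrepatt}. The only cosmetic difference is that you justify the bound $|\delta^{w}_{uv}| \leq K_{uv}|\delta^{\theta}_{uv}|$ via the mean value theorem using property \textit{(iii)} alone, whereas the paper routes it through the concavity tangent-line inequality of property \textit{(ii)} combined with \textit{(iii)} (its Eq.~\eqref{eq:concavity}); both yield the same reduction to the effective-resistance criterion.
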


\begin{proof}
	As the nominal system \eqref{eq:SBDC} associated to \eqref{eq:perturbedSBDC} is stable by virtue of Lem. \ref{lem:standardandsecurepersp}, characterization \textit{(i)-(iii)} determines each ordered logical step to conclude the thesis through Lem. \ref{thm:effective_resistance}. Firstly, condition \textit{(i)} is necessary to construct at least a correspondence from $\theta_{ij}$ to the weight $w_{ij}$ for all edges $(i,j) \in \Emc$. 
	Secondly, condition \textit{(ii)} expresses a concavity requirement for the $p_{ij}$'s, leading inequality $p_{ij}(\theta_{ij} + \delta^{\theta}_{ij}) \leq p_{ij}(\theta_{ij}) + p^{\prime}_{ij}(\theta_{ij})\delta^{\theta}_{ij}$ to hold for any deviation $\delta^{\theta} \in \bm{\Delta}^{\theta}$, when $p^{\prime}_{ij}(\theta_{ij})$ exists finite for all $\theta_{ij}$. Consequently, \textit{(i)} also forces $K_{ij}>0$ and \textit{(iii)} leads to
	\begin{equation}\label{eq:concavity}
		p_{ij}(\theta_{ij} + \delta^{\theta}_{ij}) -p_{ij}(\theta_{ij})  \leq  K_{ij}\delta^{\theta}_{ij}, \quad \forall (i,j) \in \Emc.
	\end{equation}
	The product $ K_{ij} \delta^{\theta}_{ij}$ in the r.h.s. of \eqref{eq:concavity} is key, as $K_{ij} |\delta^{\theta}_{ij}|$ can be seen as the maximum magnitude of an additive perturbation $\delta^{w}_{ij} := p_{ij}(\theta_{ij} + \delta^{\theta}_{ij})-p_{ij}(\theta_{ij})$ 
	affecting the nominal weight $w_{ij}=p_{ij}(\theta_{ij})$ independently from the transmitted codeword $\theta$. That is, under \textit{(i)-(iii)} model \eqref{eq:perturbedSBDC} can be reformulated as
	\begin{align}\label{agent_dynamics_uncertain}
		\dot{\x} = - \H(\x) (\p(\theta)+\delta^{w}),
	\end{align}
	where $\delta^{w} \in \bm{\Delta}^w= \{\delta^{w} \, : \,  \|\delta^w \|_{\infty} \leq \bar{\delta}^w \}$, such that $\delta^{w} = \vec{i=1}{n}{\delta_{i}^{w}}$ and $[\delta^{w}_{i}]_{j} = \delta^{w}_{ij} \leq  K_{ij} | \delta^{\theta}_{ij}|$.
	Therefore, imposing inequality $K_{uv} | \delta^{\theta}_{uv}| \leq \Rmc_{uv}(\Gmc)^{-1} $ in accordance with Lem. \ref{thm:effective_resistance} leads to the thesis, since $K_{uv} | \delta^{\theta}_{uv}|$ can be seen as an upper bound of the deviation $|\delta_{uv}^{w}| \leq K_{uv} | \delta^{\theta}_{uv}|$ for edge $(u,v) \in \Emc$ w.r.t. to an altered subcodeword $\theta_{uv}+\delta^{\theta}_{uv}$.
\end{proof}


\begin{rmk}
	It is worth to highlight that inequality \eqref{eq:sgrepatt} yields a \textit{small-gain} interpretation of the allowable edge-weight uncertainty that guarantees the network to be robustly stable within a framework where any value of a codeword $\theta \in \Theta$ is considered, provided that mapping structure \textit{(i)-(iii)} for the design of $(\theta, \p(\theta))$ is adopted.\footnote{This is the broadest setup possible, as more than a single edge weight would be altered if more complex structures of $(\theta, \p(\theta))$ were considered.}
	In addition, Thm. \ref{thm:maximum_perturb_codeword} may be conservative with regard to free-objective-coding stability margins offered by Lem. \ref{thm:effective_resistance}, since $ |\delta_{uv}^{w}| \leq K_{uv} | \delta^{\theta}_{uv}| $. 
\end{rmk}


Another critical aspect arising from Thm. \ref{thm:maximum_perturb_codeword} is reported, i.e. the tradeoff between objective coding and robustness. 
\begin{fact}\label{fact:tradeoff}
	The encoding capability of $p_{uv}$ can be expressed (locally) in terms of  the Lipschitz constant $K_{uv}$, since, given an arbitrarily small neighborhood $U_{uv}^{\theta} := [a,b] \subseteq \Theta_{uv}$ centered around the points $\theta_{uv}$ with highest absolute slope $K_{uv}$, the image subset $P_{uv}(U_{uv}^{\theta}) = [p_{uv}(a),p_{uv}(b)] \subseteq \Pi_{uv}$ dilates\footnote{Dilations are intended in terms of the Lebesgue measure of a set.} as $K_{uv}$ increases. %
	On the other hand, as $K_{uv}$ 
	decreases, the maximum magnitude $\rho^\theta_{uv}$ of admissible deviations $\delta_{uv}^{\theta}$ grows, leading to a higher robustness w.r.t edge $(u,v)$. In particular, for $K_{uv}<1$, the robustness of \eqref{eq:SBDC} is higher w.r.t. the corresponding nominal system.
	
\end{fact}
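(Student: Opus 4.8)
The plan is to handle the two assertions of the statement independently, since the first is a purely local, differential-geometric property of the decoding map $p_{uv}$, whereas the second is an almost immediate consequence of the stability margin $\rho^\theta_{uv}$ already established in Thm.~\ref{thm:maximum_perturb_codeword}. I would introduce the Lebesgue measure $\mu$ as the yardstick for ``dilation'' (as in the footnote of the statement) and phrase the encoding-capability claim as a statement about the local expansion factor of $p_{uv}$.

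For the first claim, I would fix a point $\theta_{uv}^{\star}$ at which the absolute slope attains its supremum, i.e. $|p_{uv}^{\prime}(\theta_{uv}^{\star})| = K_{uv}$; such a point exists because, by \textit{(iii)}, $K_{uv} = \sup|p_{uv}^{\prime}|$, and by \textit{(i)} the map is non-constant, so $K_{uv}>0$. Since $p_{uv}^{\prime}$ is continuous and $|p_{uv}^{\prime}(\theta_{uv}^{\star})|>0$, it retains a constant sign on a sufficiently small window $U_{uv}^{\theta}=[a,b]$ about $\theta_{uv}^{\star}$, so $p_{uv}$ is strictly monotone there and $P_{uv}(U_{uv}^{\theta})$ is a genuine interval of measure $|p_{uv}(b)-p_{uv}(a)|$. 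Applying the mean value theorem then yields
\begin{equation}\label{eq:mvt_dilation}
	\frac{\mu(P_{uv}(U_{uv}^{\theta}))}{\mu(U_{uv}^{\theta})} = \frac{|p_{uv}(b)-p_{uv}(a)|}{b-a} = |p_{uv}^{\prime}(\xi)|,
\end{equation}
for some $\xi\in(a,b)$; letting $\mu(U_{uv}^{\theta})=b-a\to 0$ and using continuity of $p_{uv}^{\prime}$, the right-hand side tends to $|p_{uv}^{\prime}(\theta_{uv}^{\star})|=K_{uv}$. Thus the limiting dilation factor of $p_{uv}$ at its steepest point equals $K_{uv}$, so that for a fixed shrinking window the image measure obeys $\mu(P_{uv}(U_{uv}^{\theta}))\approx K_{uv}\,\mu(U_{uv}^{\theta})$ and grows monotonically with $K_{uv}$, which is precisely the asserted sense in which larger $K_{uv}$ confers greater encoding capability.

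For the second claim, I would simply read the margin off \eqref{eq:sgrepatt}. Because $\rho^\theta_{uv}=(K_{uv}\,\Rmc_{uv}(\Gmc))^{-1}$ and $\Rmc_{uv}(\Gmc)$ is pinned down by the topology and the nominal weights, the map $K_{uv}\mapsto\rho^\theta_{uv}$ is strictly decreasing on $(0,\infty)$; hence decreasing $K_{uv}$ enlarges the admissible deviation $\rho^\theta_{uv}$ on edge $(u,v)$, i.e.\ improves robustness. For the comparison against the nominal system I would contrast $\rho^\theta_{uv}$ with the bare weight margin $\Rmc_{uv}(\Gmc)^{-1}$ of Lem.~\ref{thm:effective_resistance} (cf.\ \eqref{eq:fundamineqres}): whenever $K_{uv}<1$ one has $\rho^\theta_{uv}=(K_{uv}\Rmc_{uv}(\Gmc))^{-1}>\Rmc_{uv}(\Gmc)^{-1}$, so the tolerable perturbation on the received codeword strictly exceeds the tolerable perturbation acting directly on the weight, giving the claimed robustness advantage of \eqref{eq:SBDC} over the nominal case.

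The main obstacle is making the first claim precise rather than computational: ``dilation as $K_{uv}$ increases'' is only exact in the infinitesimal limit, so I must emphasize that the comparison is across decoding maps of differing steepness evaluated on a shrinking window \emph{centered at the steepest point}, and that the passage to the limit in \eqref{eq:mvt_dilation} relies on continuity of $p_{uv}^{\prime}$. Here the roles of the hypotheses are worth isolating: concavity \textit{(ii)} forces $p_{uv}^{\prime}$ to be monotone, which both localizes the steepest point and guarantees the constant sign used to make $P_{uv}(U_{uv}^{\theta})$ an interval, while differentiability in \textit{(iii)} is what legitimizes the mean value theorem and the identification of $K_{uv}$ with the limiting expansion ratio.
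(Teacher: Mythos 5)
Your proposal is correct, but note that the paper never actually proves Fact~\ref{fact:tradeoff}: it is stated as an unproven observation, with the robustness half read directly off \eqref{eq:sgrepatt} of Thm.~\ref{thm:maximum_perturb_codeword} and compared against \eqref{eq:fundamineqres}, and the only formal follow-up is Prop.~\ref{prop:tradeoff}, which settles the affine case $p_{uv}(\eta)=b_{uv}\eta+a_{uv}$ where the tradeoff holds with equality globally. Your second paragraph coincides with the paper's implicit argument: $K_{uv}\mapsto(K_{uv}\,\Rmc_{uv}(\Gmc))^{-1}$ is strictly decreasing, and $K_{uv}<1$ gives $\rho^\theta_{uv}>\Rmc_{uv}(\Gmc)^{-1}$, the codeword-free margin of Lem.~\ref{thm:effective_resistance}. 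Your first paragraph goes beyond the paper by making the ``dilation'' claim precise via the mean value theorem; this is a genuine addition and it is sound, with two caveats worth flagging. First, hypothesis \textit{(iii)} provides differentiability and a Lipschitz bound but not continuity of $p_{uv}^{\prime}$, which you invoke both for sign constancy and for passing to the limit in your display; continuity does hold, but only because concavity \textit{(ii)} makes $p_{uv}^{\prime}$ monotone and a monotone derivative is continuous by Darboux's theorem, so this should be stated rather than assumed. Second, the paper's $K_{uv}$ is merely \emph{some} bound satisfying $|p_{uv}^{\prime}|\leq K_{uv}$, so your identification of $K_{uv}$ with an attained supremum at a point $\theta_{uv}^{\star}$ is an extra (mild) assumption---though it is the same one already implicit in the Fact's phrase ``points $\theta_{uv}$ with highest absolute slope $K_{uv}$,'' so you remain consistent with the statement being proved.
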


The next concluding proposition yields a deep insight to grasp the tradeoff arising from Fact \ref{fact:tradeoff} by also putting Lem. \ref{thm:effective_resistance} and Thm. \ref{thm:maximum_perturb_codeword} in comparison.
\begin{prop}\label{prop:tradeoff}
	Let $a_{uv}, b_{uv} \in \Rset{}$, $b_{uv} \neq 0$, and $(u,v) \in \Emc$ the single edge under attack. Then it holds that $U^{\theta}_{uv} = \Theta_{uv}$, $P_{uv}(U^{\theta}_{uv}) = \Pi_{uv}$ and \eqref{eq:sgrepatt} is exactly equivalent to \eqref{eq:fundamineqres}, that is $|\delta_{uv}^{w}|=K_{uv} | \delta^{\theta}_{uv}| $, if and only if $p_{uv}(\eta) = b_{uv} \eta +a_{uv}$.
\end{prop}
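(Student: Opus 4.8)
The plan is to prove the biconditional by treating the two implications separately, anchoring everything on the single-edge increment $\delta^{w}_{uv} = p_{uv}(\theta_{uv}+\delta^{\theta}_{uv}) - p_{uv}(\theta_{uv})$ and on the identity $K_{uv} = \sup_{\eta}|p'_{uv}(\eta)|$ supplied by characterization \textit{(iii)}. Throughout I will keep in mind that \eqref{eq:sgrepatt} and \eqref{eq:fundamineqres} coincide as admissibility conditions on the deviation precisely when the gain from $\delta^{\theta}_{uv}$ to $\delta^{w}_{uv}$ is the constant $K_{uv}$, which is the quantitative meaning of ``$|\delta_{uv}^{w}|=K_{uv}|\delta^{\theta}_{uv}|$'' in the statement.

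For the ``if'' direction I would assume $p_{uv}(\eta)=b_{uv}\eta+a_{uv}$ with $b_{uv}\neq 0$. Then $p'_{uv}(\eta)=b_{uv}$ is constant, so the sharpest Lipschitz constant is $K_{uv}=|b_{uv}|$ and \emph{every} point of $\Theta_{uv}$ attains the maximal absolute slope; reading this through Fact~\ref{fact:tradeoff}, the neighborhood $U^{\theta}_{uv}$ centered at the points of highest slope must fill the entire domain, giving $U^{\theta}_{uv}=\Theta_{uv}$. Since an affine map with nonzero slope is a bijection onto its range, it follows immediately that $P_{uv}(U^{\theta}_{uv})=p_{uv}(\Theta_{uv})=\Pi_{uv}$. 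Finally, a direct substitution yields $\delta^{w}_{uv}=b_{uv}\delta^{\theta}_{uv}$, hence $|\delta^{w}_{uv}|=|b_{uv}||\delta^{\theta}_{uv}|=K_{uv}|\delta^{\theta}_{uv}|$; inserting this identity into Lem.~\ref{thm:effective_resistance} turns \eqref{eq:fundamineqres} into \eqref{eq:sgrepatt} verbatim, establishing the claimed equivalence.

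For the ``only if'' direction I would assume the three bundled conditions and deduce the affine form. The decisive hypothesis is the equality $|\delta^{w}_{uv}|=K_{uv}|\delta^{\theta}_{uv}|$ holding for every base point $\theta_{uv}$ and every admissible $\delta^{\theta}_{uv}$. Dividing by $|\delta^{\theta}_{uv}|$ and letting $\delta^{\theta}_{uv}\to 0$ forces $|p'_{uv}(\theta_{uv})|=K_{uv}$ at every point, i.e.\ the absolute slope is constant and maximal on all of $\Theta_{uv}$ (equivalently, this is exactly what $U^{\theta}_{uv}=\Theta_{uv}$ encodes via Fact~\ref{fact:tradeoff}). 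Because $p_{uv}$ is concave and differentiable everywhere by \textit{(ii)}--\textit{(iii)}, its derivative $p'_{uv}$ is continuous and monotone non-increasing; a continuous function on the interval $\Theta_{uv}$ taking only the two values $\pm K_{uv}$ must be constant, so $p'_{uv}\equiv b_{uv}$ with $b_{uv}=\pm K_{uv}$. Integrating gives $p_{uv}(\eta)=b_{uv}\eta+a_{uv}$, and $b_{uv}\neq 0$ because characterization \textit{(i)} forces $K_{uv}>0$.

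The main obstacle is this last regularity step in the converse: passing from ``$|p'_{uv}|$ is constant'' to ``$p'_{uv}$ is constant.'' One must rule out that $p'_{uv}$ switches between $+K_{uv}$ and $-K_{uv}$, which I would exclude using the continuity of the derivative of an everywhere-differentiable concave function (or, alternatively, Darboux's intermediate-value property of derivatives), preventing any jump between the two admissible values on the connected domain $\Theta_{uv}$. Once the constant slope is in hand, the surjectivity $P_{uv}(U^{\theta}_{uv})=\Pi_{uv}$ and the exact matching of the two stability thresholds are routine bookkeeping, so the biconditional closes.
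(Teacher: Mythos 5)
Your proof is correct and follows essentially the same route as the paper's: the ``if'' direction exploits the constant slope $K_{uv}=|b_{uv}|$ of the affine map to get $U^{\theta}_{uv}=\Theta_{uv}$, $P_{uv}(U^{\theta}_{uv})=\Pi_{uv}$, and equality in \eqref{eq:concavity}, while the ``only if'' direction argues that exact equality $|\delta^{w}_{uv}|=K_{uv}|\delta^{\theta}_{uv}|$ at all base points forces affinity. In fact, your converse is more rigorous than the paper's, which merely asserts necessity ``since \eqref{eq:concavity} holds with the equality''; your limit argument giving $|p'_{uv}|\equiv K_{uv}$ pointwise, followed by the Darboux/continuity argument that a concave everywhere-differentiable function whose derivative takes only the values $\pm K_{uv}$ on a connected domain must have constant derivative, is exactly the detail needed to make that assertion airtight.
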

\begin{proof}
	As $p_{uv}(\eta) = b_{uv} \eta + a_{uv}$, all points $\eta \in \Theta_{uv}$ have the same absolute slope $K_{uv} = |b_{uv}|$, thus 
	implying $P_{uv}(\Theta_{uv}) = \Pi_{uv}$. Also, condition $p_{uv}(\eta) = b_{uv} \eta + a_{uv}$, with $b_{uv} \neq 0$, is sufficient and necessary to obtain $|\delta_{uv}^{w}|  = K_{uv}|\delta_{uv}^{\theta}|$ for all $ \eta \in \Theta_{uv}$, since \eqref{eq:concavity} applied to edge $(u,v)$ holds with the equality. 
\end{proof}

Prop. \ref{prop:tradeoff} shows the unique scenario where the tradeoff in Fact \ref{fact:tradeoff} holds strictly, namely  it holds globally $\forall \eta \in \Theta_{uv}$, also allowing \eqref{eq:sgrepatt} not to be conservative\footnote{In other words, the conservatism expressed in Thm. \ref{thm:maximum_perturb_codeword} arises only if decoding functions that are nonlinear in their argument are adopted.} w.r.t. \eqref{eq:fundamineqres}.


\section{An application to decentralized estimation} \label{sec:PI-ACE}
Decentralized estimation and control of graph connectivity for mobile sensor networks is often required in practical applications \cite{ZelazoFranchiBulthoff2015, YangFreemanGordon2010}. 
As outlined in \cite{YangFreemanGordon2010}, the Fiedler eigenvalue and eigenvector of a graph can be estimated in a distributed fashion by employing the so-called \textit{decentralized power iteration algorithm} (DPIA) with a uniformly weighted \textit{PI average consensus estimator} (PI-ACE). In this setup, $n$ agents measure a time-varying scalar $c_i=c_i(t)$, and by communication over an undirected and connected graph estimate the average of the signal, $\hat{c}(t) = n^{-1} \sum_{i=1}^{n} c_{i}(t)$. By considering estimation variables $y_{i} = y_{i}(t) \in \mathbb{R}$ and $q_{i} = q_{i}(t) \in \mathbb{R}$, $i = 1, \ldots, n$, the continuous-time estimation dynamics in question associated to the $i$-th agent is given by \cite{YangFreemanGordon2010}
\begin{equation}\label{eq:PIaverageconsensusestimator}
	\begin{cases}
		\dot{y}_{i} = \alpha(c_{i}-y_{i}) - K_{P} \sum\limits_{j \in \Nmc_{i}} (y_{i}-y_{j}) +K_{I} \sum\limits_{j \in \Nmc_{i}} (q_{i}-q_{j}) \\
		\dot{q}_{i} = -K_{I} \sum\limits_{j \in \Nmc_{i}} (y_{i}-y_{j})
	\end{cases},
\end{equation}
where $\alpha > 0$ represents the rate new information replaces old information and $K_{P}$, $K_{I} >0$ are the PI estimator gains. Remarkably, the latter constants play an important role in the convergence rate of estimator \eqref{eq:PIaverageconsensusestimator}, as the the estimation dynamics is demanded to converge fast enough to provide a good approximation of $\hat{c}=\hat{c}(t)$ (which is determined by each component of $y$, i.e. $\lim_{t \rightarrow \infty} |\hat{c}(t)-y_{i}(t)| = 0$ for $i=1,\ldots,n$ is desired).
In the sequel, we thus firstly provide a spectral characterization pertaining such an estimator dynamics and then we adapt the results obtained in Sec. \ref{sec:robust-analysis} to this specific framework, finally illustrating the criticalities of the DPIA.

\subsection{On the spectral properties of the PI-ACE}

Setting $y = \begin{bmatrix}
	y_{1} & \cdots & y_{n}
\end{bmatrix}^{\top}$, $q = \begin{bmatrix}
	q_{1} & \cdots & q_{n}
\end{bmatrix}^{\top}$ and $x = \begin{bmatrix}
	y^{\top} & q^{\top}
\end{bmatrix}^{\top}$, $\c = \begin{bmatrix}
	\alpha c^{\top} & \zerovec{n}^{\top}
\end{bmatrix}^{\top}$, dynamics \eqref{eq:PIaverageconsensusestimator} can be also rewritten as
\begin{equation}\label{eq:PIaverageconsensusestimator_}
	\dot{x} = -M x + \c,
\end{equation}
such that
\begin{equation}\label{eq:M}
	M = \begin{bmatrix}
		K_{P}L+\alpha I_{n} & -K_{I}L \\
		K_{I}L & \zerovec{n \times n}
	\end{bmatrix},
\end{equation}
where, throughout all this section, $L$ stands for the unweighted graph Laplacian associated to the unweighted network $\Gmc_{0} = (\Vmc, \Emc, \Wmc_{0})$, $\Wmc_{0} = \{1\}_{k=1}^{m}$.
Clearly, \eqref{eq:PIaverageconsensusestimator_} can be thought as a driven second-order consensus dynamics whose stability properties depend on the eigenvalues $\lambda_{l}^{M}$, $l= 1,\ldots, 2n$, of state matrix $M$. In this direction, we characterize the eigenvalues of $M$ in function of those of $L$ by means of the following proposition to grasp an essential understanding of the convergence behavior taken by dynamics \eqref{eq:PIaverageconsensusestimator_}. 

\begin{prop}\label{Prop:eigM}
	The eigenvalues of matrix $M$, defined as in \eqref{eq:M}, are given by
	\begin{equation}\label{eq:eigsofMinfuncofeigsofL}
		\lambda_{2(i-1)+j}^{M} = \varphi_{i} + (-1)^{j} \sigma_{i}, \quad i=1,\ldots,n, ~\forall j\in \{1,2\},
	\end{equation}
	where
	\begin{equation}\label{eq:eigsofMinfuncofeigsofL_particular}
		\begin{cases}
			\varphi_{i} = (\alpha+K_{P}\lambda^{L}_{i})/2\\
			\sigma_{i} = \sqrt{\varphi_{i}^{2}-(K_{I}\lambda^{L}_{i})^{2}}, \quad \text{s.t. } \Im[\sigma_{i}] \geq 0
		\end{cases}.
	\end{equation}
	Furthermore, $\lambda_{1}^{M} = 0$ and $\Re [\lambda_{l}^{M}] > 0$ for $l = 2,\ldots,2n$.
\end{prop}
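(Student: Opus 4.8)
The plan is to exploit the single structural feature that makes everything work: every block of $M$ is a polynomial in the symmetric unweighted Laplacian $L$, so the four blocks commute pairwise and are simultaneously diagonalized by the orthogonal matrix $U$ for which $L = U\Lambda U^{\top}$, $\Lambda = \diag{i=1}{n}{\lambda_{i}^{L}}$. First I would write the characteristic matrix
\begin{equation*}
	\mu I_{2n} - M = \begin{bmatrix} \mu I_{n} - K_{P} L - \alpha I_{n} & K_{I} L \\ -K_{I} L & \mu I_{n} \end{bmatrix},
\end{equation*}
and, because the lower blocks commute, apply the determinant identity for block matrices with commuting entries to collapse this $2n \times 2n$ determinant to an $n \times n$ one:
\begin{equation*}
	\det(\mu I_{2n} - M) = \det\!\big( \mu^{2} I_{n} - \mu(K_{P} L + \alpha I_{n}) + K_{I}^{2} L^{2} \big).
\end{equation*}

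Next I would conjugate the argument by $U$, which diagonalizes $\mu^{2} I_{n} - \mu(K_{P} L + \alpha I_{n}) + K_{I}^{2} L^{2}$ and factors the characteristic polynomial into $n$ independent quadratics,
\begin{equation*}
	\det(\mu I_{2n} - M) = \prod_{i=1}^{n} \big( \mu^{2} - (\alpha + K_{P} \lambda_{i}^{L}) \mu + K_{I}^{2} (\lambda_{i}^{L})^{2} \big).
\end{equation*}
Solving each quadratic by the quadratic formula gives the two roots $\mu = \varphi_{i} \pm \sigma_{i}$ with $\varphi_{i}$ and $\sigma_{i}$ exactly as in \eqref{eq:eigsofMinfuncofeigsofL_particular}; identifying the $\pm$ with $(-1)^{j}$ reproduces \eqref{eq:eigsofMinfuncofeigsofL}. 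The side constraint $\Im[\sigma_{i}] \geq 0$ only fixes the branch of the square root and leaves the unordered pair of roots unchanged.

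Finally I would settle the sign claims. For $i=1$ one has $\lambda_{1}^{L} = 0$, so the quadratic degenerates to $\mu^{2} - \alpha\mu = 0$ with roots $0$ and $\alpha$, which pins down $\lambda_{1}^{M} = 0$ (the $j=1$ root). For $i \geq 2$ connectivity forces $\lambda_{i}^{L} > 0$, so each quadratic has strictly positive root-sum $\alpha + K_{P}\lambda_{i}^{L}$ and strictly positive root-product $K_{I}^{2}(\lambda_{i}^{L})^{2}$. I would then split into cases: if $\sigma_{i}$ is real, the roots are real with equal sign (positive product) and positive sum, hence both positive; if $\sigma_{i}$ is purely imaginary, the common real part is $\varphi_{i} = (\alpha + K_{P}\lambda_{i}^{L})/2 > 0$. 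Together with the root $\alpha$ coming from $i=1$, this gives $\Re[\lambda_{l}^{M}] > 0$ for $l = 2,\dots,2n$.

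The arithmetic here is entirely routine; the only steps demanding care are the reduction itself -- justifying that the commuting-block determinant identity applies and that conjugation by $U$ genuinely diagonalizes the resulting quadratic-in-$L$ matrix -- and keeping the real-part argument cleanly separated into the real and purely imaginary $\sigma_{i}$ cases so that no root is accidentally miscounted as having nonpositive real part.
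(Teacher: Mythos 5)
Your proof is correct, but it follows a genuinely different route from the paper. The paper works at the level of eigenvectors: it writes the eigenvalue equation $M\omega = \lambda\omega$ as a coupled system, posits the ansatz $\omega_{1} = s\,\omega_{2}$ tying eigenvectors of $M$ to eigenvectors of $L$, solves a quadratic in the scaling factor $s$, and then must patch in the degenerate configurations by hand (the cases $\mu = 0$ and $\omega_{1} = \zerovec{n}$, plus an ``exclusion and inspection'' argument to recover $\lambda_{2}^{M} = \alpha$, which the ansatz cannot reach). You instead work with the characteristic polynomial: since the lower blocks of $\mu I_{2n} - M$ commute (trivially, as one of them is $\mu I_{n}$), the commuting-block determinant identity collapses the $2n \times 2n$ determinant to $\det\bigl(\mu^{2} I_{n} - \mu(K_{P}L + \alpha I_{n}) + K_{I}^{2}L^{2}\bigr)$, and orthogonal diagonalization of the symmetric $L$ factors this into $n$ scalar quadratics. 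Your route buys a cleaner accounting: all $2n$ eigenvalues, with multiplicities, fall out of the factorization at once, and the $i=1$ quadratic $\mu^{2} - \alpha\mu = 0$ delivers both $\lambda_{1}^{M} = 0$ and $\lambda_{2}^{M} = \alpha$ with no special pleading. What it gives up is the eigenvector information the paper's argument produces as a by-product (e.g., that the zero eigenvalue has eigenvector $\omega_{1} = \zerovec{n}$, $\omega_{2} \in \langle\ones_{n}\rangle$), which is of independent interest for the consensus dynamics. Your sign argument for $\Re[\lambda_{l}^{M}] > 0$ — positive root-sum and root-product in the real-discriminant case, common real part $\varphi_{i} > 0$ in the imaginary case — is equivalent to the paper's case split and equally sound; just note that both arguments rely on connectivity of $\Gmc$ to guarantee $\lambda_{i}^{L} > 0$ for $i \geq 2$, an assumption stated earlier in the paper that you correctly invoke.
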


The proof of Prop. \ref{Prop:eigM} can be found in Appendix and, for a further discussion on the convergence properties of system \eqref{eq:PIaverageconsensusestimator_} and the estimation of signal $\hat{c}(t)$, the reader is referred to \cite{YangFreemanGordon2010,FreemanYangLynch2006}. 
In fact, in the sequel, we aim at the adaptation of theoretical results obtained in Sec.  \ref{sec:robust-analysis} to this specific framework. Considering that $K_{P}$, $K_{I}$ and $\alpha$ can be seen as parameters to be sent by the network manager, it is, indeed, possible to discuss the following relevant practical scenario. 

\subsection{Application scenario}
We now consider an application scenario with a couple of setups based on the perturbed second-order consensus protocol
\begin{equation}\label{eq:perturbedSBDC_secondorder}
	\begin{cases}
		\dot{y}_{i} &\hspace{-.2cm}=  p_{ij}^{(\alpha)}(\theta_{ij}+ \delta^{\theta}_{ij})(c_{i}-y_{i})-  {\underset{{j \in \Nmc_{i}}}{\sum}}  p_{ij}^{(K_{P})}(\theta_{ij} + \delta^{\theta}_{ij}) h_{ij}(y) \\
		&+\underset{j \in \Nmc_{i}}{\sum} p_{ij}^{(K_{I})}(\theta_{ij} + \delta^{\theta}_{ij})  h_{ij}(q)  \\ 
		\dot{q}_{i} &\hspace{-.2cm}=  - {\underset{j \in \Nmc_{i}}{\sum}} p_{ij}^{(K_{I})}(\theta_{ij} + \delta^{\theta}_{ij}) h_{ij}(y),  
	\end{cases}
\end{equation}
and defined through decoding functions and information localization functions
\begin{align}\label{eq:p_secondorder}
	p_{ij}^{(\varsigma)}(\theta_{ij}) &= \begin{cases}
		\varsigma \quad\quad \forall (i,j)\in \Emc;\\  
		0 \quad~~~ \text{otherwise};
	\end{cases} \\
	h_{ij}(\omega) &= \begin{cases}
		\omega_{i} - \omega_{j} \quad \forall (i,j)\in \Emc;\\
		0 \quad~~~~~~~ \text{otherwise}.
	\end{cases} \label{eq:h_secondorder}
\end{align}

In the first setup, named $S1$, we assume that a perturbation over a single codeword affects parameter $K_{P}$, thus changing quantities $\varphi_{i}$. Also, we suppose that gains $\alpha$, $K_{I}$ are not perturbed and are correctly received (or already known) by all agents in the network $\Gmc$.\\ 

It is worth to note that all the results on robustness given so far are directed towards the preservation of the positive semi-definiteness of the weighted Laplacian matrix, which is also related to the stability of the corresponding consensus protocol. In particular, in this application, terms $(K_{P}\lambda^{L}_{i})$ can be thought as eigenvalues of the weighted Laplacian $L_{P} = K_{P}EE^{\top}$. In addition, as the proof of Prop. \ref{Prop:eigM} reveals, since $\varphi_{i} > 0$ for all $i=1,...,n$ then $\Re [\lambda_{l}^{M}] > 0$ for all $l = 2,\ldots,2n$ is ensured. Hence, as far as the perturbed values of $\varphi_{i}$, $i=1,...,n$, remain strictly positive for any value of $\alpha>0$ then stability for a perturbed version of protocol \eqref{eq:PIaverageconsensusestimator_} can be guaranteed, since each $\varphi_{i}$ can also be seen as an eigenvalue of matrix $M_{P} = (\alpha I_{n} +L_{P})/2$. Indeed, the worst case in this setup arises when $\alpha$ is arbitrarily small, implying that the stability of \eqref{eq:PIaverageconsensusestimator_} can be guaranteed if $L_{P}$ preserves its positive semidefiniteness under attack. Consequently, inequality \eqref{eq:sgrepatt} can be applied to this setup, accounting for an auxiliary graph $\Gmc_{P}$ constructed from $L_{P}$, whenever a single edge codeword associated to weight $K_{P}$ is perturbed. This reasoning is better formalized in the following concluding corollary.

\begin{cor} \label{thm:maximum_perturb_codeword_second_KP}
	Assume the characterization \textit{(i)-(iii)} in Sec. \ref{sec:ressbdc} holds for objective decoding functions $p_i$. Let $\omega \in \mathbb{R}^{n}$, $\varsigma \in \mathbb{R}$ and $\Gmc_{P} = (\Vmc,\Emc,\Wmc_{P})$, with $\Wmc_{P} = \{K_{P}\}_{k=1}^{m}$, be a graph constructed from $L_{P} = K_{P} EE^{\top}$, given $K_{P} > 0$.
	Then, for an injection attack $\delta^{\theta} = \begin{bmatrix}
		\delta^{\theta\top}_{\alpha} & \delta^{\theta\top}_{K_{P}} & \delta^{\theta\top}_{K_{I}}
	\end{bmatrix}^{\top} = \begin{bmatrix}
		\zerovec{n^{2}}^{\top} & \delta^{\theta\top}_{K_{P}} & \zerovec{n^{2}}^{\top}
	\end{bmatrix}^{\top}$, $\delta^{\theta}_{K_{P}} \in \bm{\Delta}^{\theta}$, on a single edge $(u,v)\in \Emc$, i.e., with $\delta^\theta_{K_{P},ij} = 0$ for all $(i,j) \in \Emc \setminus \{(u,v)\}$, protocol \eqref{eq:perturbedSBDC_secondorder}-\eqref{eq:p_secondorder}-\eqref{eq:h_secondorder}
	is stable for all $\alpha , K_{P} , K_{I} > 0$ and $\delta^\theta_{uv}$ such that
	\begin{equation}\label{eq:sgrepatt_second_KP}
		|\delta^\theta_{uv}| \leq \rho_{P,uv}^{\theta} = (K_{uv} \Rmc_{uv}(\Gmc_{P}) )^{-1},
	\end{equation}
	independently from the values taken by any codeword $\theta = \begin{bmatrix}
		\theta_{\alpha}^{\top} & \theta_{K_{P}}^{\top} & \theta_{K_{I}}^{\top}
	\end{bmatrix}^{\top} \in \Theta \subseteq \mathbb{R}^{3n^{2}}$.
\end{cor}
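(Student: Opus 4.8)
The plan is to reduce the robust stability of the perturbed estimator \eqref{eq:perturbedSBDC_secondorder}--\eqref{eq:p_secondorder}--\eqref{eq:h_secondorder} to the preservation of positive semidefiniteness of a single auxiliary Laplacian, and then transcribe Thm.~\ref{thm:maximum_perturb_codeword} to the graph $\Gmc_P$. First I would write down the closed-loop state matrix of the $S1$ setup: since $\alpha$ and $K_I$ are received intact and only $p_{uv}^{(K_P)}$ is attacked on the single edge $(u,v)$, the homogeneous part of \eqref{eq:perturbedSBDC_secondorder} reads $\dot{x} = -M_{\Delta} x$ with
\begin{equation*}
	M_{\Delta} = \begin{bmatrix} L_{P,\Delta} + \alpha I_n & -K_I L \\ K_I L & \zerovec{n \times n}\end{bmatrix},
\end{equation*}
where $L_{P,\Delta}$ is the $K_P$-weighted Laplacian of $\Gmc_P$ with its $(u,v)$ weight deviated, while the off-diagonal blocks still carry the unperturbed $K_I L$. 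Note that, unlike in the nominal matrix \eqref{eq:M}, the blocks of $M_{\Delta}$ no longer share a common eigenbasis, so the exact formula \eqref{eq:eigsofMinfuncofeigsofL_particular} of Prop.~\ref{Prop:eigM} cannot be invoked verbatim.

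Next I would exploit the block structure directly. Because $-K_I L$ and $K_I L$ are negatives of one symmetric matrix, the symmetric part of $M_{\Delta}$ is block diagonal, $\tfrac{1}{2}(M_{\Delta}+M_{\Delta}^{\top}) = \mathrm{blkdiag}(L_{P,\Delta}+\alpha I_n,\, \zerovec{n\times n})$. Hence, whenever $L_{P,\Delta} \succeq 0$ one has $L_{P,\Delta}+\alpha I_n \succ 0$ for every $\alpha>0$, so for any eigenpair $(\lambda,v)$ of $M_{\Delta}$ the Rayleigh quotient gives $\Re[\lambda] = v^{*}\tfrac{1}{2}(M_{\Delta}+M_{\Delta}^{\top})v/\|v\|^{2} \geq 0$. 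To upgrade this to strict stability off the consensus mode, I would partition $v=(v_{1}^{\top},v_{2}^{\top})^{\top}$ and observe that any purely imaginary $\lambda$ forces the top block's contribution to vanish, i.e. $v_{1}^{*}(L_{P,\Delta}+\alpha I_n)v_{1} = 0$; positive definiteness then yields $v_{1} = 0$, and feeding this back into $M_{\Delta}v=\lambda v$ leaves $L v_{2} = 0$ and $\lambda v_{2}=0$, whence $v_{2}\in\ker L = \angb{\ones_n}$ and $\lambda = 0$. Thus $M_{\Delta}$ carries a \emph{simple} zero eigenvalue with eigenvector $(\zerovec{n}^{\top},\ones_n^{\top})^{\top}$ (the conserved mode) and all remaining eigenvalues strictly in the open right half-plane. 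This is exactly the statement that the perturbed $\varphi_{i}$, read off as the eigenvalues of $(\alpha I_n + L_{P,\Delta})/2$ in the spirit of \eqref{eq:eigsofMinfuncofeigsofL_particular}, stay positive; letting $\alpha \to 0^{+}$ identifies $L_{P,\Delta}\succeq 0$ as the worst-case requirement, uniformly in $K_I$.

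With this reduction in hand, the remainder is a transcription of Thm.~\ref{thm:maximum_perturb_codeword} to $\Gmc_P$. Applying the concavity--Lipschitz inequality \eqref{eq:concavity} to $p_{uv}^{(K_P)}$, the codeword deviation $\delta^{\theta}_{uv}$ produces an edge-weight deviation bounded by $|\delta^{w}_{uv}| \leq K_{uv}|\delta^{\theta}_{uv}|$ on edge $(u,v)$ of $\Gmc_P$. By Lem.~\ref{thm:effective_resistance} applied to $\Gmc_P$, the perturbed $L_{P,\Delta}$ stays positive semidefinite precisely when $|\delta^{w}_{uv}| \leq \Rmc_{uv}(\Gmc_P)^{-1}$; imposing $K_{uv}|\delta^{\theta}_{uv}| \leq \Rmc_{uv}(\Gmc_P)^{-1}$ then delivers \eqref{eq:sgrepatt_second_KP}. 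Since neither $\Rmc_{uv}(\Gmc_P)$ nor $K_{uv}$ involves $\alpha$ or $K_I$, the bound $\rho^{\theta}_{P,uv}$ holds uniformly over all $\alpha, K_I>0$, with $K_P$ entering only through the auxiliary graph $\Gmc_P$.

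I expect the main obstacle to be the spectral reduction of the second step rather than the final transcription. The perturbation destroys the common eigenbasis that Prop.~\ref{Prop:eigM} relies on, so one cannot simply read the perturbed eigenvalues from \eqref{eq:eigsofMinfuncofeigsofL}; the skew-symmetric off-diagonal structure is what rescues the argument, reducing the real-part question to the definiteness of $L_{P,\Delta}+\alpha I_n$ alone. The delicate points are verifying that no eigenvalue can leave the imaginary axis except through loss of positive semidefiniteness of $L_{P,\Delta}$, and confirming (via $\ones_n \notin \mathrm{range}(L)$, which rules out a generalized eigenvector) that the single marginal mode at the origin is the conserved consensus direction rather than a defective one.
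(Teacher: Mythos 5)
Your proof is correct, and it reaches the paper's conclusion by a genuinely different and more careful route. The paper's own proof is a single line --- the corollary is declared ``a direct consequence of Prop.~\ref{Prop:eigM} applied to Thm.~\ref{thm:maximum_perturb_codeword} within setup $S1$'' --- with the substance placed in the paragraph preceding the statement: the perturbed $\varphi_i$ are read off as eigenvalues of $(\alpha I_n + L_{P,\Delta})/2$, the worst case $\alpha\to 0^{+}$ reduces stability to $L_{P,\Delta}\succeq 0$, and \eqref{eq:sgrepatt} is then applied on the auxiliary graph $\Gmc_P$. Your final transcription step (Lipschitz bound $|\delta^{w}_{uv}|\leq K_{uv}|\delta^{\theta}_{uv}|$ plus Lem.~\ref{thm:effective_resistance} on $\Gmc_P$) coincides with theirs; where you differ is the spectral reduction. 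The paper implicitly extends the closed form \eqref{eq:eigsofMinfuncofeigsofL}--\eqref{eq:eigsofMinfuncofeigsofL_particular} to the perturbed matrix, yet that formula was derived from the ansatz $\omega_1 = s\,\omega_2$ with $\omega_2$ an eigenvector of $L$, i.e., under the assumption that every block is a polynomial in one and the same $L$; a single-edge attack makes the $(1,1)$ block $K_P L + \delta^{w}_{uv}(E\mathbf{e}_z)(E\mathbf{e}_z)^{\top} + \alpha I_n$, which no longer shares an eigenbasis with $L$ --- precisely the loose end you flag. Your substitute argument (the skew pairing of the off-diagonal blocks makes the symmetric part of $M_{\Delta}$ equal to the block diagonal of $L_{P,\Delta}+\alpha I_n$ and $\zerovec{n\times n}$; Rayleigh quotients bound the real parts; an imaginary-axis eigenvector must have $v_1=\zerovec{n}$, forcing $\lambda=0$ and $v_2\in\angb{\ones_n}$; and $\ones_n\notin\mathrm{range}(L)$ excludes a Jordan block at the origin) is rigorous and self-contained, and it turns the paper's worst-case heuristic into a theorem: $L_{P,\Delta}\succeq 0$ suffices uniformly in $\alpha, K_I>0$, with a simple zero eigenvalue. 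In short, the paper buys brevity at the cost of an unjustified commutation step; your argument buys airtightness, including the semisimplicity of the marginal mode, which the paper never verifies for the perturbed system.
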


\begin{proof}
	The result is a direct consequence of Prop. \ref{Prop:eigM} applied to Thm. \ref{thm:maximum_perturb_codeword} within setup $S1$, which is characterized by \eqref{eq:perturbedSBDC_secondorder}-\eqref{eq:p_secondorder}-\eqref{eq:h_secondorder}.
\end{proof}

In the second setup, named $S2$, we differently assume that only three scalar subcodewords $\theta_{\alpha}$, $\theta_{K_{P}}$ and $\theta_{K_{I}}$, constituting codeword $\theta = \begin{bmatrix}
	\theta_{\alpha} & \theta_{K_{P}} & \theta_{K_{I}}
\end{bmatrix}^{\top} \in \Theta \subseteq \mathbb{R}^{3}$, are broadcast by the network manager. This framework can be motivated by the attempt to reduce computational burden, network complexity or overall energy consumption. Each agent $i$ then receives $\theta$ and uses three decoding functions $p_{ij}^{(\alpha)}(\theta_{ij}) = p^{(\alpha)}(\theta_{\alpha})$, $p_{ij}^{(K_{P})}(\theta_{ij}) = p^{(K_{P})}(\theta_{K_{P}})$, $p_{ij}^{(K_{I})}(\theta_{ij}) = p^{(K_{I})}(\theta_{K_{I}})$ for all $(i,j) \in \Emc$ to unveil the weights $\alpha$, $K_{P}$, $K_{I}$ encoded in $\theta_{\alpha}$, $\theta_{K_{P}}$, $\theta_{K_{I}}$, respectively.\\ 
With such a preliminary description for $S2$, we now provide the following robust consensus guarantee.

\begin{thm}\label{thm:secordestallpert}
	Assume the characterization \textit{(i)-(iii)} in Sec. \ref{sec:ressbdc} holds for objective decoding functions $p^{(\alpha)}$, $p^{(K_{P})}$, $p^{(K_{I})}$ with Lipschitz constants $K_{\alpha} , K_{K_{P}} , K_{K_{I}} > 0$, respectively. 
	Let $\delta^{\theta} = \begin{bmatrix}
		\delta^{\theta}_{\alpha} & \delta^{\theta}_{K_{P}} & \delta^{\theta}_{K_{I}}
	\end{bmatrix}^{\top} $, with $\delta^{\theta}_{\alpha},\delta^{\theta}_{K_{P}},\delta^{\theta}_{K_{I}} \in \BDelta^{\theta}$ scalar time-varying perturbations, be the an injection attack affecting all the edges in the network.
	Then, the perturbed consensus protocol \eqref{eq:perturbedSBDC_secondorder}-\eqref{eq:p_secondorder}-\eqref{eq:h_secondorder}
	reaches agreement for all $\alpha , K_{P} , K_{I} > 0$ and $ \delta^{\theta}_{\alpha},\delta^{\theta}_{K_{P}},\delta^{\theta}_{K_{I}} $ such that
	\begin{equation}\label{eq:stabconds_cons_3sc}
		\begin{cases}
			|\delta^{\theta}_{\alpha}| < K_{\alpha}^{-1} \alpha \\
			|\delta^{\theta}_{K_{P}}| < (\lambda_{n}^{L} K_{K_{P}})^{-1} (\alpha - K_{\alpha}|\delta^{\theta}_{\alpha}|+\lambda_{n}^{L} K_{P}) \\
			|\delta^{\theta}_{K_{I}}| < K_{K_{I}}^{-1} K_{I}
		\end{cases},
	\end{equation}
	independently from the values taken by any codeword $\theta = \begin{bmatrix}
		\theta_{\alpha} & \theta_{K_{P}} & \theta_{K_{I}} 
	\end{bmatrix}^{\top} \in \Theta \subseteq \mathbb{R}^{3}$.
\end{thm}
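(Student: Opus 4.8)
The plan is to collapse the perturbed estimator into a single driven second-order consensus system of the form \eqref{eq:PIaverageconsensusestimator_} whose state matrix carries the \emph{decoded} (attacked) gains, and then to apply the spectral characterization of Prop.~\ref{Prop:eigM} to that perturbed matrix. First I would observe that in setup $S2$ the three decoding functions $p^{(\alpha)}$, $p^{(K_{P})}$, $p^{(K_{I})}$ return global scalars that are applied uniformly to every edge; consequently the underlying unweighted Laplacian $L$ (hence its eigenvalues $\lambda_{i}^{L}$) is untouched, and protocol \eqref{eq:perturbedSBDC_secondorder}-\eqref{eq:p_secondorder}-\eqref{eq:h_secondorder} takes exactly the form $\dot{x} = -\hat{M}x + \c$, where $\hat{M}$ is built as in \eqref{eq:M} but with $\alpha, K_{P}, K_{I}$ replaced by $\hat{\alpha} = p^{(\alpha)}(\theta_{\alpha}+\delta^{\theta}_{\alpha})$, $\hat{K}_{P} = p^{(K_{P})}(\theta_{K_{P}}+\delta^{\theta}_{K_{P}})$, $\hat{K}_{I} = p^{(K_{I})}(\theta_{K_{I}}+\delta^{\theta}_{K_{I}})$. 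Invoking the concavity-plus-Lipschitz argument behind \eqref{eq:concavity} in the proof of Thm.~\ref{thm:maximum_perturb_codeword}, each decoded gain is the nominal value plus a bounded additive deviation, e.g. $\hat{\alpha} = \alpha + \delta^{w}_{\alpha}$ with $|\delta^{w}_{\alpha}| \leq K_{\alpha}|\delta^{\theta}_{\alpha}|$, and analogously for $\hat{K}_{P}, \hat{K}_{I}$. Since the eigenvalue formula \eqref{eq:eigsofMinfuncofeigsofL}-\eqref{eq:eigsofMinfuncofeigsofL_particular} is a purely algebraic consequence of the block structure of \eqref{eq:M}, it remains valid for $\hat{M}$ with $\hat{\varphi}_{i},\hat{\sigma}_{i}$ computed from $\hat{\alpha},\hat{K}_{P},\hat{K}_{I}$.

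Next I would isolate the exact spectral conditions for $-\hat{M}$ to drive the ensemble to agreement, namely that $\hat{M}$ retain a \emph{simple} zero eigenvalue with every other eigenvalue in the open right half-plane. Reading off the perturbed pairs $\hat{\varphi}_{i}\pm\hat{\sigma}_{i}$ with $\hat{\sigma}_{i}=\sqrt{\hat{\varphi}_{i}^{2}-(\hat{K}_{I}\lambda_{i}^{L})^{2}}$, the index $i=1$ (where $\lambda_{1}^{L}=0$) gives the pair $\{0,\hat{\alpha}\}$, so the zero eigenvalue is simple and isolated precisely when $\hat{\alpha}>0$. For $i\geq 2$, since $\hat{\sigma}_{i}$ depends on $\hat{K}_{I}$ only through $(\hat{K}_{I}\lambda_{i}^{L})^{2}$, whenever $\hat{\varphi}_{i}>0$ one has $0\leq\hat{\sigma}_{i}<\hat{\varphi}_{i}$ in the real case and $\Re[\hat{\varphi}_{i}\pm\hat{\sigma}_{i}]=\hat{\varphi}_{i}>0$ in the imaginary case, so both eigenvalues lie strictly in the right half-plane; the strictness requires $\hat{K}_{I}\lambda_{i}^{L}\neq 0$, i.e. $\hat{K}_{I}\neq 0$ for $i\geq 2$. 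Hence the stability requirement collapses to the three scalar conditions $\hat{\alpha}>0$, $\hat{K}_{I}\neq 0$, and $\hat{\varphi}_{i}=(\hat{\alpha}+\hat{K}_{P}\lambda_{i}^{L})/2>0$ for all $i\geq 2$.

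The final step is to convert these into the bounds \eqref{eq:stabconds_cons_3sc} via worst-case estimates. The conditions $\hat{\alpha}>0$ and $\hat{K}_{I}>0$ (the latter implying $\hat{K}_{I}\neq 0$) follow at once from $\hat{\alpha}\geq\alpha-K_{\alpha}|\delta^{\theta}_{\alpha}|$ and $\hat{K}_{I}\geq K_{I}-K_{K_{I}}|\delta^{\theta}_{K_{I}}|$, yielding the first and third inequalities. The middle condition is where the main obstacle lies: because the attack may drive $\hat{K}_{P}$ \emph{negative}, the quantity $\hat{\varphi}_{i}$ is minimized over $i\geq 2$ not at $\lambda_{2}^{L}$ but at the largest eigenvalue $\lambda_{n}^{L}$, so demanding $\hat{\alpha}+\hat{K}_{P}\lambda_{n}^{L}>0$ secures $\hat{\varphi}_{i}>0$ for every $i\geq 2$ irrespective of the sign of $\hat{K}_{P}$ (when $\hat{K}_{P}\geq 0$ the bound is automatic from $\hat{\alpha}>0$). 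Bounding $\hat{\alpha}+\hat{K}_{P}\lambda_{n}^{L}$ below by $(\alpha-K_{\alpha}|\delta^{\theta}_{\alpha}|)+(K_{P}-K_{K_{P}}|\delta^{\theta}_{K_{P}}|)\lambda_{n}^{L}$ and imposing positivity gives exactly the second inequality of \eqref{eq:stabconds_cons_3sc}, where the coupling term $K_{\alpha}|\delta^{\theta}_{\alpha}|$ emerges because the worst case simultaneously minimizes $\hat{\alpha}$ and $\hat{K}_{P}$. With all three spectral conditions in force, Prop.~\ref{Prop:eigM} applied to $\hat{M}$ certifies a simple zero eigenvalue and the remaining spectrum in the open right half-plane, so the perturbed estimator reaches agreement for every admissible codeword, which completes the proof.
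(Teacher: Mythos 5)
Your proposal is correct and follows essentially the same route as the paper's proof: form the perturbed update matrix with decoded gains $\hat{\alpha}$, $\hat{K}_{P}$, $\hat{K}_{I}$, apply the eigenvalue formula of Prop.~\ref{Prop:eigM} to it, enforce $\hat{\varphi}_{i}>0$ together with a nonvanishing perturbed $K_{I}$ (to keep the zero eigenvalue simple), and convert these spectral conditions into \eqref{eq:stabconds_cons_3sc} via the Lipschitz bounds $|\delta^{w}_{\star}|\leq K_{\star}|\delta^{\theta}_{\star}|$. The only difference is cosmetic: where the paper invokes the triangle inequality on $|\delta^{w}_{\alpha}+\lambda_{i}^{L}\delta^{w}_{K_{P}}|$ and then appeals to the ordering of the $\lambda_{i}^{L}$, you justify the appearance of $\lambda_{n}^{L}$ by a sign-case analysis on $\hat{K}_{P}$, which is an equivalent (and slightly more explicit) way of identifying the binding index.
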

\begin{proof}
	Recalling expressions \eqref{eq:eigsofMinfuncofeigsofL}-\eqref{eq:eigsofMinfuncofeigsofL_particular} for the eigenvalues of update matrix $M$ in \eqref{eq:M} that determines the nominal\footnote{Note that nominal dynamics \eqref{eq:PIaverageconsensusestimator_} can be obtained from \eqref{eq:perturbedSBDC_secondorder} when $\delta^{\theta}_{\alpha}=0$, $\delta^{\theta}_{K_{P}}=0$, $\delta^{\theta}_{K_{I}}=0$.} dynamics \eqref{eq:PIaverageconsensusestimator_} from Prop. \ref{Prop:eigM}, it is possible to compute the expression for the perturbed eigenvalues associated to dynamics \eqref{eq:perturbedSBDC_secondorder}. More precisely, expression \eqref{eq:eigsofMinfuncofeigsofL_particular} can be modified in function of variations $ \delta^{w}_{\alpha} = p^{(\alpha)}(\theta_{\alpha} + \delta^{\theta}_{\alpha})-\alpha$, $\delta^{w}_{K_{P}} = p^{(K_{P})}(\theta_{K_{P}} + \delta^{\theta}_{K_{P}})-K_{P}$, $\delta^{w}_{K_{I}} = p^{(K_{I})}(\theta_{K_{I}} + \delta^{\theta}_{K_{I}})-K_{I}$ as
	\begin{equation}\label{eq:phisigmMbar}
		\begin{cases}
			\overline{\varphi}_{i} = (\alpha+\delta^{w}_{\alpha}+(K_{P}+\delta^{w}_{K_{P}})\lambda^{L}_{i})/2\\
			\overline{\sigma}_{i}  = \sqrt{\overline{\varphi}_{i}^{2}-((K_{I}+\delta^{w}_{K_{I}})\lambda^{L}_{i})^{2}}, 
			~~\text{s.t. } \Im[\overline{\sigma}_{i}] \geq 0
		\end{cases}
	\end{equation}
	to find out the eigenvalues $\lambda_{2(i-1)+j}^{\overline{M}} = \overline{\varphi}_{i} + (-1)^{j} \overline{\sigma}_{i}$,  $i=1,\ldots,n, ~\forall j\in \{1,2\}$, of the update matrix $\overline{M}$ regulating dynamics \eqref{eq:perturbedSBDC_secondorder}, whose form is yielded by
	\begin{equation*}\label{eq:Mbar}
		\overline{M} = \begin{bmatrix}
			(K_{P}+\delta^{w}_{K_{P}})L+(\alpha+\delta^{w}_{\alpha}) I_{n} & -(K_{I}+\delta^{w}_{K_{I}})L \\
			(K_{I}+\delta^{w}_{K_{I}})L & \zerovec{n \times n}
		\end{bmatrix}.
	\end{equation*}
	
	It is now possible to focus on the computation of the maximum magnitude allowed for deviations $\delta^{w}_{\alpha}$, $\delta^{w}_{K_{P}}$, $\delta^{w}_{K_{I}}$. \\
	In particular, the first step to guarantee robust consensus is to ensure that $\overline{\varphi}_{i} > 0$ for all $i=1,\ldots,n$. Remarkably, the first two conditions in \eqref{eq:stabconds_cons_3sc} serve this purpose as the following reasoning holds. For all $i=1,\ldots,n$, $\overline{\varphi}_{i} > 0$ is verified if $|\delta^{w}_{\alpha}+\lambda_{i}^{L}\delta^{w}_{K_{P}}| < \alpha + \lambda_{i}^{L} K_{P}  $. By the triangle inequality, the latter condition can be replaced by $|\delta^{w}_{\alpha}|+\lambda_{i}^{L}|\delta^{w}_{K_{P}}| < \alpha + \lambda_{i}^{L} K_{P}$. Hence, exploiting the ascending magnitude of $\lambda_{i}^{L}$ w.r.t. index $i \in \{1,\ldots,n\}$, conditions $|\delta^{w}_{\alpha}| < \alpha$ and $|\delta_{\alpha}^{w}|+\lambda_{i}^{L} |\delta^{w}_{K_{P}}| < \alpha+ \lambda_{i}^{L} K_{P}$ can be imposed simultaneously by respectively looking at cases $i=1$ and $i \in \{2,\ldots,n\}$. Consequently, leveraging the concavity of functions $p^{(\alpha)}$ and $p^{(K_{P})}$ as in \eqref{eq:concavity}, namely employing $|\delta^{w}_{\alpha}| \leq K_{\alpha} |\delta_{\alpha}^{\theta}|$ and $|\delta^{w}_{K_{P}}| \leq K_{K_{P}} |\delta_{K_{P}}^{\theta}|$, the first two conditions in \eqref{eq:stabconds_cons_3sc} can be finally enforced. As a further observation, it is worth to notice that input $\overline{\c} = \begin{bmatrix}
		p^{(\alpha)}(\theta_{\alpha} + \delta^{\theta}_{\alpha})c^{\top} & \zerovec{n}^{\top}
	\end{bmatrix}^{\top}$ corresponding to system \eqref{eq:perturbedSBDC_secondorder} still remains well-defined in its sign, as $p^{(\alpha)}(\theta_{\alpha} + \delta^{\theta}_{\alpha})>0$ if first condition in \eqref{eq:stabconds_cons_3sc} holds.
	
	On the other hand, robust consensus can be guaranteed only by also ensuring that $\overline{\sigma}_{i} \neq \overline{\varphi}_{i}$ for $i = 2,\ldots,n$, so that $\overline{M}$ is prevented to have more than one eigenvalue at zero, as eigenvalue $\lambda_{1}^{\overline{M}} = 0$ is attained for any perturbation $ \delta^{\theta}_{\alpha},\delta^{\theta}_{K_{P}},\delta^{\theta}_{K_{I}}$. In this direction, only deviations $\delta^{w}_{K_{I}}$ to parameter $K_{I}$ such that $|\delta^{w}_{K_{I}}| < K_{I}$ can be accepted (see the structure of $\overline{\sigma}_{i}$ in \eqref{eq:phisigmMbar}). Exploiting again concavity, namely $|\delta^{w}_{K_{I}}| \leq K_{K_{I}} |\delta_{K_{I}}^{\theta}|$, the third condition in \eqref{eq:stabconds_cons_3sc} is lastly enforced as well.
\end{proof}

Security guarantees in \eqref{eq:stabconds_cons_3sc} are conservative, in general. Nevertheless, it is possible to find a sharp upper bound for any perturbations $ \delta^{\theta}_{\alpha},\delta^{\theta}_{K_{P}},\delta^{\theta}_{K_{I}}$ in Thm. \ref{thm:secordestallpert} if
decoding functions $p^{(\alpha)}$, $p^{(K_{P})}$, $p^{(K_{I})}$ are taken linear w.r.t. to their subcodeword arguments, similarly to $p_{uv}$ in Prop. \ref{prop:tradeoff}. 
Lastly, it is worth noticing that the second inequality in \eqref{eq:stabconds_cons_3sc} can be generalized for any admissible $\delta^{\theta}_{\alpha}$, with $|\delta^{\theta}_{\alpha}| < K_{\alpha}^{-1} \alpha$, so that any $\delta^{\theta}_{K_{P}}$ such that $|\delta^{\theta}_{K_{P}}| < K_{K_{P}}^{-1} K_{P}$ be acceptable, implying that any self-loop value $\alpha>0$ contributes to increase robust agreement.

\begin{figure*}[t!] 
	\centering
	\subfigure[Chosen topology $\Gmc$.]{\includegraphics[height=0.16\textwidth, trim={4cm 3cm 7cm 2.5cm},clip]{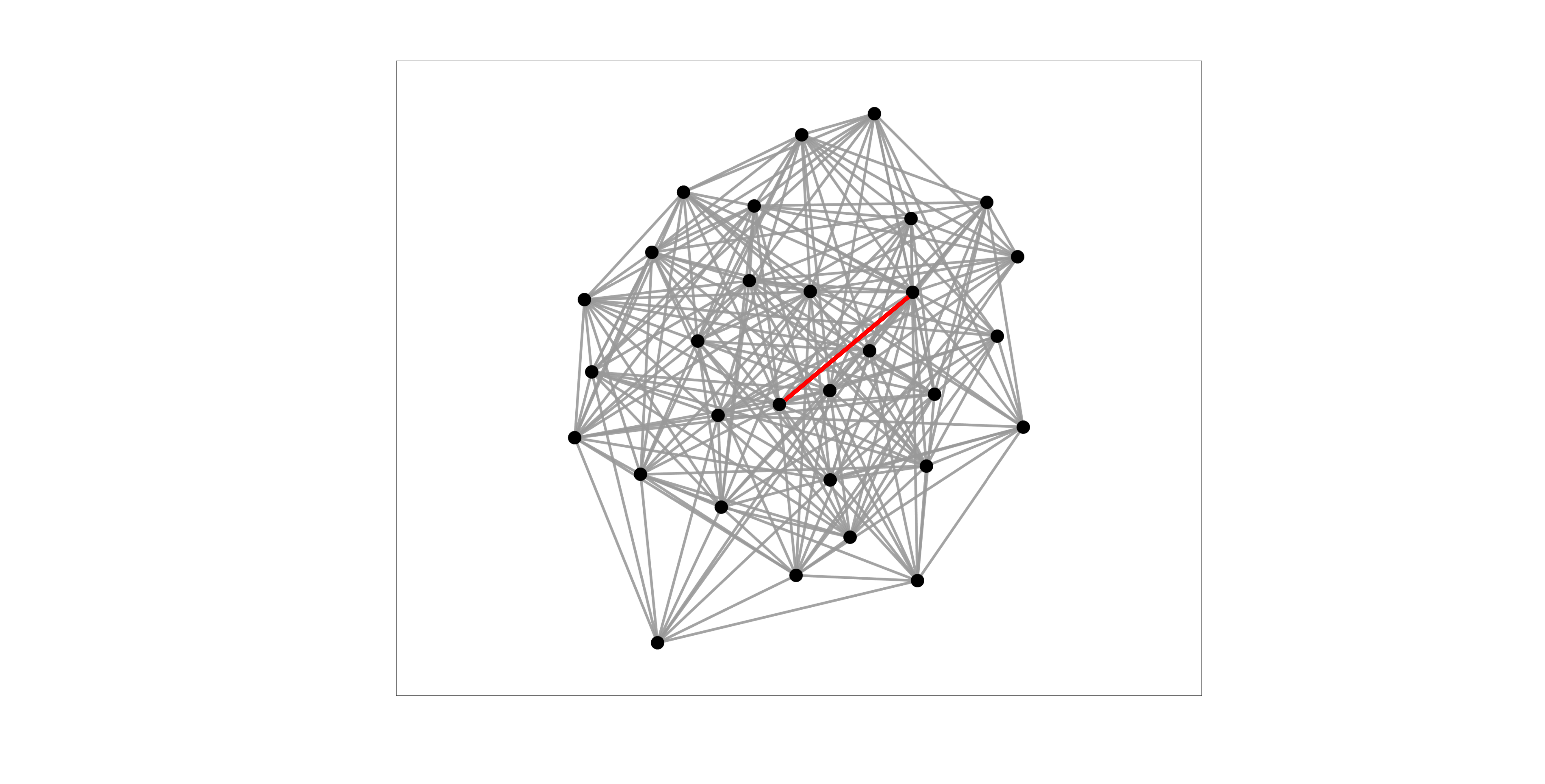}\label{fig:graph30nodes}} 
	\subfigure[\textit{Setup} $S1$: attack on red edge of $\Gmc_{P} \sim K_{P}\Gmc$ involving $K_{P}$ only.]{\includegraphics[height=0.15\textwidth, trim={3cm 0cm 2cm 2.3cm},clip]{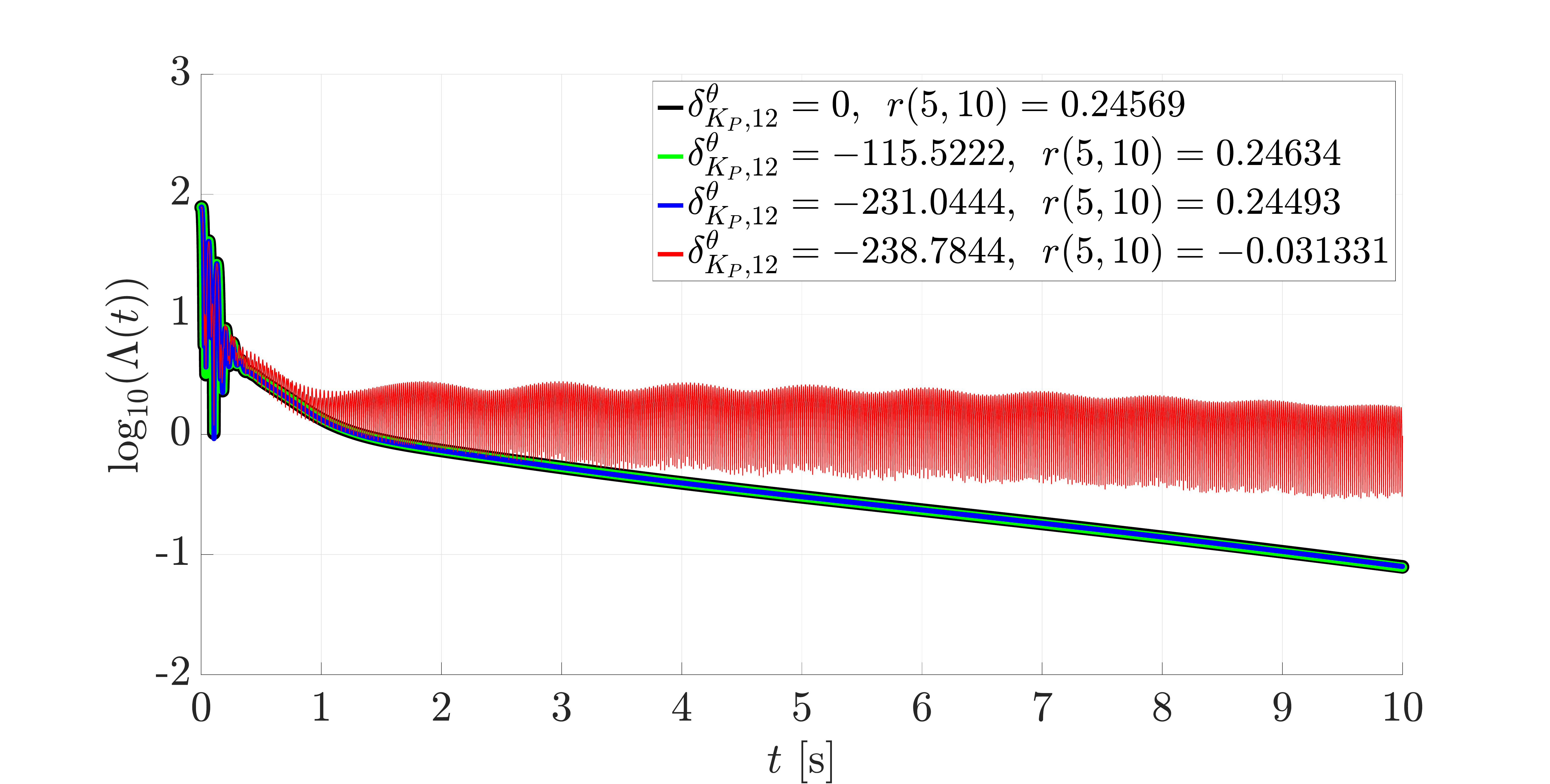}\label{fig:Esterr1}} \hspace{0.5cm}
	\subfigure[\textit{Setup} $S2$: attack on $\Gmc_{\star}$ involving all parameters $\alpha$, $K_{P}$, $K_{I}$.]{\includegraphics[height=0.15\textwidth, trim={3cm 0cm 2cm 2.3cm},clip]{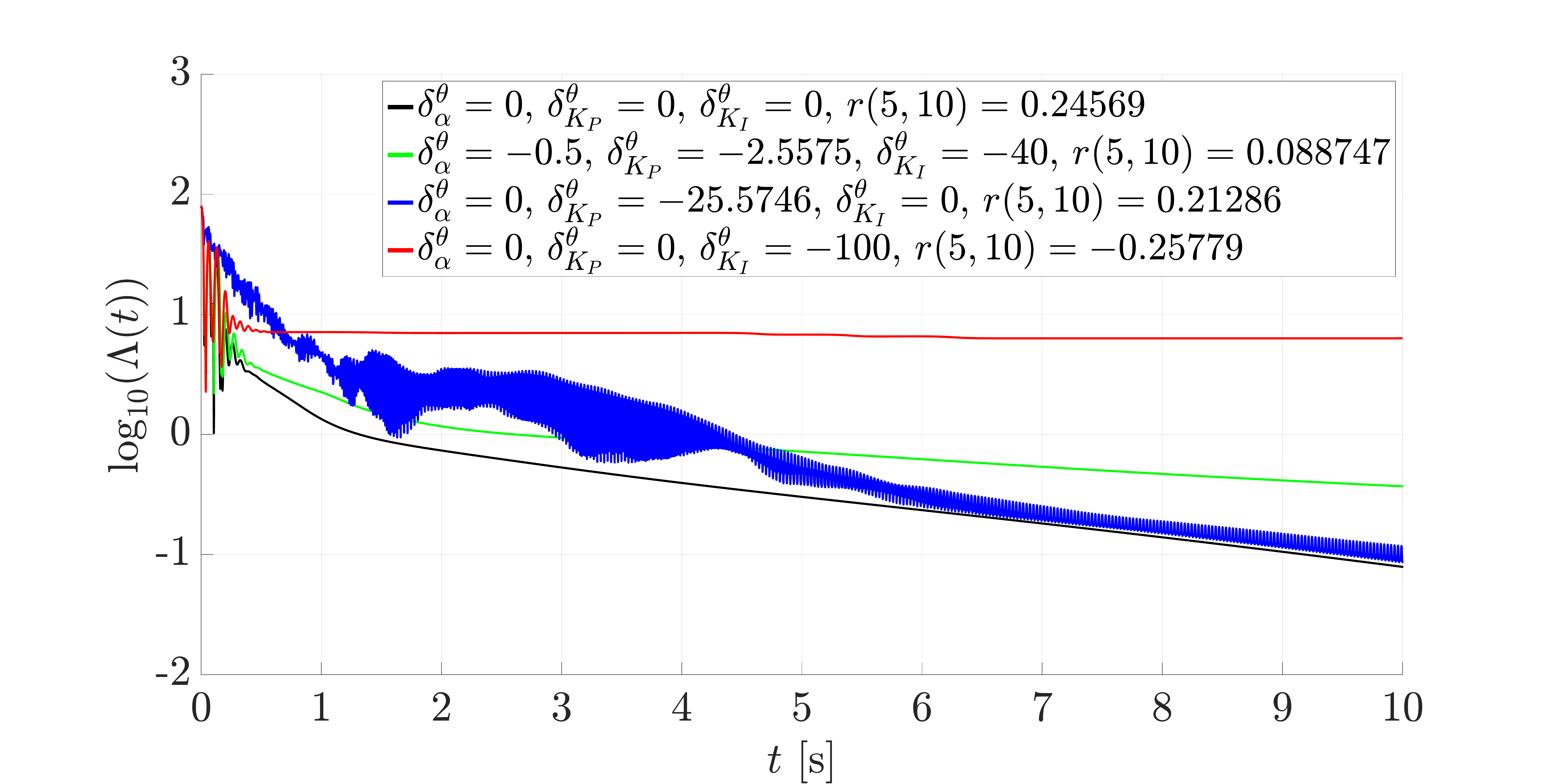}\label{fig:Esterr2}}
	\caption{Numerical results obtained from the application of the SBDC approach to the DPIA.}
	\label{fig:simulationsLambda2Est}
\end{figure*}

\subsection{Numerical examples on the DPIA criticalities}

The following numerical simulations show the secure estimation of eigenvalue $\lambda_{2}^{L} \simeq 8.6231$ of the Laplacian matrix $L$ associated to graph $\Gmc = (\Vmc, \Emc, \{1\}_{k=1}^{m})$, with $n=30$ nodes, in Fig. \ref{fig:graph30nodes}. This computation occurs in a distributed way within each agent $i\in \{1,\ldots,n\}$ and is carried out accounting for the additional dynamics\footnote{The initial conditions are selected according to a uniformly random vector with components in $(0,1)$.}
\begin{equation}\label{eq:l2est}
	\dot{\zeta}_{i} = -k_{1}y_{i,1} - k_{2} \textstyle{\sum_{j \in \Nmc_{i}}} (\zeta_{i}-\zeta_{j}) -k_{3} y_{i,2} \zeta_{i},
\end{equation}
in which $y^{(1)}=\begin{bmatrix}
	y_{1,1} & \cdots & y_{n,1}
\end{bmatrix}^{\top}$ and $y^{(2)}=\begin{bmatrix}
	y_{1,2} & \cdots & y_{n,2}
\end{bmatrix}^{\top}$ are the $y$ states of two distinct PI-ACEs of the form \ref{eq:PIaverageconsensusestimator}. In addition, the latter estimators are designed so that inputs $c_{i,1} = \zeta_{i}$ and $c_{i,2} = \zeta_{i}^{2}$ feed their dynamics. The DPIA is therefore constituted by such a system interconnection between \eqref{eq:l2est} and a couple of PI-ACEs \eqref{eq:PIaverageconsensusestimator}.\\
In the sequel, we employ network $\Gmc$ within the two setups $S1$ and $S2$ described in the previous subsections. Throughout all the discussion we assume that the nominal parameters and decoding functions are given by $\alpha = 25$, $K_{P} = 50$, $K_{I} = 10$ and $p^{(\alpha)}(\eta) = 5 \eta$, $p^{(K_{P})}(\eta) = 2 \eta$, $p^{(K_{I})}(\eta) = 0.1 \eta$, with $\eta \in \mathbb{R}$. The latter quantities are subject to numerical deviations for both the PI-ACEs associated to $y^{(1)}$ and $y^{(2)}$. Moreover, we assume that parameters $k_{1} = 60$, $k_{2} = 1$, $k_{3} = 200$ are fixed (according to requirements in \cite{YangFreemanGordon2010}) and are not affected by any type of uncertainty.\\
The $i$-th estimate $\hat{\lambda}^{L}_{2,i}$ of eigenvalue $\lambda_{2}^{L}$ can be obtained as $\hat{\lambda}^{L}_{2,i} = \lim_{t \rightarrow \infty} \lambda^{L}_{2,i}(t) $, where $\lambda^{L}_{2,i}(t) = k_{2}^{-1}k_{3}(1-y_{i,2}(t))$. We thus measure the performance of the DPIA through error $\Lambda(t) = n^{-1}\sum_{i=1}^{n} |\lambda_{2}^{L}-\lambda^{L}_{2,i}(t)|$. We also define the convergence rate $r(T_{0},T) = -(l_{T}-l_{T_{0}}+1)^{-1}\sum_{l=l_{T_{0}}}^{l_{T}} \log(\Lambda(t_{l}))/t_{l}$ that approximates the exponential decay of $\Lambda(t_{l})$, where $t_{l}$ is the discretized time stamp used by the solver and $l_{T_{0}}$, $l_{T}$ are the indexes addressing instants $T_{0}>0$, $T\geq T_{0}$, respectively. Whenever $r(T_{0},T) \leq 0$ no decay is attained over $[T_{0},T]$.

Fig. \ref{fig:Esterr1} depicts four cases wherein a constant attack $\delta_{K_{P},12}^{\theta}$ strikes edge $(1,2)$, highlighted in red, of the uniformly $K_{P}$-weighted version of $\Gmc$, namely $\Gmc_{P} = (\Vmc,\Emc,\{K_{P}\}_{k=1}^{m}) \sim K_{P}\Gmc$, according to $S1$. In this setup, the maximum allowed perturbation related to edge $(1,2)$ is given by $\rho_{12}^{\theta} = 231.0444$ (see \eqref{eq:sgrepatt_second_KP}). It can be appreciated that perturbations to subcodewords concerning $K_{P}$ do not affect the convergence rate, as far as the DPIA dynamics remain stable. Furthermore, it is worth noticing that security guarantees hold, as expected, and estimation instability certainly occurs if $\delta_{K_{P},12}^{\theta} \leq -1.0335 \rho_{12}^{\theta}$.

Considering instead $S2$, Fig. \ref{fig:Esterr2} refers to four structured constant attacks striking all the three subcodewords $\theta_{\alpha}$, $\theta_{K_{P}}$, $\theta_{K_{I}}$ broadcast by the network manager, wherein $\Gmc_{\star} = (\Vmc,\Emc,\{\star\}_{k=1}^{m}) \sim \star \Gmc$ denotes the weighted version of $\Gmc$ in Fig. \ref{fig:graph30nodes} by $\star \in \{\alpha, K_{P},K_{I}\}$. Each maximum allowed perturbation is yielded by $|\delta_{\alpha}^{\theta}| < 5$, $|\delta_{K_{P}}^{\theta}| < 1.5746-0.1149|\delta_{\alpha}^{\theta}|$ and $|\delta_{K_{I}}^{\theta}| < 100$ through \eqref{eq:stabconds_cons_3sc}. In this illustration, it is worth to observe all the different effects due to deviations for such parameters, resulting in a slowdown of the convergence rate (i.e. a decrease of $r(T_{0},T)$) or in a change to an undesired highly oscillatory behavior for the performance index $\Lambda(t)$. In particular, perturbations focusing on $\theta_{\alpha}$, $\theta_{K_{P}}$ and $\theta_{K_{I}}$ lead to slower convergence, noisy/ oscillatory estimation behavior and a considerable steady state estimation error, respectively. Furthermore, all the stability behaviors of the curves here reported comply with security guarantees in \eqref{eq:stabconds_cons_3sc}, as expected, in a non-conservative fashion (i.e. multiple zero eigenvalues appear in $\overline{M}$ for critical values of perturbations). Remarkably, the introduction of performance index $r(T_{0},T)$ is also justified by the fact that it captures the general tendency of the convergence rate for the DPIA to increase as $\lambda_{2}^{L}$ grows. Fig. \ref{fig:lambda2vsr} illustrates this direct proportionality (see dash black line obtained with a linear regression applied to black-marked dots) and that a strong perturbation on $\alpha$ dramatically reduces the value of $r(T_{0},T)$ in the majority of cases as expected.

In conclusion, since consensus for the PI-ACE dynamics \eqref{eq:PIaverageconsensusestimator} is a necessary condition for the correct $\lambda_{2}^{L}$ estimation process performed by the DPIA, our proposed guarantees find a deep relevance in the secure design for such applications employing this kind of decentralized estimation algorithm.

\begin{figure}[t!]
	\centering
	\includegraphics[scale=0.115]{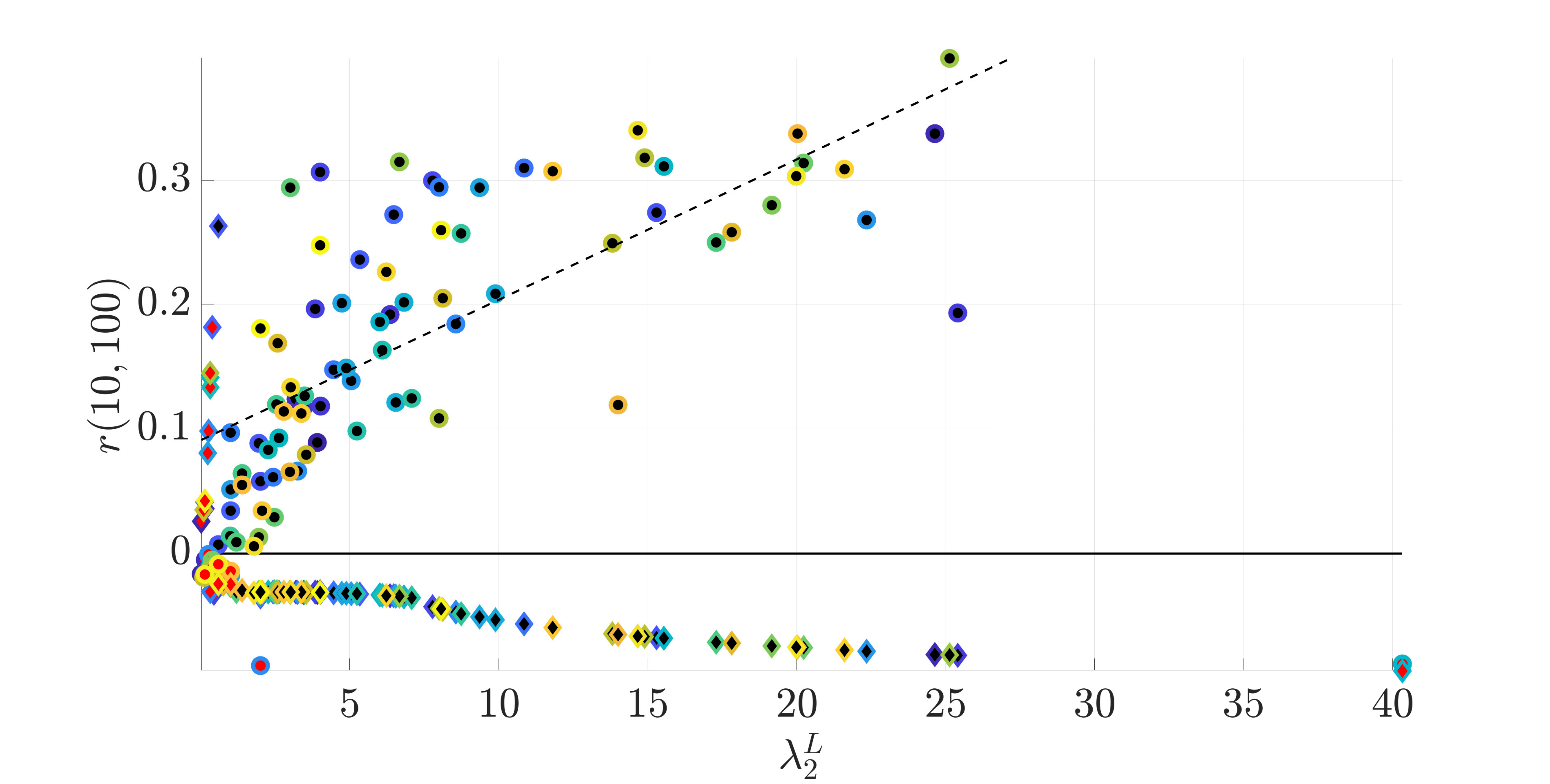}
	\caption{Computation of the convergence rate for several different random topologies, depicted via diverse colored markers. Dots and diamonds represent respectively the results for the nominal DPIA and the perturbed DPIA through $\delta_{\alpha}^{\theta} = -0.99 \alpha$ according to setup $S2$. Items marked in black are acceptable while those marked in red are not, as $r(10,100)\leq0$ for their associated nominal simulation.
	}
	\label{fig:lambda2vsr}
\end{figure}


\section{Extension to the discrete-time domain} \label{sec:dt_ext}
In this section, we propose an extension for the secure-by-consensus approach previously devised to the discrete-time domain. Within this framework, we let $t \in \mathbb{N}$ to indicate, without confusion, the discrete time instants and we assume the same setup proposed in the introductory part of Sec. \ref{sec:sbdconsensus} and through all Sec. \ref{ssec:objcod_infloc}. 

\subsection{Secure-by-design consensus in discrete time}
We consider and investigate a well-known discrete-time average consensus dynamics, namely that described by
\begin{equation}\label{eq:Consdt}
	\x(t+1) = \x(t) - \epsilon \L(\Gmc)\x(t)   = \F_{\epsilon}(\Gmc)\x(t),
\end{equation}
where $\epsilon$ is a common parameter shared among all agents and designed to belong to the interval $(0,2/\lambda_{n}^{L})$, see \cite{OlfatiSaberFaxMurray2007,FabrisMichielettoCenedese2022}. Constant $\epsilon$ is, indeed, selected in order to allow the state matrix $\F_{\epsilon}(\Gmc) = I_{N}- \epsilon \L(\Gmc)$ to be doubly stochastic with exactly $M$ eigenvalues equal to $1$ and all the remaining eigenvalues having modulus smaller than $1$ (\cite{OlfatiSaberFaxMurray2007,Chung1997}). Matrix $\F_{\epsilon}(\Gmc)$ can be further decomposed as $\F_{\epsilon}(\Gmc) = (F_{\epsilon}(\Gmc) \otimes I_{D})$, in which $F_{\epsilon}(\Gmc) = I_{n} - \epsilon L(\Gmc)$ is doubly stochastic and has eigenvalues $\lambda_{i}^{F_{\epsilon}} = 1-\epsilon \lambda_{i}^{L}$, for $i=1,\ldots,n$, ordered as $1=\lambda_{1}^{F_{\epsilon}} > \lambda_{2}^{F_{\epsilon}} \geq \cdots \geq \lambda_{n}^{F_{\epsilon}}$.
According to the characterization of the decoupling between objective coding and information localization in \eqref{eq:SBDC}, dynamics \eqref{eq:Consdt} can be rewritten as
\begin{equation}\label{eq:SBDCdt}
	\x(t+1) = \x(t) - \epsilon \H(\x(t)) \p(\theta),
\end{equation}
since it has been shown that $\H(\x) \p(\theta) = \L(\Gmc)\x$ in Sec. \ref{ssec:SBDCdyn} through Lem. \ref{lem:standardandsecurepersp}.

In the next paragraph, we will explore how this kind of discrete-time consensus protocol behaves whenever an encoded edge weight is perturbed by an attacker.

\subsection{Robustness to channel tampering in discrete time}
Adopting the same background and attack models introduced in Sec. \ref{sec:robust-analysis}, the $i$-th component, $i=1,\ldots,n$, of the perturbed dynamics associated to \eqref{eq:SBDCdt} is yielded by 
\begin{equation}\label{eq:perturbedSBDCdt}
	x_{i}(t+1) = x_{i}(t) - \epsilon {\textstyle\sum}_{j \in \Nmc_{i}} p_{ij}(\theta_{ij} + \delta^{\theta}_{ij}) h_{ij}(\x(t)),
\end{equation}
similarly to the altered description provided in \eqref{eq:perturbedSBDC}.
It is possible then to state the discrete-time version of Thm. \eqref{thm:maximum_perturb_codeword} for the perturbed protocol \eqref{eq:perturbedSBDCdt} as follows.
\begin{thm} \label{thm:maximum_perturb_codeword_dt}
	Assume that the characterization \textit{(i)-(iii)} in Subsec. \ref{sec:ressbdc} for objective decoding functions $p_i$ holds and recall $\Psi_{\Gmc}$ defined in Sec. \ref{sec:prelim_models}.
	Let an injection attack $\delta^{\theta} \in \bm{\Delta}^{\theta}$ affect a single edge $(u,v)\in \Emc$, i.e., $\delta^\theta_{ij} = 0$ for all $(i,j) \in \Emc \setminus \{(u,v)\}$ is satisfied, and define
	\begin{equation}
		\psi_{i}(\delta^\theta_{uv}) = w_{i}+K_{uv}|\delta^\theta_{uv}|, \quad i = u,v.
	\end{equation}
	Then the perturbed consensus protocol \eqref{eq:perturbedSBDCdt} reaches robust agreement for all $\delta^\theta_{uv}$ such that both \eqref{eq:sgrepatt} and
	\begin{equation}\label{eq:sgrepatt_dt}
		\phi_{\Gmc}(\delta^\theta_{uv}) := \max \{ \Psi_{\Gmc}, \psi_{u}(\delta^\theta_{uv}) , \psi_{v}(\delta^\theta_{uv}) \}  <  \epsilon^{-1}
	\end{equation}
	hold for any fixed $\epsilon$, independently from the values taken by any codeword $\theta \in \Theta$. 
\end{thm}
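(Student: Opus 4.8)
The plan is to reduce the discrete-time robust-agreement claim to a spectral placement condition on the perturbed state matrix, and then to dispatch that condition using the two hypotheses separately: \eqref{eq:sgrepatt} governs the \emph{lower} end of the spectrum while \eqref{eq:sgrepatt_dt} governs the \emph{upper} end. Writing $\hat{L} := L(\Gmc_{\Delta^W})$ for the Laplacian of the graph perturbed on the single edge $(u,v)$, the proof of Thm.~\ref{thm:maximum_perturb_codeword} already shows (via \eqref{eq:concavity} and the reformulation \eqref{agent_dynamics_uncertain}) that the induced weight deviation obeys $|\delta^{w}_{uv}| \leq K_{uv}|\delta^{\theta}_{uv}|$, so \eqref{eq:perturbedSBDCdt} is exactly driven by $F_{\epsilon}(\Gmc_{\Delta^W}) = I_{n} - \epsilon\hat{L}$ (tensored with $I_{D}$), whose eigenvalues are $1-\epsilon\lambda_{i}^{\hat L}$. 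Since the tensoring by $I_{D}$ merely repeats each eigenvalue $D$ times and preserves the agreement set $\angb{\ones_{n}}\otimes\omega$, convergence to agreement is equivalent to $\hat{L}$ having a simple zero eigenvalue with eigenvector $\ones_{n}$ and all remaining $\lambda_{i}^{\hat L}$ lying in the open interval $(0,2\epsilon^{-1})$: under these conditions $1$ is a simple eigenvalue of $F_{\epsilon}(\Gmc_{\Delta^W})$ while every other eigenvalue has modulus strictly below $1$.

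First I would invoke Lem.~\ref{thm:effective_resistance}. Condition \eqref{eq:sgrepatt}, i.e. $K_{uv}|\delta^{\theta}_{uv}|\leq\Rmc_{uv}(\Gmc)^{-1}$, keeps the perturbed Laplacian positive semidefinite. Because $\hat{L}\ones_{n}=0$ holds identically (row sums of any weighted Laplacian vanish) and connectivity is unaffected by a single-edge deviation of admissible magnitude, $\lambda_{1}^{\hat L}=0$ remains simple while $\lambda_{i}^{\hat L}>0$ for $i=2,\dots,n$. This secures $1-\epsilon\lambda_{i}^{\hat L}<1$ for $i\geq 2$ and isolates the consensus direction, thereby handling the lower end of the spectrum.

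The remaining and central step is to control the top eigenvalue $\lambda_{n}^{\hat L}$ and show that \eqref{eq:sgrepatt_dt} forces $\lambda_{n}^{\hat L}<2\epsilon^{-1}$. Here I would apply Gershgorin's disc theorem to the real symmetric matrix $\hat{L}$, so that $\lambda_{n}^{\hat L}\leq\max_{i}([\hat{L}]_{ii}+R_{i})$ with $R_{i}=\sum_{j\neq i}|[\hat{L}]_{ij}|$. For every node $i\notin\{u,v\}$ the weights are unperturbed, giving $[\hat{L}]_{ii}+R_{i}=2w_{i}\leq 2\Psi_{\Gmc}$. For $i\in\{u,v\}$ the diagonal entry is the perturbed weighted degree $w_{i}+\delta^{w}_{uv}$, and I would bound the right endpoint of the disc by $2(w_{i}+|\delta^{w}_{uv}|)\leq 2(w_{i}+K_{uv}|\delta^{\theta}_{uv}|)=2\psi_{i}(\delta^{\theta}_{uv})$. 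Taking the maximum over all discs yields $\lambda_{n}^{\hat L}\leq 2\phi_{\Gmc}(\delta^{\theta}_{uv})$, whence \eqref{eq:sgrepatt_dt} gives $\lambda_{n}^{\hat L}\leq 2\phi_{\Gmc}(\delta^{\theta}_{uv})<2\epsilon^{-1}$, i.e. $1-\epsilon\lambda_{n}^{\hat L}>-1$. Combined with the previous paragraph, all non-unit eigenvalues of $F_{\epsilon}(\Gmc_{\Delta^W})$ sit strictly inside the unit circle, so \eqref{eq:perturbedSBDCdt} reaches agreement for every admissible codeword.

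The main obstacle I anticipate is the Gershgorin bookkeeping at the two perturbed nodes when $\delta^{w}_{uv}<0$ is large enough to drive the effective edge weight $w_{uv}+\delta^{w}_{uv}$ negative, which is a genuinely reachable regime since Lem.~\ref{thm:effective_resistance} only caps $|\delta^{w}_{uv}|$ by $\Rmc_{uv}(\Gmc)^{-1}\geq w_{uv}$. In that case the off-diagonal magnitude $|w_{uv}+\delta^{w}_{uv}|$ no longer coincides with its contribution to the diagonal, so the clean identity ``disc endpoint $=2\psi_{i}$'' fails. I would resolve this with a short case split on the sign of $w_{uv}+\delta^{w}_{uv}$: when it is nonnegative the endpoint is $2(w_{i}+\delta^{w}_{uv})\leq 2\psi_{i}$, and when it is negative the endpoint collapses to $2(w_{i}-w_{uv})<2w_{i}\leq 2\Psi_{\Gmc}$; since $\phi_{\Gmc}=\max\{\Psi_{\Gmc},\psi_{u},\psi_{v}\}$ dominates both, the bound $\lambda_{n}^{\hat L}\leq 2\phi_{\Gmc}(\delta^{\theta}_{uv})$ survives in every case. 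A minor secondary point is that a negative effective weight may destroy entrywise nonnegativity, hence literal double stochasticity of $F_{\epsilon}(\Gmc_{\Delta^W})$; this is immaterial, as the argument above rests only on the established spectral placement and on $\hat{L}\ones_{n}=0$, not on stochasticity.
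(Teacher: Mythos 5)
Your proposal is correct and follows essentially the same route as the paper's proof: both reduce robust agreement to the spectral condition $|1-\epsilon\lambda_{i}^{L+\Delta^{L}}|<1$ on the nontrivial eigenvalues, dispatch the lower end via Lem.~\ref{thm:effective_resistance} combined with the concavity/Lipschitz bound $|\delta_{uv}^{w}|\leq K_{uv}|\delta_{uv}^{\theta}|$ from \eqref{eq:concavity}, and dispatch the upper end $\lambda_{n}^{L+\Delta^{L}}/2<\epsilon^{-1}$ via the Gershgorin circle theorem, arriving at exactly the paper's bound $\lambda_{n}^{L+\Delta^{L}}/2\leq\phi_{\Gmc}(\delta_{uv}^{\theta})$. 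Your explicit sign split on $w_{uv}+\delta_{uv}^{w}$ at the two perturbed nodes is the bookkeeping the paper compresses into the remark that $\lambda_{n}^{L+\Delta^{L}}\leq 2(w_{i}+|\delta_{uv}^{w}|)$, $i\in\{u,v\}$, ``may also be useful,'' so on that point your write-up is slightly more careful than the paper's.
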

\begin{proof}
	To assess agreement for protocol \eqref{eq:perturbedSBDCdt} we first investigate the spectral properties of $F_{\epsilon} + \Delta^{F_{\epsilon}} = I_{n} - \epsilon (L + \Delta^{L}) = I_{n} - \epsilon E (W+\Delta^{W}) E^{\top}$, where quantity $\Delta^{F_{\epsilon}} = -\epsilon \Delta^{L} = -\epsilon E \Delta^{W} E^{\top}$ captures the uncertainty w.r.t. $F_{\epsilon}$ caused by a time-varying weight variation $\Delta^{W} = \delta_{uv}^{w} \mathbf{e}_z \mathbf{e}_z^{\top} $, with $z=(u,v)$. In order to ensure robust agreement in absence of objective coding, i.e. when $p_{ij}(\theta_{ij}) = \theta_{ij} = w_{ij}$ for all $(i,j) \in \Emc$ holds with no uncertainty, one imposes
	\begin{equation}\label{eq:stab_eig_condition}
		\left|\lambda_{i}^{F_{\epsilon} + \Delta^{F_{\epsilon}}} \right| = \left| 1-\epsilon \lambda_{i}^{L+\Delta^{L}} \right|   < 1, \quad i = 2,\ldots,n.
	\end{equation}
	To satisfy condition \eqref{eq:stab_eig_condition} it is sufficient to ensure both
	\begin{equation}\label{eq:eig1Fcond}
		\lambda_{1}^{L + \Delta^{L}} > 0,
	\end{equation}
	\begin{equation}\label{eq:eignFcond}
		\lambda_{n}^{L + \Delta^{L}}/2 < \epsilon^{-1}.
	\end{equation}
	Inequality \eqref{eq:eig1Fcond} is guaranteed to hold if \eqref{eq:fundamineqres} holds\footnote{Under a perturbation on a single edge weight, linear agreement to a unique value  can be reached if and only if \eqref{eq:fundamineqres} is satisfied.} through Lem. \ref{thm:effective_resistance}. Whereas, condition \eqref{eq:eignFcond} foists a further requirement to achieve stability w.r.t. to the continuous-time case.
	
	By resorting to the Gershgorin circle theorem \cite{Bell1965}, it is possible to find an upper bound for $\lambda_{n}^{L + \Delta^{L}}$ and ensure \eqref{eq:eignFcond} as follows. If $\delta_{uv}^{w} = 0$, i.e. considering the nominal system \eqref{eq:Consdt}, then $\lambda_{n}^{L + \Delta^{L}} \leq 2\Psi_{\Gmc}$. Otherwise, if $\delta_{uv}^{w} \neq 0$, it is possible that the following couple of inequalities may also be useful to find an upper bound: $\lambda_{n}^{L + \Delta^{L}} \leq 2 (w_{i} + |\delta_{uv}^{w}|)$, with $i \in \{u,v\}$. To summarize, setting $\bar{\phi}_{\Gmc}(\delta_{uv}^{w}) :=  \max\{\Psi_{\Gmc}, (w_{u} + |\delta_{uv}^{w}|), (w_{v} + |\delta_{uv}^{w}|) \}$ the following upper bound can be provided for all $\delta_{uv}^{w} \in \mathbb{R}$:
	\begin{equation}\label{eq:ubwln}
		\lambda_{n}^{L + \Delta^{L}}/2 \leq  \bar{\phi}_{\Gmc}(\delta_{uv}^{w}).
	\end{equation}
	
	Now, to guarantee the robust agreement in presence of objective coding, we recall inequality \eqref{eq:concavity} and the fact that $|\delta_{uv}^{w}| \leq K_{ij}|\delta^{\theta}_{ij}|$. It is, thus, straightforward to observe that 
	$\bar{\phi}_{\Gmc}(\delta_{uv}^{w}) \leq \phi_{\Gmc}(\delta^\theta_{uv}) =  \max\{\Psi_{\Gmc}, \psi_{u}(\delta^\theta_{uv}), \psi_{v}(\delta^\theta_{uv}) \} $. Therefore, thanks to \eqref{eq:ubwln}, the imposition of \eqref{eq:sgrepatt_dt} is sufficient to satisfy \eqref{eq:eignFcond}.
\end{proof}

\begin{rmk}
	It is crucial to observe that inequality \eqref{eq:sgrepatt_dt} is conservative as the topology of $\Gmc$ varies, even for decoding functions $p_{ij}$ linear in their argument. However, this is not the case if: (a) the latter decryption for $\theta$ is chosen (this, indeed, allows equality $\bar{\phi}_{\Gmc}(\delta_{uv}^{w}) = \phi_{\Gmc}(\delta_{uv}^{\theta})$ to be attained); (b) the topology under consideration satisfies $\Psi_{\Gmc} = \lambda_{n}^{L}/2$, namely if $\Psi_{\Gmc}$ represents the infimum for the values taken by $\epsilon^{-1}$ (we recall that $\epsilon \in (0,2/\lambda_{n}^{L})$). An example for such topologies is the class of uniformly weighted regular bipartite networks. 
	Indeed, these networks are characterized by $\Psi_{\Gmc} = wd = \lambda_{n}^{L}/2$ (see \cite{Chung1997}). 
\end{rmk}

In addition to this, the main result obtained in Thm. \ref{thm:maximum_perturb_codeword_dt} can be further simplified by means of the following corollary.
\begin{cor} \label{cor:maximum_perturb_codeword_dt}
	Under all the assumptions adopted in Thm. \ref{thm:maximum_perturb_codeword_dt} and setting $\epsilon < \Psi_{\Gmc}^{-1}$, the perturbed consensus protocol \eqref{eq:perturbedSBDCdt} reaches robust agreement for all $\delta^\theta_{uv}$ such that
	\begin{equation}\label{eq:sgrepatt_dt_simple}
		|\delta_{uv}^{\theta}| < \xi_{uv}^{\theta} :=  K_{uv}^{-1} \min \{ \Rmc_{uv}^{-1}(\Gmc) , (\epsilon^{-1}-\Psi_{\Gmc}) \}
	\end{equation}
	independently from the values taken by any codeword $\theta \in \Theta$. In particular, condition \eqref{eq:sgrepatt} needs to be fulfilled solely to guarantee consensus if $\epsilon$ is selected as follows:
	\begin{equation}\label{eq:epsguar_rs}
		\epsilon \leq \epsilon^{\star}_{uv} := (\Psi_{\Gmc} + \Rmc_{uv}^{-1}(\Gmc))^{-1}.
	\end{equation}
\end{cor}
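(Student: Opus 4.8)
The plan is to obtain this corollary as a direct specialization of Thm.~\ref{thm:maximum_perturb_codeword_dt}, by showing that the single scalar condition $|\delta_{uv}^{\theta}| < \xi_{uv}^{\theta}$ in \eqref{eq:sgrepatt_dt_simple} is already sufficient to enforce \emph{both} hypotheses \eqref{eq:sgrepatt} and \eqref{eq:sgrepatt_dt} at once. The central simplification exploits that the local weighted degrees $w_u, w_v$ entering $\psi_u, \psi_v$ are each dominated by the global maximum weighted degree, $w_u, w_v \leq \Psi_{\Gmc}$, so the edge-dependent terms $\psi_i$ can be uniformly bounded by a single expression in $\Psi_{\Gmc}$ alone. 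Note first that the standing assumption $\epsilon < \Psi_{\Gmc}^{-1}$ guarantees $\epsilon^{-1}-\Psi_{\Gmc} > 0$, so that $\xi_{uv}^{\theta}$ is a well-defined positive quantity.

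First I would unpack the minimum defining $\xi_{uv}^{\theta}$. Since $|\delta_{uv}^{\theta}| < K_{uv}^{-1}\min\{\Rmc_{uv}^{-1}(\Gmc),\,\epsilon^{-1}-\Psi_{\Gmc}\}$, multiplying by $K_{uv}>0$ yields the two scalar inequalities $K_{uv}|\delta_{uv}^{\theta}| < \Rmc_{uv}^{-1}(\Gmc)$ and $K_{uv}|\delta_{uv}^{\theta}| < \epsilon^{-1}-\Psi_{\Gmc}$. The former is a strict form of \eqref{eq:sgrepatt}, so that hypothesis holds immediately. Rearranging the latter gives $\Psi_{\Gmc}+K_{uv}|\delta_{uv}^{\theta}| < \epsilon^{-1}$, and combining this with $w_u, w_v \leq \Psi_{\Gmc}$ produces
\begin{equation*}
\psi_i(\delta_{uv}^{\theta}) = w_i + K_{uv}|\delta_{uv}^{\theta}| \leq \Psi_{\Gmc} + K_{uv}|\delta_{uv}^{\theta}| < \epsilon^{-1}, \quad i = u,v .
\end{equation*}
Since $\epsilon < \Psi_{\Gmc}^{-1}$ also yields $\Psi_{\Gmc} < \epsilon^{-1}$, every argument of the maximum is bounded, giving $\phi_{\Gmc}(\delta_{uv}^{\theta}) = \max\{\Psi_{\Gmc},\psi_u,\psi_v\} < \epsilon^{-1}$, i.e.\ \eqref{eq:sgrepatt_dt}. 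With both hypotheses of Thm.~\ref{thm:maximum_perturb_codeword_dt} verified, robust agreement follows.

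For the second claim I would determine when the discrete-time term $\epsilon^{-1}-\Psi_{\Gmc}$ stops being the active constraint inside the minimum. The choice $\epsilon \leq \epsilon_{uv}^{\star} = (\Psi_{\Gmc}+\Rmc_{uv}^{-1}(\Gmc))^{-1}$ from \eqref{eq:epsguar_rs} is equivalent to $\epsilon^{-1} \geq \Psi_{\Gmc}+\Rmc_{uv}^{-1}(\Gmc)$, hence $\epsilon^{-1}-\Psi_{\Gmc} \geq \Rmc_{uv}^{-1}(\Gmc)$. Substituting into $\xi_{uv}^{\theta}$ collapses the minimum onto its effective-resistance term, $\min\{\Rmc_{uv}^{-1}(\Gmc),\,\epsilon^{-1}-\Psi_{\Gmc}\} = \Rmc_{uv}^{-1}(\Gmc)$, so that $\xi_{uv}^{\theta} = (K_{uv}\Rmc_{uv}(\Gmc))^{-1} = \rho_{uv}^{\theta}$. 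Thus the simplified bound reduces exactly to \eqref{eq:sgrepatt}, confirming that under this selection of $\epsilon$ the effective-resistance constraint alone governs robust consensus (and, since $\epsilon_{uv}^{\star} < \Psi_{\Gmc}^{-1}$, the standing assumption is automatically met).

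I expect the only delicate point to be the bookkeeping between strict and non-strict inequalities near the boundary: \eqref{eq:sgrepatt} is stated non-strictly whereas \eqref{eq:sgrepatt_dt} is strict, and the bound $\psi_i \leq \Psi_{\Gmc}+K_{uv}|\delta_{uv}^{\theta}|$ is tight precisely when $w_i = \Psi_{\Gmc}$. Because the hypothesis $|\delta_{uv}^{\theta}| < \xi_{uv}^{\theta}$ is itself strict, strictness propagates through the whole chain and the degenerate case (a double zero eigenvalue of $F_{\epsilon}+\Delta^{F_{\epsilon}}$, where agreement to a single value would fail) is excluded; this causes no genuine difficulty and merely requires stating the reduction with the correct inequality signs.
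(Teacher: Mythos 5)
Your proposal is correct and follows essentially the same route as the paper's proof: split the minimum in \eqref{eq:sgrepatt_dt_simple} into the effective-resistance condition \eqref{eq:sgrepatt} and the bound $\Psi_{\Gmc}+K_{uv}|\delta_{uv}^{\theta}|<\epsilon^{-1}$, which dominates $\phi_{\Gmc}(\delta_{uv}^{\theta})$ and hence enforces \eqref{eq:sgrepatt_dt}, then observe that $\epsilon\leq\epsilon_{uv}^{\star}$ collapses the minimum onto $\Rmc_{uv}^{-1}(\Gmc)$ so that $\xi_{uv}^{\theta}=\rho_{uv}^{\theta}$. Your explicit derivation of the bound $\psi_i\leq\Psi_{\Gmc}+K_{uv}|\delta_{uv}^{\theta}|$ from $w_u,w_v\leq\Psi_{\Gmc}$, and the remark on strict-versus-non-strict inequalities, merely spell out details the paper leaves implicit.
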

\begin{proof}
	Relation in \eqref{eq:sgrepatt_dt_simple} is the combined result of guarantee in \eqref{eq:sgrepatt} and that one obtainable by imposing $\Psi_{\Gmc} + K_{uv}|\delta_{uv}^{\theta}| < \epsilon^{-1}$ to satisfy \eqref{eq:sgrepatt_dt}, since $\phi_{\Gmc}(\delta^\theta_{uv})$ can be upper bounded as $\phi_{\Gmc}(\delta^\theta_{uv}) \leq \Psi_{\Gmc} + K_{uv}|\delta_{uv}^{\theta}|$. On the other hand, relation \eqref{eq:epsguar_rs} is derived by enforcing $\Rmc_{uv}^{-1}(\Gmc) \leq \epsilon^{-1}-\Psi_{\Gmc}$ to minimize $\xi_{uv}^{\theta}$ and obtain $\xi_{uv}^{\theta} = \rho_{uv}^{\theta}$, as, in general, one has $\xi_{uv}^{\theta} \leq \rho_{uv}^{\theta}$.
\end{proof}

Cor. \ref{cor:maximum_perturb_codeword_dt} highlights the fact that, in discrete time, robustness margin $\xi_{uv}^{\theta}$ is not only determined by quantity $\rho^{\theta}_{uv}= (K_{uv}\Rmc_{uv}(\Gmc))^{-1}$ but also strongly depends on the inversely proportional relationship between $\epsilon$ and $\Psi_{\Gmc}$. The smaller $ \Psi_{\Gmc}$ w.r.t. $\epsilon^{-1} $ the better robustness is achieved, up to the lower limit dictated by $\Rmc_{uv}^{-1}(\Gmc)$. Indeed, margins $\xi_{uv}^{\theta}$ and $\rho_{uv}^{\theta} $ coincide for $\epsilon \leq \epsilon^{\star}_{uv}$, namely $\xi_{uv}^{\theta}$ is minimized, as $\xi_{uv}^{\theta} \leq \rho_{uv}^{\theta}$ holds. This also suggests that discrete-time robust agreement may be harder to be reached w.r.t. the continuous-time case. 
Finally, from Cor. \ref{cor:maximum_perturb_codeword_dt} it can be easily noticed that 
\begin{equation}\label{eq:epschoice}
	\epsilon \leq \epsilon^{\star} := \underset{(i,j) \in \Emc}{\min} \epsilon_{ij}^{\star} = \left(\Psi_{\Gmc} + \underset{(i,j) \in \Emc}{\max} \Rmc_{ij}^{-1}(\Gmc)\right)^{-1}
\end{equation}
is a sufficient choice to provide the exact robustness guarantees as in the continuous-time framework, \textit{regardless the edge in $\Gmc$ being under attack}. Hence, parameter $\epsilon$ can be set ahead consensus protocol starts, according to \eqref{eq:epschoice} and without the full knowledge of each encrypted edge weight being sent by the network manager.



\begin{figure*}[t!] 
	\centering
	\subfigure[]{\includegraphics[height=0.18\textwidth, trim={0.1cm 0cm 0.1cm 0cm},clip ]{./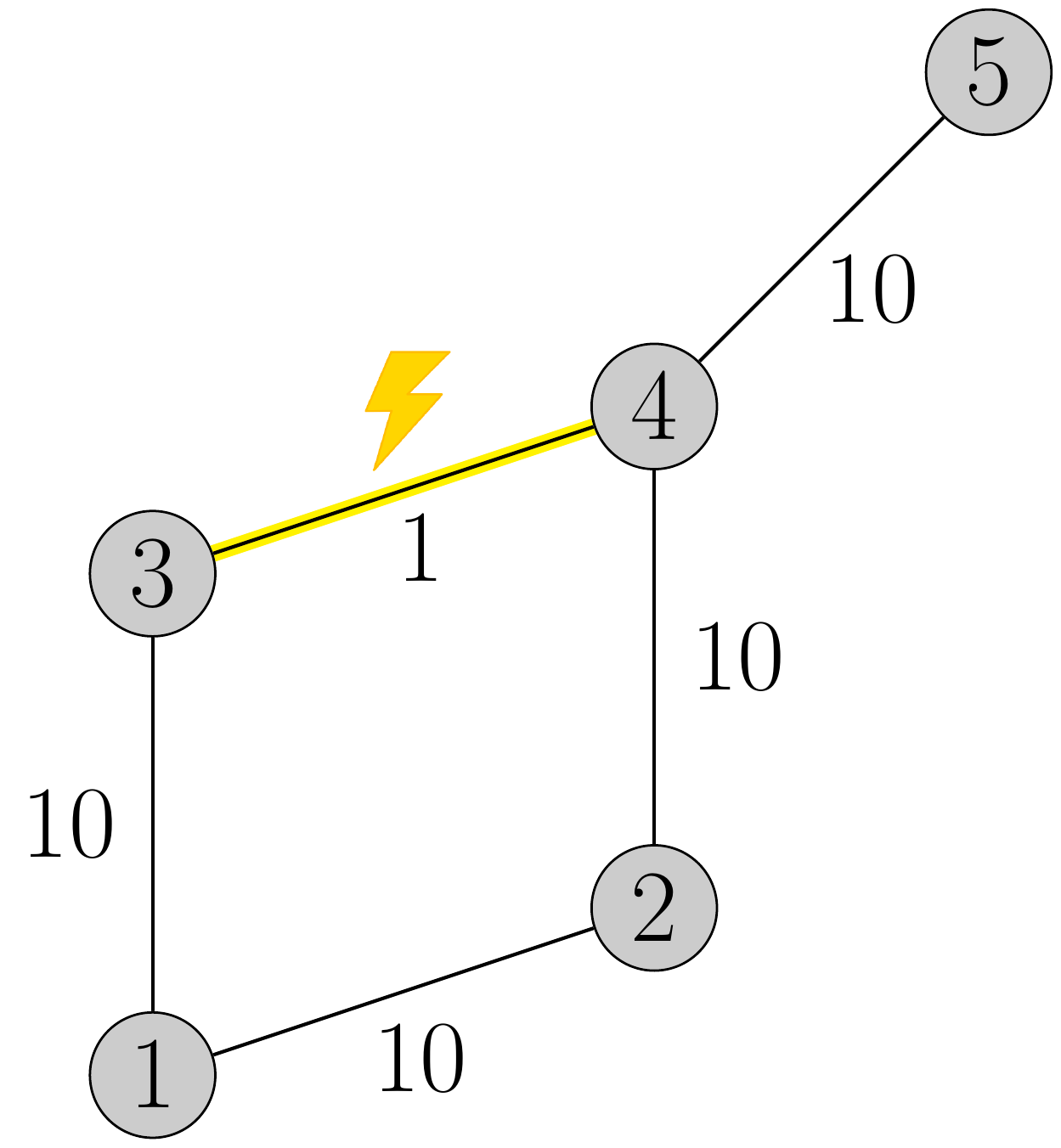}\label{fig:graph}}\hspace{1cm}
	\subfigure[]{\includegraphics[height=0.18\textwidth, trim={5.8cm 0cm 10cm 1cm},clip]{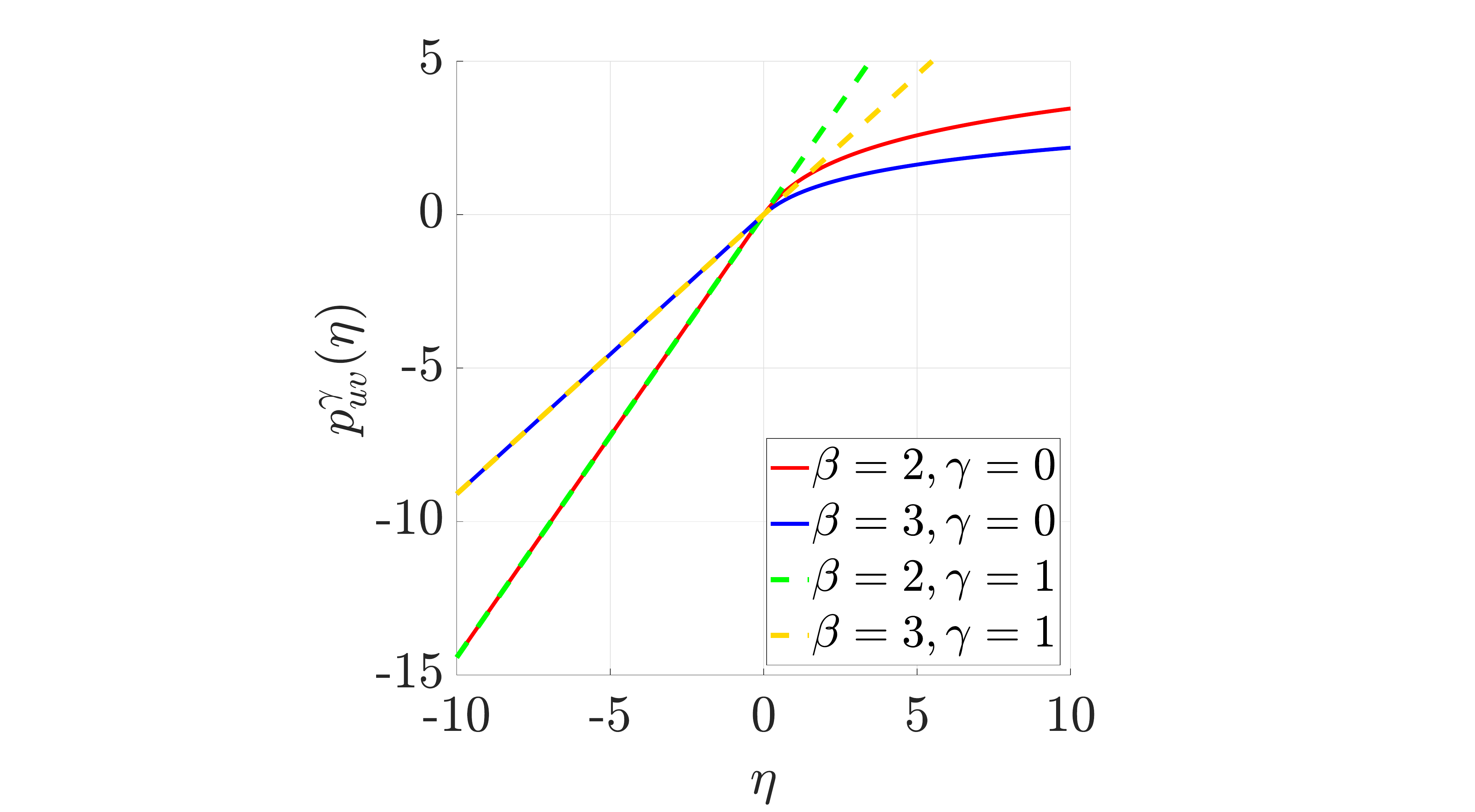}\label{fig:p_rs_zest}}\hspace{1cm}
	\subfigure[$p_{uv}^{0}$, $\delta_{uv}^{\theta} = -4.7$]{\includegraphics[height=0.18\textwidth, trim={9cm 0cm 10cm 1cm},clip]{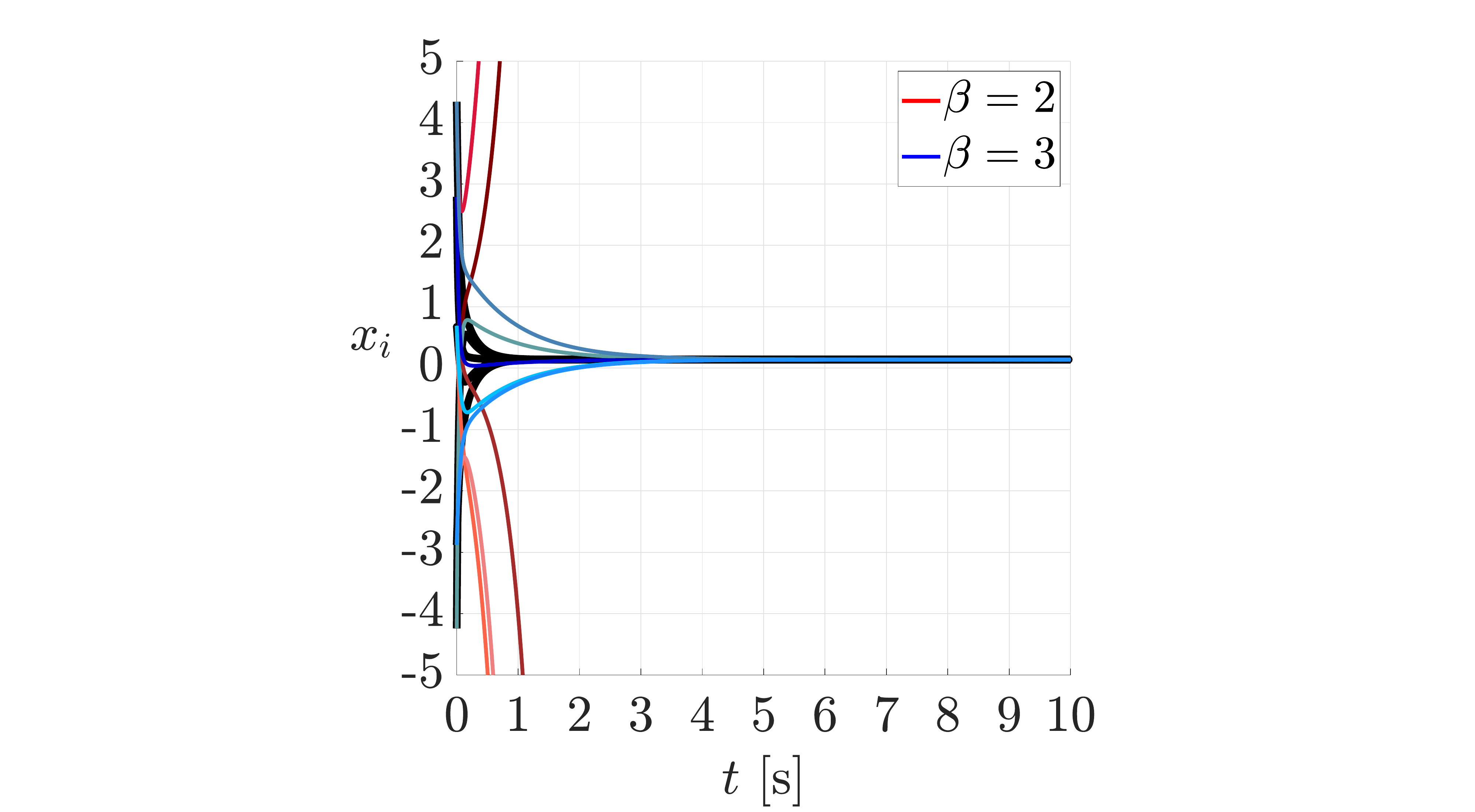}\label{fig:robust_cons_zest}}\hspace{1cm}
	\subfigure[$p_{uv}^{1}$, $\delta_{uv}^{\theta} \!\simeq\! -3.0036$]{\includegraphics[height=0.18\textwidth, trim={9cm 0cm 10cm 1cm},clip]{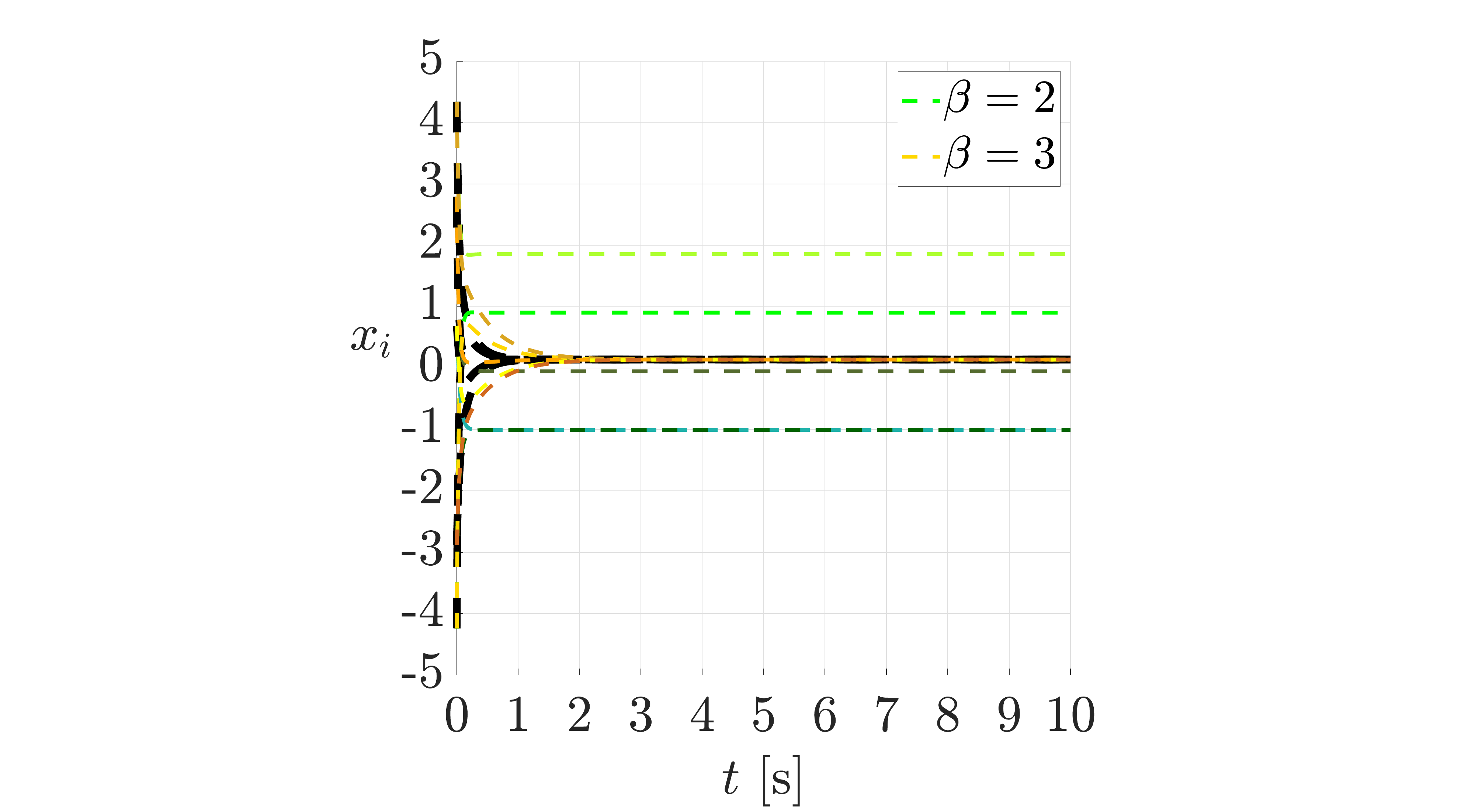}\label{fig:clust_cons_zest}}
	\caption{(a) Considered network topology and attack on edge $(u,v)=(3,4)$; (b) Decoding function in \eqref{eq:p_rs}, Lipschitz constants $K^{(\beta)}_{uv} = 1/\ln(\beta)$, $\beta = 2,3$, are highlighted (dashed lines); (c-d) Agent dynamics as objective coding and perturbation vary.}
	\label{fig:simulations}
\end{figure*}


\section{Numerical simulations} \label{sec:simulations}
Few numerical simulations are here provided to validate and motivate the theoretical results debated so far.

\subsection{Continuous-time example}
We now briefly report on a numerical simulation illustrating the main results of this work, within continuous-time framework presented in Secs. \ref{sec:sbdconsensus}-\ref{sec:robust-analysis}. Fig. \ref{fig:graph} shows the network topology analyzed. States $x_{i}$, with $i = 1,\ldots,n$, are assumed to be in $\Rset{}$, namely $D=1$. We suppose that a constant attack $\delta_{uv}^{\theta}$ strikes subcodeword $\theta_{uv}$ corresponding to the edge with the lowest weight,\footnote{In other words, the attacker attempts to cut down the link with highest network resistance.} i.e., $(u,v)=(3,4)$. The decoding functions for this edge, depicted in Fig. \ref{fig:p_rs_zest}, are chosen as
\begin{equation}\label{eq:p_rs}
	p^{0}_{uv}(\eta) = \begin{cases}
			\log_{\beta}(1+\eta),   & \eta \geq 0; \\
			\eta/\ln(\beta), & \eta < 0 ;
	\end{cases} \quad p^{1}_{uv}(\eta) = \dfrac{\eta}{\ln(\beta)}; 
\end{equation}
and are designed to return $w_{uv}=1$ for the  expected codeword input $\theta$ (i.e., $p_{uv}^\gamma (\theta) = w_{uv}$ for $\gamma=0,1$).
Moreover, in this setup, we adopt decoding functions $p_{ij}$ defined over the entire real set for sake of simplicity. Further generalizations may be implemented, as already suggested, by accounting for perturbed subcodewords $ (\theta_{ij}+\delta^{\theta}_{ij})$ falling outside the decoding function domains $\Theta_{ij}$ and declaring them invalid. Once received, these can then be used as alert to signal a certain ongoing threat. 

According to \eqref{eq:sgrepatt}, the maximum allowed perturbation in magnitude is yielded by $\rho_{uv}^{\theta} \simeq 3.0036$, for $\beta = 2$, and $\rho_{uv}^{\theta} \simeq 4.7607$, for $\beta =3$. In Fig. \ref{fig:robust_cons_zest}, it is possible to see that agreement takes place -- by virtue of Thm. \ref{thm:maximum_perturb_codeword} -- only for $\beta = 3$ and $p_{uv}^{0}$, if $\delta_{uv}^{\theta}=-4.7$. Here, black curves denote free-attack consensus trajectories ($\delta_{uv}^{\theta}=0$). It is worth to note that this attack leads to a negative perturbed weight on edge $(u,v)$ for both $\beta = 2,3$; indeed, to obtain $p_{uv}^{0}(\theta_{uv}) = w_{uv} = 1$, it is required for the network manager to send $\theta_{uv} = \beta-1$, implying that $p^{0}_{uv}(\theta_{uv}+\delta_{uv}^{\theta}) < p^{0}_{uv}(\beta-3) = (\beta-3)/\ln(\beta) \leq 0$.
The latter simulation also highlights the tradeoff in Prop. \ref{fact:tradeoff} between encryption capability of $p_{uv}^{0}$ and $p_{uv}^{1}$, in terms of Lipschitz constant $K_{uv}^{(\beta)}$, and the robustness achieved w.r.t. edge $(u,v)$. Indeed, on one hand, it is immediate to realize that $K_{uv}^{(2)} = 1/\ln(2) > K_{uv}^{(3)} = 1/\ln(3)$ implies that $p_{uv}^{0}$, $\beta = 2$, reaches a wider range of values compared to $p_{uv}^{0}$, $\beta = 3$ -- given the same interval $U_{uv}^{\theta}$ -- thus leading to higher encryption performances. On the other hand, it is worth to notice that, in case of $\delta_{uv}^{\theta} = -4.7$, for $\beta = 2$ the network does not even attain consensus but the opposite occurs if $\beta = 3$.
Furthermore, for $p_{uv}^{1}$, Prop. \ref{prop:tradeoff} applies and the effects of tradeoff in Prop. \ref{fact:tradeoff} become strict (see Fig. \ref{fig:clust_cons_zest}; still, black curves denote free-attack consensus trajectories). Indeed, for $\delta_{uv}^{\theta}=-\rho_{uv}^{\theta}$, the well-known clustered consensus phenomenon arises for $\beta = 2$, since the corresponding stability margin is nullified.
Lastly, it is also worth to observe that, for both $p_{uv}^{0}$ and $p_{uv}^{1}$, agent trajectories for $\beta = 3$ have faster convergence rate w.r.t. those for $\beta = 2$, justifying the possibility for a diverse edge weight choice by the network manager.

\subsection{Discrete-time example on opinion dynamics}
In this last paragraph, we provide a numerical example based on the opinion dynamics work proposed in \cite{MorarescuGirard2011}. We consider the uniformly weighted opinion network $\Gmc_{\alpha} = (\Vmc,\Emc,\{\alpha\}_{k=1}^{m})$, with $\alpha \in \mathcal{Q}_{\alpha} = (0,1/2)$, such that $(\Vmc,\Emc)$ describes the same topology in Fig. \ref{fig:graph}. Assuming $t \in \mathbb{N}$, let us also define the time-varying $i$-th \textit{opinion neighborhood} as
\begin{equation*}
	\Nmc_{i}(t) = \{j \in \Vmc \;|\; ((i,j)\in \Emc) \wedge (|x_{i}(t)-x_{j}(t)| \leq \Gamma \upsilon^{t})\},
\end{equation*}
where $\Gamma > 0$ and $\upsilon \in (0,1)$ are given.
Each agent $i \in \{1,\ldots,n\}$ in the opinion network is then assigned with the perturbed discrete-time opinion dynamics
\begin{align}\label{eq:opdyn_sys}
	& x_{i}(t+1) =  \\ 
	& \begin{cases}
		x_{i}(t),&\text{ if } \Nmc_{i}(t) = \emptyset ;\\
		x_{i}(t) 
		- \frac{1}{|\Nmc_{i}(t)|} \sum\limits_{j \in \Nmc_{i}(t)} w^{\delta^{\theta}}_{ij} (x_{i}(t)-x_{j}(t)), &\text{ otherwise}; \nonumber\\
	\end{cases}
\end{align}
where $x_{i}(t) \in \mathbb{R}$ and each $w^{\delta^{\theta}}_{ij} = p_{ij}(\theta_{ij} + \delta_{ij}^{\theta})$ represents the perturbed decoded value, with $p_{ij}(\theta_{ij}) = \alpha/\ln(2)$, $\forall (i,j) \in \Emc$. Despite \eqref{eq:opdyn_sys} does not possess the exact same form of protocol \eqref{eq:perturbedSBDCdt}, it is possible to provide a brief analysis of its behavior when certain setups are fixed. Indeed, term $\epsilon(t) := |\Nmc_{i}(t)|^{-1}$ can be seen as a time-varying version of $\epsilon$, upper bounded by $\overline{\epsilon} = 1$. Since the maximum attainable node degree $d_{M} = \max_{i\in \{1,\ldots,n\}} |\Nmc_{i}(t)|$ in $\Gmc_{\alpha}$ over time is $d_{M} = 3$, one has $\Psi_{\Gmc_{\alpha}} = d_{M} \alpha = 3 \alpha $ and, according to \eqref{eq:sgrepatt_dt}, inequality $\Psi_{\Gmc_{\alpha}} < \epsilon(t)^{-1}$ can be reduced to $\Psi_{\Gmc_{\alpha}} < \overline{\epsilon}^{-1}$, yielding the design constraint $\alpha \in  (0,1/3) \subset \mathcal{Q}_{\alpha}$. Assuming, once again, that edge $(3,4)$ is subject to an attack $\delta_{34}^{\theta}$, parameter $\alpha$ can be selected to maximize the r.h.s. of guarantee \eqref{eq:sgrepatt_dt_simple}, by imposing $1-3\alpha = 4\alpha/3$ and obtaining $\alpha = 3/13 \in (0,1/3)$.

Fig. \ref{fig:opdyn} shows the trajectories of opinion dynamics \eqref{eq:opdyn_sys} once initialized with $\Gamma = 10$, $\upsilon = 1-0.2 \alpha = 0.9538$ and $x(0) = \begin{bmatrix}
	-3.2 & -1 & 3.3 & 3 & -4.3
\end{bmatrix}^{\top}$. Remarkably, within this setup, guarantee \eqref{eq:sgrepatt_dt_simple} is not conservative w.r.t. \eqref{eq:sgrepatt_dt}, since each deconding function has the same Lipschitz constant and edge $(3,4)$ is incident to node $4$, which has the highest degree $d_{M}$. This evidence and the fact that the topology under analysis is bipartite and uniformly weighted imply that inequality \eqref{eq:sgrepatt_dt_simple} may yield a sharp guarantee for the robust consensus through certain choices of $\Gamma$ and $\upsilon$. Indeed, this is the case for simulations in Fig. \ref{fig:opdyn}, in which it is possible to appreciate that for $\delta_{34}^{\theta} = 0$ the system nominally converges to consensus (green lines), forming one community, i.e. $\Vmc$; while for $\delta_{34}^{\theta} = -\xi_{uv}^{\theta} = -0.21328$ clustered consensus phenomena arise for $t \leq 70~s$ (red lines). Afterwards, for $t > 70 ~s$, the five separated communities $\{1\}$, $\{2\}$, $\{3\}$, $\{4\}$, $\{5\}$ merge because of the nonlinearities in the opinion dynamics \eqref{eq:opdyn_sys}. Finally, it is also worth to observe that, if $\delta_{34}^{\theta} = -6 \xi_{uv}^{\theta} = -1.2797$, the attack asymptotically prevents consensus to be achieved (blue lines), causing the permanent split into a couple of diverse communities, i.e. those constituted by nodes $\{1,2,4,5\}$ and $\{3\}$, as information exchange stops flowing through edges $(1,3)$ and $(3,4)$. In other words, the latter attack manages to isolate node $3$ from the original opinion network, leading to a completely different scenario w.r.t. to the nominal, as $t \rightarrow \infty$.

\begin{figure}[t!]
	\centering
	\includegraphics[scale=0.105]{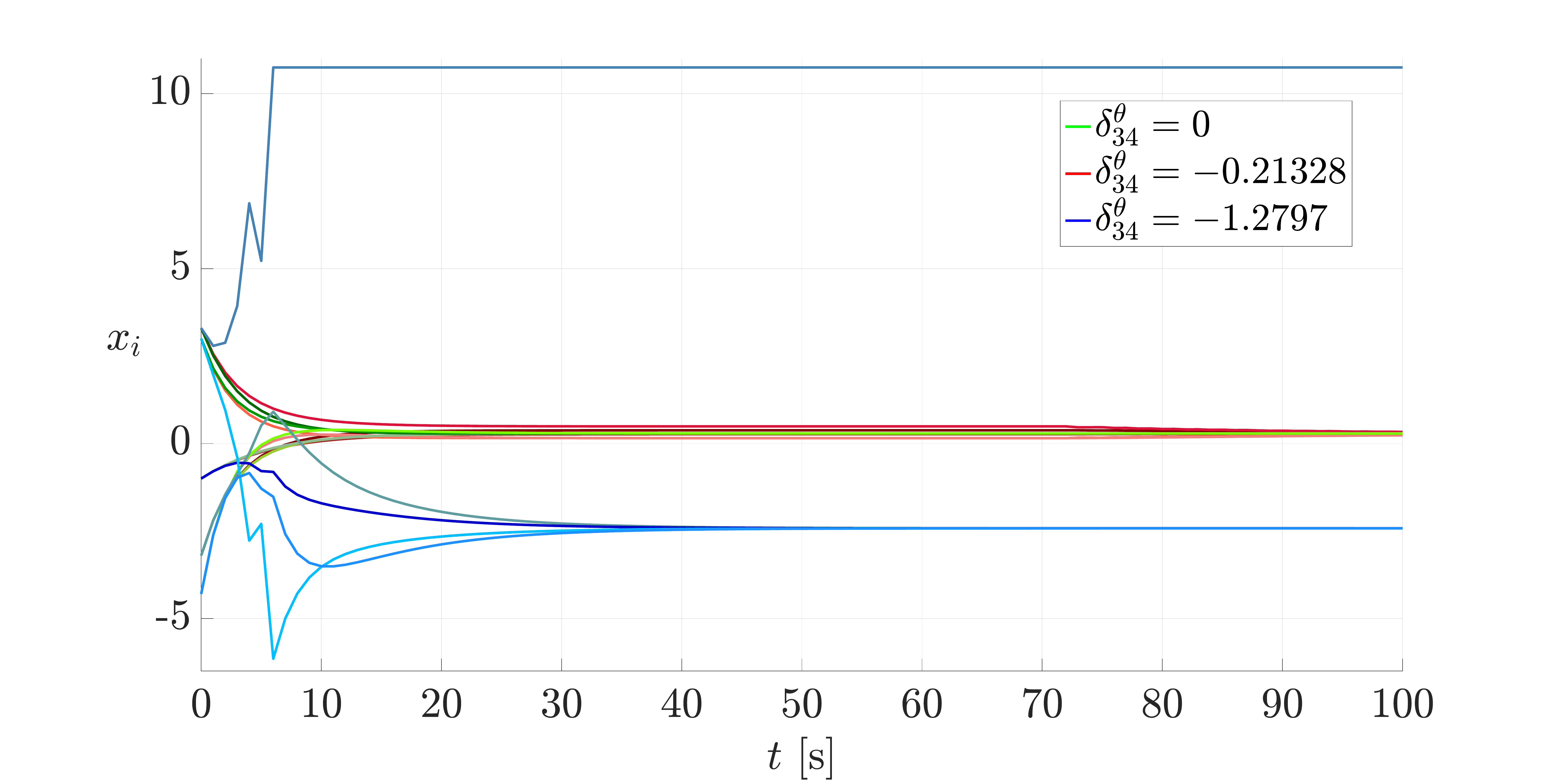}
	\caption{Results obtained simulating system \eqref{eq:opdyn_sys} subject to different perturbations on edge $(3,4)$ in $\Gmc_{\alpha}$, with $\alpha = 3/13$.
	}
	\label{fig:opdyn}
\end{figure}

\section{Conclusions and future directions} \label{sec:conclusions}
This paper devises novel methods to secure consensus networks both in the continuous and discrete time domains, 
providing small-gain-theorem-based stability guarantees and a deep insight on a tradeoff between information hiding and robust stability.
Future works will involve extensions towards other multiagent protocols, such as distance-based formation control, and leader-follower or multi-attack scenarios. The security and estimation accuracy improvement of filtering algorithms within multisensor networks is also envisaged. 

\appendix \label{appendix}
\begin{proof}[Proof of Prop. \ref{Prop:eigM}]
	From the eigenvalue equation $M \omega = \lambda \omega$ in the unknowns $\lambda \in \mathbb{C}$ and $\omega = \begin{bmatrix}
		\omega_{1}^{\top} & \omega_{2}^{\top}
	\end{bmatrix}^{\top}$, with $\omega_{1},\omega_{2} \in \mathbb{C}^{n}$, one obtains the system of equations
	\begin{equation}\label{eq:syseqeigM}
		\begin{cases}
			(K_{P}L+\alpha I_{n}) \omega_{1} - K_{I} L \omega_{2} = \lambda \omega_{1} \\
			K_{I} L\omega_{1} = \lambda \omega_{2}
		\end{cases}.
	\end{equation}
	Note that if $\lambda = 0$, $\omega_{1} = \zerovec{n}$ and $\omega_{2} \in \left\langle \ones_{n} \right\rangle $ then \eqref{eq:syseqeigM} holds true.
	The second equation in \eqref{eq:syseqeigM} suggests that relation
	\begin{equation}\label{eq:eigrelationM}
		(s K_{I} \mu , \omega_{1}) = \left(\lambda, s\omega_{2}\right), \text{for some } s \in \mathbb{C},
	\end{equation}
	characterizes all the eigenpairs $(\mu,  \omega_{*} ) \in (\mathbb{R}_{\geq0},\left\langle \omega_{2} \right\rangle)$ associated to the Laplacian $L$, except for some of the configurations described by $\mu = 0$ or $\omega_{2} = \zerovec{n}$. 
	Substituting \eqref{eq:eigrelationM} into the first equation of \eqref{eq:syseqeigM} multiplied by $s$ at both sides one obtains the second order algebraic equation in the unknown $s$,
	\begin{equation}\label{eq:2oeqeigs}
		(K_{I} \mu s^{2} -(\alpha + K_{P}\mu) s + K_{I} \mu) \omega_{1} = \zerovec{n}. 
	\end{equation}
	If $\omega_{1} = \zerovec{n}$, the only acceptable value of $s$ complying with relation \eqref{eq:eigrelationM}, as $\omega_{2} \neq \zerovec{n}$ in general, 
	is given by $s^{\star} = 0$ with single algebraic multiplicity, since this result is derived from $\omega_{1} = s \omega_{2}$. 
	Otherwise, if $\omega_{1} \neq \zerovec{n}$ and $ \mu \neq 0$, the solutions are now given by $s=s_{\pm}$, where
	\begin{equation}\label{eq:allsinfuncofmu}
		s_{\pm} = \dfrac{\alpha+K_{P}\mu \pm \sqrt{(\alpha+K_{P}\mu)^{2}-4(K_{I}\mu)^{2}}}{2K_{I}\mu}.
	\end{equation}
	Also, if $\mu = 0$, a trivial solution is, again, $s^{\star} = 0$ with single algebraic multiplicity, by solving $\alpha s = 0$.
	Finally, substituting \eqref{eq:allsinfuncofmu} into relation $\lambda = sK_{I}\mu$, it follows that the eigenvalues of $M$ are given by \eqref{eq:eigsofMinfuncofeigsofL}-\eqref{eq:eigsofMinfuncofeigsofL_particular}.
	In particular, the evaluation at $i=1$ for both $j=1,2$ in \eqref{eq:eigsofMinfuncofeigsofL} requires $\lambda_{1}^{L} = 0$, i.e. involving case $\mu = 0$. The arithmetic extension of \eqref{eq:eigsofMinfuncofeigsofL}-\eqref{eq:eigsofMinfuncofeigsofL_particular} to this peculiar instance is obtained as follows.
	Case $i=1$ and $j=1$ is trivial, as $\lambda_{1}^{M} = 0$ occurs for $s^{\star} = 0$ in \eqref{eq:2oeqeigs}, if $\mu = 0$ or $\omega_{1} = \zerovec{n}$. 
	Case $i=1$ and $j=2$, corresponding to $\lambda_{2}^{M} = \alpha $, can be proven by exclusion (it is the only eigenvalue that relation \eqref{eq:eigrelationM} cannot describe) 
	and inspection. Indeed, 
	by selecting $\lambda = \lambda_{2}^{M}$, $\omega_{1} \in \left\langle \ones_{n} \right\rangle $, $\omega_{2} = \zerovec{n}  $ so that system \eqref{eq:syseqeigM} holds true. 
	
	The final part of the statement in the proposition is proven as follows. Firstly, recall that $\lambda_{1}^{M} = 0$ and $\lambda_{2}^{M} =  \alpha> 0$. Secondly, relation $\Re[\lambda_{l}^{M}] > 0$ for $l = 3,\ldots,2n$ is a consequence of the fact that if $\sigma_{i}$ is purely imaginary then the thesis is guaranteed to hold, as $\varphi_{i} > 0$, $\forall i = 2,\ldots,n$; otherwise, solving $\Re[\lambda_{l}^{M}] > 0$ for any $l \in \{ 3,\ldots,2n \}$, whenever $\sigma_{i} \in \mathbb{R}$, leads to the tautology $\lambda_{i}^{L} > 0$ for the corresponding $i \in \{ 2,\ldots,n \}$.
\end{proof}


%

\ifCLASSOPTIONcaptionsoff
  \newpage
\fi

\bibliographystyle{IEEEtran}
\bibliography{biblio}

\begin{thebibliography}{10}
\providecommand{\url}[1]{#1}
\csname url@samestyle\endcsname
\providecommand{\newblock}{\relax}
\providecommand{\bibinfo}[2]{#2}
\providecommand{\BIBentrySTDinterwordspacing}{\spaceskip=0pt\relax}
\providecommand{\BIBentryALTinterwordstretchfactor}{4}
\providecommand{\BIBentryALTinterwordspacing}{\spaceskip=\fontdimen2\font plus
\BIBentryALTinterwordstretchfactor\fontdimen3\font minus
  \fontdimen4\font\relax}
\providecommand{\BIBforeignlanguage}[2]{{%
\expandafter\ifx\csname l@#1\endcsname\relax
\typeout{** WARNING: IEEEtran.bst: No hyphenation pattern has been}%
\typeout{** loaded for the language `#1'. Using the pattern for}%
\typeout{** the default language instead.}%
\else
\language=\csname l@#1\endcsname
\fi
#2}}
\providecommand{\BIBdecl}{\relax}
\BIBdecl

\bibitem{Mahmoud2020}
M.~S. Mahmoud, \emph{Multi-Agent Systems: Introduction and Coordination
  Control}.\hskip 1em plus 0.5em minus 0.4em\relax CRC Press, 2020.

\bibitem{SavaglioGanzhaPaprzycki2020}
C.~Savaglio, M.~Ganzha, M.~Paprzycki, C.~B\u{a}dic\u{a}, M.~Ivanovi\'{c}, and
  G.~Fortino, ``Agent-based internet of things: State-of-the-art and research
  challenges,'' \emph{Future Generation Computer Systems}, vol. 102, pp. 1038
  -- 1053, 2020.

\bibitem{KouicemRaievskyOccello2020}
E.~Kouicem, C.~Raievsky, and M.~Occello, ``{Towards a Cyber-physical Systems
  Resilience Approach based on Artificial Emotions and Multi-agent Systems},''
  in \emph{{The 12th ICAART 2020}}, no.~1, Valetta, Malta, Feb. 2020, pp.
  327--334.

\bibitem{OlfatiSaberFaxMurray2007}
R.~{Olfati-Saber}, J.~A. {Fax}, and R.~M. {Murray}, ``Consensus and cooperation
  in networked multi-agent systems,'' \emph{Proceedings of the IEEE}, vol.~95,
  no.~1, pp. 215--233, 2007.

\bibitem{WolfSerpanos2018}
M.~{Wolf} and D.~{Serpanos}, ``Safety and security in cyber-physical systems
  and internet-of-things systems,'' \emph{Proceedings of the IEEE}, vol. 106,
  no.~1, pp. 9--20, 2018.

\bibitem{ZhihanDongliangRanarn2021}
Z.~Lv, D.~Chen, R.~Lou, and A.~Alazab, ``Artificial intelligence for securing
  industrial-based cyber–physical systems,'' \emph{Future Generation Computer
  Systems}, vol. 117, pp. 291--298, 2021.

\bibitem{SchulzeDarupAlexandruQuevedo2021}
M.~Schulze~Darup, A.~B. Alexandru, D.~E. Quevedo, and G.~J. Pappas, ``Encrypted
  control for networked systems: An illustrative introduction and current
  challenges,'' \emph{IEEE Control Systems Magazine}, vol.~41, no.~3, pp.
  58--78, 2021.

\bibitem{DibajiIshii2015}
S.~M. Dibaji and H.~Ishii, ``Consensus of second-order multi-agent systems in
  the presence of locally bounded faults,'' \emph{Systems \& Control Letters},
  vol.~79, pp. 23 -- 29, 2015.

\bibitem{MustafaModaresMoghadam2020}
A.~Mustafa, H.~Modares, and R.~Moghadam, ``Resilient synchronization of
  distributed multi-agent systems under attacks,'' \emph{Automatica}, vol. 115,
  p. 108869, 2020.

\bibitem{YanAntsaklisGupta2017}
Y.~{Yan}, P.~{Antsaklis}, and V.~{Gupta}, ``A resilient design for cyber
  physical systems under attack,'' in \emph{2017 ACC}, 2017, pp. 4418--4423.

\bibitem{ZelazoBurger2017}
D.~{Zelazo} and M.~{B\"{u}rger}, ``On the robustness of uncertain consensus
  networks,'' \emph{IEEE Transactions on Control of Network Systems}, vol.~4,
  no.~2, pp. 170--178, 2017.

\bibitem{WeerakkodyLiuSon2017}
S.~{Weerakkody}, X.~{Liu}, S.~H. {Son}, and B.~{Sinopoli}, ``A graph-theoretic
  characterization of perfect attackability for secure design of distributed
  control systems,'' \emph{IEEE Transactions on Control of Network Systems},
  vol.~4, no.~1, pp. 60--70, 2017.

\bibitem{WuXuNing2020}
Y.~Wu, M.~Xu, N.~Zheng, and X.~He, ``Event-triggered resilient consensus for
  multi-agent networks under deception attacks,'' \emph{IEEE Access}, vol.~8,
  pp. 78\,121--78\,129, 2020.

\bibitem{LeBlancHeathZhang2013}
H.~J. LeBlanc, H.~Zhang, X.~Koutsoukos, and S.~Sundaram, ``Resilient asymptotic
  consensus in robust networks,'' \emph{IEEE Journal on Selected Areas in
  Communications}, vol.~31, no.~4, pp. 766--781, 2013.

\bibitem{TrentelmanTakabaMonshizadeh2013}
H.~L. {Trentelman}, K.~{Takaba}, and N.~{Monshizadeh}, ``Robust synchronization
  of uncertain linear multi-agent systems,'' \emph{IEEE Transactions on
  Automatic Control}, vol.~58, no.~6, pp. 1511--1523, 2013.

\bibitem{YassaieHallajiyanSharifi2021}
N.~Yassaie, M.~Hallajiyan, I.~Sharifi, and H.~Talebi, ``Resilient control of
  multi-microgrids against false data injection attack,'' \emph{ISA
  Transactions}, vol. 110, pp. 238--246, 2021.

\bibitem{DuWangDong2021}
S.~Du, Y.~Wang, L.~Dong, and X.~Li, ``Secure consensus of multiagent systems
  with dos attacks via a graph-based approach,'' \emph{Information Sciences},
  vol. 570, pp. 94--104, 2021.

\bibitem{ZuoCaoWang2021}
Z.~Zuo, X.~Cao, Y.~Wang, and W.~Zhang, ``Resilient consensus of multiagent
  systems against denial-of-service attacks,'' \emph{IEEE Transactions on
  Systems, Man, and Cybernetics: Systems}, pp. 1--12, 2021.

\bibitem{FengXieWang2021}
J.~Feng, J.~Xie, J.~Wang, and Y.~Zhao, ``Secure synchronization of stochastic
  complex networks subject to deception attack with nonidentical nodes and
  internal disturbance,'' \emph{Information Sciences}, vol. 547, pp. 514--525,
  2021.

\bibitem{GaoWangHe2021}
C.~Gao, Z.~Wang, X.~He, and H.~Dong, ``Encryption–decryption-based consensus
  control for multi-agent systems: Handling actuator faults,''
  \emph{Automatica}, vol. 134, p. 109908, 2021.

\bibitem{RuanGaoWang2019}
M.~Ruan, H.~Gao, and Y.~Wang, ``Secure and privacy-preserving consensus,''
  \emph{IEEE Transactions on Automatic Control}, vol.~64, no.~10, pp.
  4035--4049, 2019.

\bibitem{MoMurray2017}
Y.~Mo and R.~M. Murray, ``Privacy preserving average consensus,'' \emph{IEEE
  Transactions on Automatic Control}, vol.~62, no.~2, pp. 753--765, 2017.

\bibitem{WangPrivacyPreserving2019}
Y.~Wang, ``Privacy-preserving average consensus via state decomposition,''
  \emph{IEEE Transactions on Automatic Control}, vol.~64, no.~11, pp.
  4711--4716, 2019.

\bibitem{AltafiniPrivacyConsensus2019}
C.~Altafini, ``A dynamical approach to privacy preserving average consensus,''
  in \emph{2019 IEEE 58th Conference on Decision and Control (CDC)}, 2019, pp.
  4501--4506.

\bibitem{Kreitz2019}
M.~Kreitz, ``Security by design in software engineering,'' \emph{SIGSOFT Softw.
  Eng. Notes}, vol.~44, no.~3, p.~23, Nov. 2019.

\bibitem{MesbahiEgerstedt2010}
M.~Mesbahi and M.~Egerstedt, \emph{Graph Theoretic Methods in Multiagent
  Networks}.\hskip 1em plus 0.5em minus 0.4em\relax Princeton: Princeton
  University Press, 2010.

\bibitem{KleinRandic1993}
D.~J. Klein and M.~Randi{\'c}, ``Resistance distance,'' \emph{Journal of
  mathematical chemistry}, vol.~12, no.~1, pp. 81--95, 1993.

\bibitem{YangFreemanGordon2010}
P.~Yang, R.~Freeman, G.~Gordon, K.~Lynch, S.~Srinivasa, and R.~Sukthankar,
  ``Decentralized estimation and control of graph connectivity for mobile
  sensor networks,'' \emph{Automatica}, vol.~46, no.~2, pp. 390--396, 2010.

\bibitem{Zuazua2014}
E.~Zuazua, ``Averaged control,'' \emph{Automatica}, vol.~50, no.~12, pp.
  3077--3087, 2014.

\bibitem{BoydDiaconisXiao2004}
S.~Boyd, P.~Diaconis, and L.~Xiao, ``Fastest mixing markov chain on a graph,''
  \emph{SIAM Review}, vol.~46, no.~4, pp. 667--689, 2004.

\bibitem{SunBoydXia2006}
J.~Sun, S.~Boyd, L.~Xiao, and P.~Diaconis, ``The fastest mixing markov process
  on a graph and a connection to a maximum variance unfolding problem,''
  \emph{SIAM Review}, vol.~48, no.~4, pp. 681--699, 2006.

\bibitem{ZelazoFranchiBulthoff2015}
D.~Zelazo, A.~Franchi, H.~H. B\"{u}lthoff, and P.~R. Giordano, ``Decentralized
  rigidity maintenance control with range measurements for multi-robot
  systems,'' \emph{The International Journal of Robotics Research}, vol.~34,
  no.~1, pp. 105--128, 2015.

\bibitem{FreemanYangLynch2006}
R.~A. Freeman, P.~Yang, and K.~M. Lynch, ``Stability and convergence properties
  of dynamic average consensus estimators,'' in \emph{Proceedings of the 45th
  IEEE Conference on Decision and Control}, 2006, pp. 338--343.

\bibitem{FabrisMichielettoCenedese2022}
M.~Fabris, G.~Michieletto, and A.~Cenedese, ``A general regularized distributed
  solution for system state estimation from relative measurements,'' \emph{IEEE
  Control Systems Letters}, vol.~6, pp. 1580--1585, 2022.

\bibitem{Chung1997}
F.~R.~K. Chung, \emph{Spectral Graph Theory (CBMS Regional Conference Series in
  Mathematics, No. 92)}.\hskip 1em plus 0.5em minus 0.4em\relax American
  Mathematical Society, 1997.

\bibitem{Bell1965}
H.~E. Bell, ``Gershgorin's theorem and the zeros of polynomials,'' \emph{The
  American Mathematical Monthly}, vol.~72, no.~3, pp. 292--295, 1965.

\bibitem{MorarescuGirard2011}
I.-C. Morarescu and A.~Girard, ``Opinion dynamics with decaying confidence:
  Application to community detection in graphs,'' \emph{IEEE Transactions on
  Automatic Control}, vol.~56, no.~8, pp. 1862--1873, 2011.

\end{thebibliography}
\vspace{-0.7cm}
\begin{IEEEbiography}[{\includegraphics[width=1in,height=1.25in,trim={2.5cm 0 2.5cm 0},clip,keepaspectratio]{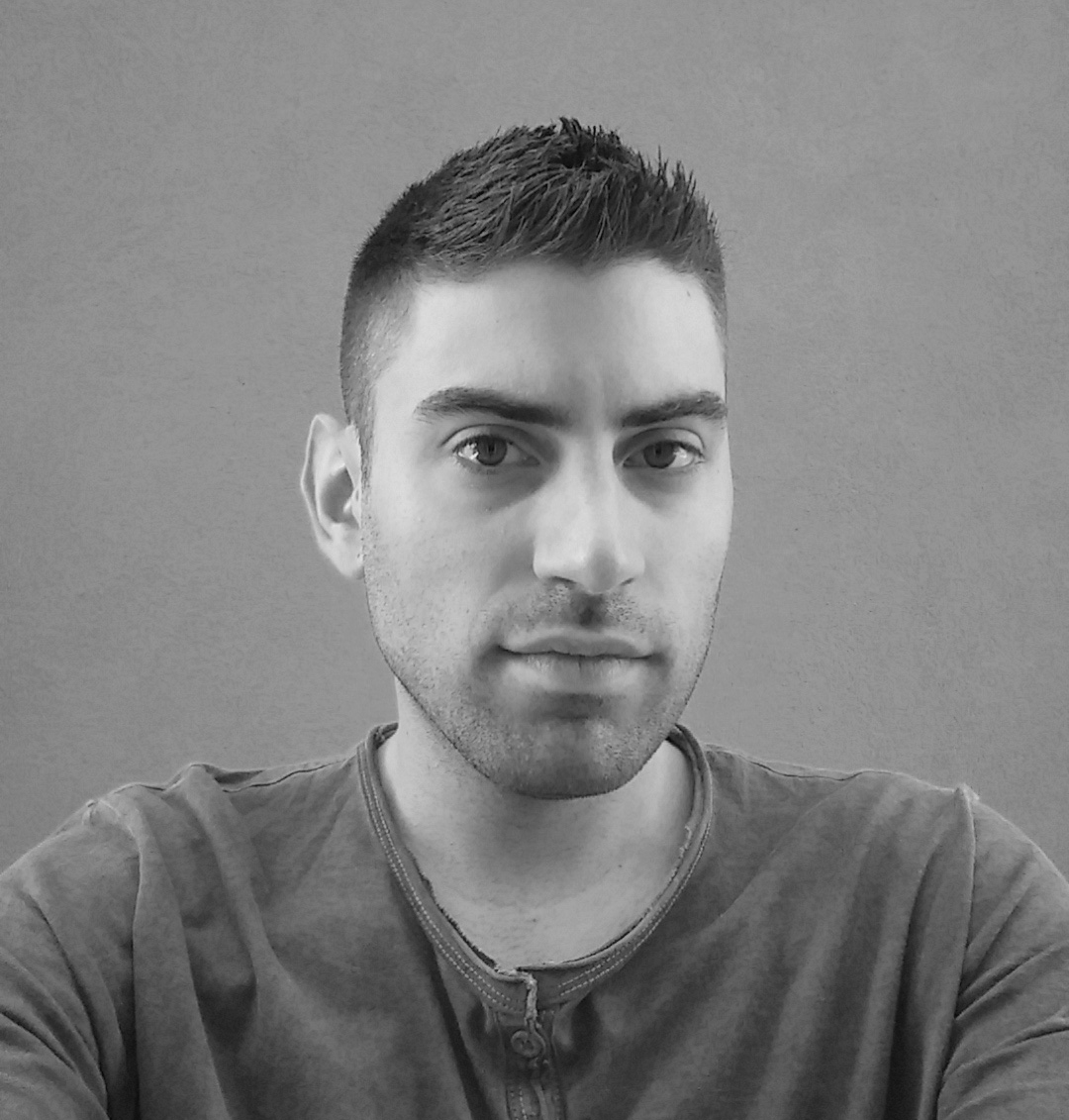}}]{Marco Fabris} 
	received the Laurea (M.Sc.) degree (summa cum laude) in automation engineering and the Ph.D. degree in information science and technology from the University of Padua, Padua, Italy, in 2016 and 2020, respectively.\\
	In 2018, he spent six months with the University of Colorado Boulder, Boulder, CO, USA, as a Visiting Scholar, focusing on distance-based formation control. He is currently working as a Postdoctoral Fellow with the Technion--Israel Institute of Technology, Haifa, Israel. His current research interests also involve graph-based consensus theory and optimal decentralized control and estimation for networked systems.	
\end{IEEEbiography}
\vspace{15cm}
\begin{IEEEbiography}[{\includegraphics[width=1in,height=1.25in,trim={2.5cm 0 2.5cm 0},clip,keepaspectratio]{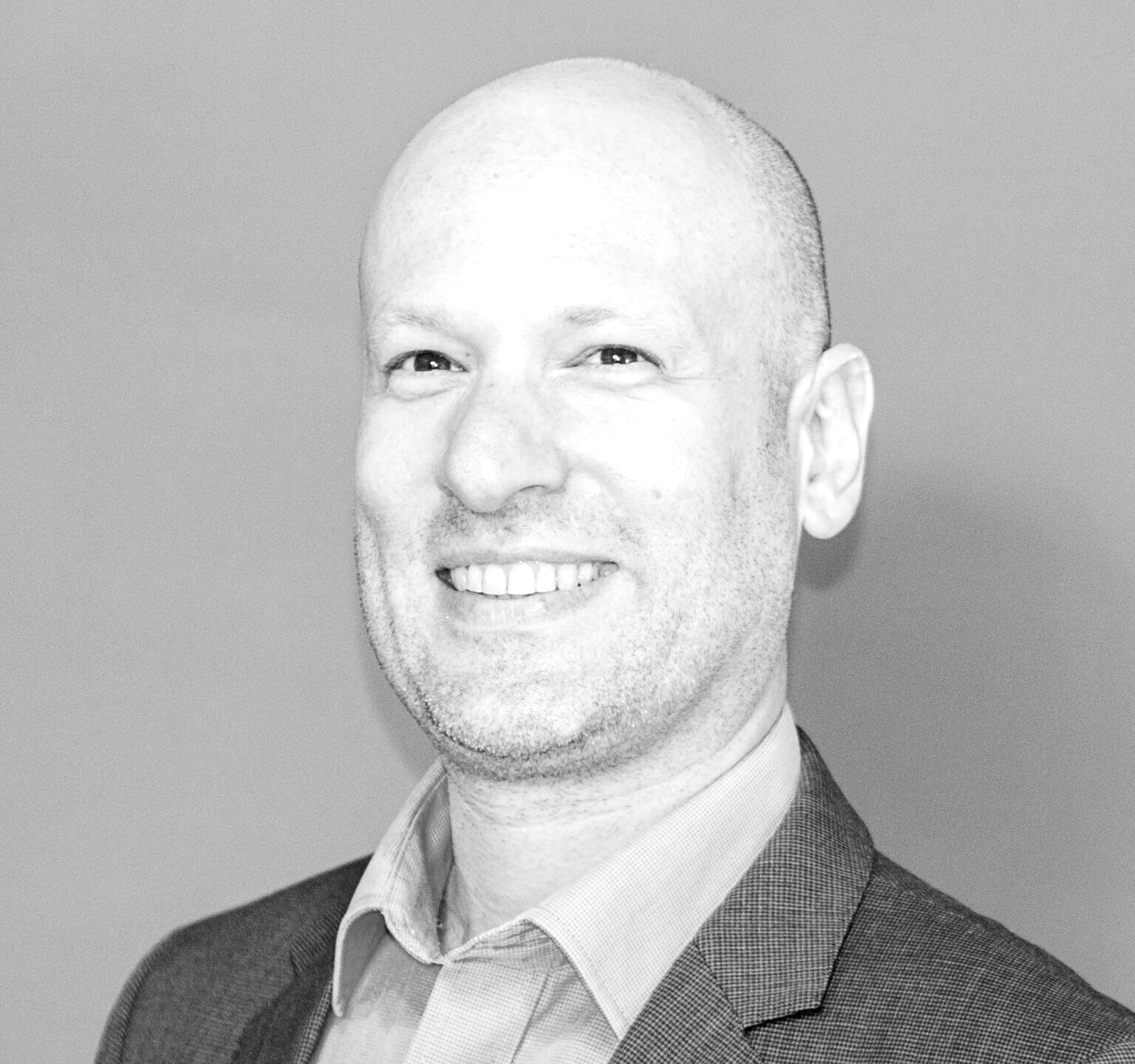}}]{Daniel Zelazo} (Senior Member, IEEE) received the B.Sc. and M.Eng. degrees in electrical engineering and computer science from the Massachusetts Institute of Technology, Cambridge, MA, USA, in 1999 and 2001, respectively, and the Ph.D. degree in aeronautics and astronautics from the University of Washington, Seattle, WA, USA, in 2009.\\
He is an Associate Professor of Aerospace Engineering and the Director of the Philadelphia Flight Control Laboratory, Technion--Israel Institute of Technology, Haifa, Israel. From 2010 to 2012, he was a Postdoctoral Research Associate and a Lecturer with the Institute for Systems Theory and Automatic Control, University of Stuttgart, Stuttgart, Germany. His research interests include topics related to multiagent systems.\\
Dr. Zelazo is currently an Associate Editor of \textsc{IEEE Control System Letters} and a Subject Editor of the International Journal of Robust and Nonlinear Control.
\end{IEEEbiography}

\end{document}